\newcommand{\vertiii}[1]{{\left\vert\kern-0.25ex\left\vert\kern-0.25ex\left\vert #1 
    \right\vert\kern-0.25ex\right\vert\kern-0.25ex\right\vert}}
\def\arXiv#1{\href{http://arxiv.org/abs/#1}{arXiv:#1}}
\newcolumntype{P}[1]{>{\centering\arraybackslash}m{#1}}
\def\?[#1]{\textbf{[#1]}\marginpar{\Large{\textbf{??}}}}
\def\smallsection#1{\smallskip\noindent\textbf{#1}.}
\let\epsilon=\varepsilon 
\newcommand{\RR}{{\mathbb R}}
\newcommand{\CC}{{\mathbb C}}
\newcommand{\ZZ}{{\mathbb Z}}
\newtheorem{theo}{Theorem}
\newtheorem{que}{Question}
\newtheorem{prop}[theo]{Proposition}
\newtheorem{defi}[theo]{Definition}
\newtheorem{assumption}[theo]{Assumption}
\newtheorem{lemm}[theo]{Lemma} 
\newtheorem{rem}[theo]{Remark}
\numberwithin{equation}{section}
\DeclareMathOperator{\Spec}{Spec}
\let\Im=\Imag
\let\Re=\Real
\DeclareMathOperator{\supp}{supp}
\DeclareMathOperator{\tr}{tr}
\def\indic{\operatorname{1\hskip-2.75pt\relax l}}
\author{Simon Becker}\thanks{S.B. is the author to whom any correspondence should be addressed}
\email{simon.becker@math.ethz.ch}
\address{ETH Zurich, 
Institute for Mathematical Research, 
Raemistrasse 101, 8092 Zurich, 
Switzerland}
\author{Izak Oltman}
\email{ioltman@berkeley.edu}
\address{University of California, Department of Mathematics, Berkeley, CA 94720, USA}
\author{Martin Vogel}
\email{vogel@math.unistra.fr}
\address{Université de Strasbourg,
Institut de Recherche Mathématique Avancée,
67084 Strasbourg Cedex, France}
\begin{document}
\title[Disordered TBG]{Magic angle (in)stability and mobility edges in disordered Chern insulators} 
\begin{abstract}
Why do experiments only observe one magic angle in twisted bilayer graphene, despite standard models like the chiral limit of the Bistritzer-MacDonald Hamiltonian predicting an infinite number? In this article, we explore the relative stability of larger magic angles compared to smaller ones. Specifically, we analyze how disorder impacts these angles as described by the Bistritzer-MacDonald Hamiltonian in the chiral limit. Changing focus, we investigate the topological and transport properties of a specific magic angle under disorder. We identify a mobility edge near the flat band energy for small disorder, showing that this mobility edge persists even when all Chern numbers are zero. This persistence is attributed to the system's $C_{2z}T$ symmetry, which enables non-trivial sublattice transport. Notably, this effect remains robust beyond the chiral limit and near perfect magic angles, aligning with experimental observations.
\end{abstract}
\maketitle
\section{Introduction}
Twisted bilayer graphene (TBG) is a highly tunable material that. As predicted by the standard Hamiltonian that describes its band structure, the Bistritzer-MacDonald Hamiltonian \cite{BM11}, TBG exhibits nearly flat bands at specific twisting angles, known as \emph{magic angles}\cite{magic}. These magic angles are of great interest due to the intriguing topological properties of the associated Bloch bundles, which help explain the quantum anomalous Hall effect in TBG, as well as other compelling many-body phenomena. This article aims to initiate the first study on the robustness of these effects in the presence of disorder.

We begin by examining the stability of the magic angles. Although the Bistritzer-MacDonald Hamiltonian predicts an infinite number of magic angles \cite{FBM}, only the largest of these angles has been experimentally observed. This raises an important question:
\begin{que}
\label{que0}
Why is the largest magic angle more robust than the smaller magic angles predicted by the chiral limit of the Bistritzer-MacDonald Hamiltonian? Can we quantify this stability?
\end{que}

The Bistritzer-MacDonald Hamiltonian, which we thoroughly introduce in Subsection \ref{subsec:chiral_limit}, is the standard effective one-particle Hamiltonian used to describe twisted bilayer graphene—a material consisting of two graphene sheets stacked and twisted relative to each other. By focusing on the chiral limit of this Hamiltonian, we identify a discrete and infinite set of magic angles where perfectly flat bands appear at zero energy, as defined in Definition \ref{defi:magic_angles}. This study provides a possible explanation for why only the largest of these magic angles has been observed experimentally. We discuss our results on Question 1 in Subsection \ref{subsec:Q1}.

In the second part of this article, we examine the effects of disorder within the chiral limit, particularly its interaction with the flat bands at a \emph{fixed magic angle}. Unlike the first part, which addresses the entire set of magic angles, this section focuses on the impact of disorder on the spectral and dynamical properties of the Hamiltonian at a specific magic angle. Here, we explore how disorder influences the topological properties of Bloch bundles, the spectral characteristics of the Hamiltonian, and the transport properties governed by the underlying Schrödinger equation when disorder is introduced at a magic angle near zero energy. This raises the key question:

\begin{que}
\label{que1}
How do the topological properties of the Bloch bundles, the spectral types of the Hamiltonian, and the transport properties of the underlying Schrödinger equation change when disorder is added to the Hamiltonian at a fixed magic angle close to zero energy?
\end{que}
We survey our results on this question in Subsection \ref{sec:Q2}.
In quantum systems, disorder-induced dynamical localization is a well-established phenomenon where spatially localized wavepackets exhibit minimal diffusion over time. Although the mechanisms behind localization are well understood, understanding its opposite, diffusive behavior in disordered systems, remains limited to specific examples \cite{An,AW,BH,GKS,Kl98,JSS}.

It is widely believed \cite{AALR} that many two-dimensional quantum systems, even under mild disorder, predominantly exhibit localization, as conjectured in Problem 2 on Simon's list of open problems for Schrödinger operators \cite{Si00}. One aim of this article is to highlight an exceptional class of materials that challenges this belief, \emph{Chern insulators}. These materials possess Bloch bundles with non-zero Chern numbers, even without external magnetic fields that typically break the time-reversal symmetry of the Hamiltonian \cite{Li21}.

In the context of disordered magic angle twisted bilayer graphene, we demonstrate that wavepackets localized near zero energy exhibit, in a suitable sense, ballistic time evolution. Our argument adapts a method by Germinet, Klein, and Schenker, who showed a form of delocalization for the Landau Hamiltonian \cite{GKS}. The physical intuition behind delocalization in a magnetic Hamiltonian is straightforward: the Landau Hamiltonian exhibits non-zero Hall conductivity at each Landau level, a topological invariant characterized by Chern numbers, that remains stable under minor disorder. The existence of spectral gaps between Landau levels prevents strong localization across the spectrum.

The flat bands in twisted bilayer graphene are somewhat analogous to Landau levels, with the key difference being that no magnetic field is involved, and the net Chern number is zero. At the first magic angle, the two flat bands at zero energy correspond to a Bloch bundle with a total Chern number of zero. However, each flat band individually gives rise to bundles with non-zero Chern numbers of ±1, enabling an anomalous quantum Hall effect when the TBG substrate is aligned with hexagonal boron nitride (hBN) \cite{Li21}. Mathematically, the effect of this alignment is modeled by adding an effective mass term to the Hamiltonian, splitting the two flat bands. Additionally, it has been shown that the flat bands are separated by a gap from the rest of the spectrum \cite{bhz2,bhz4}.

We also establish a localized regime using the multi-scale analysis developed by Germinet and Klein \cite{GK01,GK02}. The primary challenge here is accommodating a sufficiently large class of random perturbations, which requires extending the estimate on the number of eigenvalues (NE) and the Wegner estimate (W) to our matrix-valued differential operator, the Bistritzer-MacDonald Hamiltonian, which we introduce next.

\subsection{Chiral limit of Bistritzer-MacDonald Hamiltonian}
\label{subsec:chiral_limit}
In this subsection, we briefly review the key mathematical aspects of magic angles within the one-particle framework, providing the necessary context to present our results. For a more detailed mathematical treatment, we refer the reader to \cite[Section 3]{dynip}, which elaborates on the concepts summarized below.

The chiral limit of the massive Bistritzer-MacDonald (BM) Hamiltonian for twisted bilayer graphene is the periodic Hamiltonian $H(m,\alpha)$ acting on $L^2(\CC;\CC^4)$ with domain defined by the Sobolev space $H^1(\CC;\CC^4)$
\begin{equation}
\label{eq:Hamiltonian}
 H(m,\alpha) = \begin{pmatrix} m I_2 & D(\alpha)^* \\ D(\alpha) & -m  I_2\end{pmatrix} \text{ with } D(\alpha) = \begin{pmatrix} 2 D_{\bar z} & \alpha U(z) \\ \alpha U(-z) & 2D_{\bar z} \end{pmatrix}
 \end{equation}
 and $I_n$ is the $n \times n$ identity matrix. 
Here  $D_{\bar z} = -i \partial_{\bar z}$ , $\alpha \in \mathbb C\setminus\{0\}$ is an effective parameter that is inversely proportional to the physical twist angle $\theta$ and $m \ge 0$ is an effective mass parameter. As mentioned in the introduction, the mass parameter models the effect of aligning the twisted bilayer graphene (TBG) with other materials such as hexagonal boron nitride, which is crucial for observing the anomalous quantum Hall effect \cite{Li21}.
The Hamiltonian in \eqref{eq:Hamiltonian} with a positive mass parameter $m >0$ also serves as a model for twisted transition metal dichalcogenides (TMDs) \cite{CRQ23}. However, in the context of TBG without any auxiliary substrate, the model typically assumes $m=0$.

Let $\Gamma:=4\pi i \omega (\ZZ \oplus \omega \ZZ)$ be a triangular lattice with $\omega = e^{2\pi i /3}$. The tunnelling potentials $U$ are $\Gamma$-periodic functions that respect the symmetries
\begin{equation}
\label{eq:symmetries}
U(z+\mathbf a)=\bar\omega^{a_1+a_2}U(z), \quad U(\omega z) = \omega U(z), \quad \overline{U(z)}=U(\bar z)
\end{equation} 
for $\mathbf a = 4\pi i a_1 \omega/3+ 4\pi i a_2 \omega^2/3$ with $a_i \in \ZZ$, i.e. $\mathbf a \in \Gamma_3 :=\Gamma/3.$

Given that Hamiltonian is periodic with respect to $\Gamma$, we can apply the Bloch-Floquet decomposition of $H$ and equivalently study the family 
\[ H_k:=e^{-i \Re(z \bar k)} H e^{i \Re(z \bar k)}: H^1(\CC/\Gamma;\CC^4) \subset L^2(\CC/\Gamma;\CC^4) \to L^2(\CC/\Gamma;\CC^4)\]
where
\begin{equation}
\label{eq:Bloch_Hamiltonian}
 H_k(m,\alpha):= \begin{pmatrix} m I_2 & D(\alpha)^* + \bar k I_2 \\ D(\alpha)+k I_2 & -m I_2 \end{pmatrix} \end{equation}
with quasi-momentum $k \in \mathbb C,$ see also \cite[(2.11)]{beta}. The range of the individual Bloch eigenvalues over all $k$ are called bands and we can order them as follows
\[ ...\le E_{-2}(k)\le E_{-1}(k) \le -m \le 0 \le m \le E_1(k)\le E_2(k)\le ... \ .\]
The central objects in the one-particle picture of twisted bilayer graphene are the \emph{magic angles} at which the Hamiltonian exhibits perfectly flat bands $E_{\pm 1},$ i.e. $E_{\pm 1}$ do in fact not depend on $k$.

\begin{defi}[Magic angles] 
\label{defi:magic_angles} 
We say that $\alpha \in \mathbb C\setminus \{0\}$ is \emph{magic} if and only if the Bloch-Floquet transformed Hamiltonian with mass parameter $m \ge 0$ exhibits a flat band at energy $\pm m$. In short, $\alpha \in \CC \setminus \{0\}$ is magic if and only if
\begin{equation}
\label{eq:flat_band}
 \pm m \in\bigcap_{k \in \CC} \Spec_{L^2(\CC/\Gamma)}(H_{k}(m,\alpha))
 \end{equation}
 which is equivalent to saying that $E_{\pm 1}(k)=\pm m$ for all $k \in \CC.$
Here, $\Spec_{X}(S)$ denotes the spectrum of the linear operator $S$ on the Hilbert space $X$ on a suitable dense domain, where as before $H_{k}(m,\alpha):H^1(\CC/\Gamma;\CC^2) \to L^2(\CC/\Gamma;\CC^2).$ \end{defi}
 The set of parameters  $\alpha \in \CC$ for which there exists a flat band at energy $\pm m$ that we denote by $\mathcal A$, is independent of $m$\footnote{This is an easy consequence of $\Spec H_k(m,\alpha) = \pm \sqrt{ \Spec H_k(0,\alpha)^2 + m^2}$ \cite[(5.66)]{T92}.}. In the following, we shall suppress the mass parameter $m \ge 0$ in the notation when it does not affect the analysis.
 
Away from magic $\alpha$, that is, for $\alpha \notin \mathcal A$, it is known that $\pm m  \in \Spec(H_k(m,\alpha))$ if and only if $k \in \Gamma^*,$ where $\Gamma^*$ is the dual lattice.  
To summarize, we have the following magic-angle criterion:  
\begin{equation}
\label{eq:criterion}
\text{There exists }k \notin \Gamma^*\text{ such that} \pm m  \in \Spec(H_k(m,\alpha))\text{ if and only if  }\alpha \in \mathcal A.
\end{equation}

For the study of magic angles we also introduce a translation operator
\begin{equation}
\label{eq:La}
  \mathscr L_{\mathbf a } w ( z ) := 
\begin{pmatrix}  \omega^{a_1 + a_2}  & 0 \\
0 & 1  \end{pmatrix} 
w( z + \mathbf a ),  \ \ \ \mathbf a \in \Gamma_3  , 
\end{equation}
and a rotation operator $\mathscr Cu(z)=u(\omega z)$ that both commute with the operator $D(\alpha)$ in \eqref{eq:Hamiltonian}. The reason for introducing the above translation operator is that although the Hamiltonian is periodic with respect to $\Gamma$, it also satisfies a translation symmetry with respect to $\Gamma_3$, but with the modified translation operator \eqref{eq:La}, $ \mathscr L_{\mathbf a }  \otimes I_2.$ 

We can then define special invariant subspaces of $D(\alpha)$ for $\ell,p\in \ZZ_3$
\begin{equation}
\label{eq:subspaces}
L^2_{\ell,p}:=\{ u \in L^2(\CC/\Gamma;\CC^2); \mathscr L_{\mathbf a} u(z) = \omega^{\ell} u(z) \text{ and } \mathscr Cu(z) = \bar\omega^p u(z)\}
\end{equation}
and define their direct sum $L^2_{\ell}:=\bigoplus_{p \in \ZZ_3} L^2_{\ell,p}$.

The set $\mathcal A$ of such magic $\alpha$, as in Definition \ref{defi:magic_angles} is characterized by the eigenvalues of a compact operator. 
\begin{theo}{\cite[Theo.$2$]{beta}}
\label{theo:beta}
The parameter $\alpha \in \mathbb C \setminus \{0\}$ is magic, as in Definition \ref{defi:magic_angles} if and only if
\begin{equation}
\label{eq:magic}
 \alpha^{-1} \in \Spec_{L^2_0}(T_k) \text{ with } T_k = (2D_{\bar z}+k)^{-1}\begin{pmatrix} 0&  U(z) \\  U(-z) & 0 \end{pmatrix} \text{ for some }k \notin \Gamma^*,
 \end{equation}
Moreover, the spectrum of $T_k$ is independent of $k \notin \Gamma^*.$
 \end{theo}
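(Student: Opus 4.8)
The plan is to remove, one layer at a time, the mass term, the $2\times2$ spinor doubling, and finally the unbounded part of the operator, until the flat-band condition becomes a genuine eigenvalue equation for the compact operator $T_k$.

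\textbf{Step 1: reduce to $D(\alpha)+kI_2$.} By the identity $\Spec H_k(m,\alpha)=\pm\sqrt{\Spec H_k(0,\alpha)^2+m^2}$ quoted in the footnote, $\pm m\in\Spec H_k(m,\alpha)$ iff $0\in\Spec H_k(0,\alpha)$. Setting $A:=D(\alpha)+kI_2$, the massless Bloch operator is block off-diagonal, $H_k(0,\alpha)=\left(\begin{smallmatrix}0&A^*\\A&0\end{smallmatrix}\right)$, so $0\in\Spec H_k(0,\alpha)$ iff $\ker A\neq\{0\}$ or $\ker A^*\neq\{0\}$. But $A$ is a first-order elliptic operator on the closed surface $\CC/\Gamma$ with the same principal symbol as $\operatorname{diag}(2D_{\bar z},2D_{\bar z})$, the potential term being a relatively compact, index-preserving perturbation; hence $A$ is Fredholm of index $0$ and $\ker A,\ \ker A^*$ have equal dimension. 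Therefore $\pm m\in\Spec H_k(m,\alpha)\iff\ker_{L^2(\CC/\Gamma;\CC^2)}(D(\alpha)+kI_2)\neq\{0\}$, and by the magic-angle criterion \eqref{eq:criterion} this says that $\alpha\in\mathcal A$ iff this kernel is nontrivial for \emph{some} $k\notin\Gamma^*$.

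\textbf{Step 2: Birman--Schwinger.} For $k\notin\Gamma^*$ the operator $2D_{\bar z}+k$ is invertible on $L^2(\CC/\Gamma)$: the spectrum of $2D_{\bar z}$ there is $\Gamma^*$, with eigenfunctions $e^{i\Re(z\bar p)}$, $p\in\Gamma^*$, so $-k\notin\Gamma^*$. Writing $V=\left(\begin{smallmatrix}0&U(z)\\U(-z)&0\end{smallmatrix}\right)$, I would then factor
\[
D(\alpha)+kI_2=\big((2D_{\bar z}+k)\otimes I_2\big)\,(I+\alpha T_k),\qquad T_k=(2D_{\bar z}+k)^{-1}V ,
\]
so that $\ker(D(\alpha)+kI_2)\neq\{0\}$ iff $-1/\alpha\in\Spec_{L^2(\CC/\Gamma;\CC^2)}(T_k)$. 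The sublattice conjugation $\sigma_3=\operatorname{diag}(1,-1)$ anticommutes with $V$ and commutes with the diagonal resolvent, so $\sigma_3T_k\sigma_3=-T_k$; hence $\Spec T_k=-\Spec T_k$ and the condition is equivalently $1/\alpha\in\Spec_{L^2(\CC/\Gamma;\CC^2)}(T_k)$.

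\textbf{Step 3: descend to $L^2_0$ and prove $k$-independence} --- this is the substantive part. Since $\mathscr L_{\mathbf a}\otimes I_2$ ($\mathbf a\in\Gamma_3$) and $\mathscr C$ commute with $\operatorname{diag}(2D_{\bar z},2D_{\bar z})$ and with $V$, they commute with $T_k$, so $T_k$ is reduced by $L^2(\CC/\Gamma;\CC^2)=\bigoplus_{\ell,p}L^2_{\ell,p}$ and $\Spec_{L^2}T_k=\bigcup_\ell\Spec_{L^2_\ell}T_k$. What remains is to show that the \emph{magic} eigenvalue is always attained in the sector $\ell=0$ and is independent of $k\notin\Gamma^*$. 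For this I would follow the theta-function construction of Tarnopolsky--Kruchkov--Vishwanath as adapted in \cite{beta}: multiplication by a holomorphic (or meromorphic) function sends zero modes of $D(\alpha)$ to zero modes, because $2D_{\bar z}$ annihilates holomorphic factors and the potential acts multiplicatively; an exponential times a ratio of Jacobi theta functions then provides, for each target momentum $k$, a multiplier that shifts the quasi-periodicity character of a given zero mode from $k_0$ to $k$, with its single pole absorbed as soon as the zero mode vanishes at a $C_3$-fixed point. One then checks that at a magic $\alpha$ the essentially unique zero mode at the distinguished quasi-momentum --- the ``protected state'' present for \emph{every} $\alpha$, which lies in $L^2_0$ by its $\mathscr L_{\mathbf a}$- and $\mathscr C$-characters --- does vanish at such a point precisely when $\alpha\in\mathcal A$; the resulting flat family $\{u_k\}_{k\in\CC}$ of zero modes then lies entirely in $L^2_0$, giving $1/\alpha\in\Spec_{L^2_0}(T_k)$ for every $k\notin\Gamma^*$. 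The converse is immediate from \eqref{eq:criterion}, and the same propagation argument delivers the asserted $k$-independence of $\Spec_{L^2_0}(T_k)$. I expect the main obstacle to be exactly this last step --- controlling the zero/pole divisor of the theta multiplier so that a single zero mode propagates to \emph{all} $k$ while staying in the fixed block $L^2_0$, i.e.\ showing that magicity is detected already at one quasi-momentum and in one irreducible symmetry block. This is where the holomorphic structure of the chiral model, not merely its ellipticity, is indispensable.
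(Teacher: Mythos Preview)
The paper does not itself prove this theorem: it is quoted from \cite{beta}, and the surrounding text only records the elementary Birman--Schwinger equivalence \eqref{eq:easy-peasy} and flags the $k$-independence of $\Spec T_k$ as ``the non-trivial part''. Your Steps~1--2 reproduce exactly that elementary reduction, and your Step~3 correctly identifies the theta-function propagation of zero modes from \cite{beta} as the mechanism behind the rigidity statement. So in substance your outline matches both the paper's brief discussion and the argument it cites.

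One correction in Step~3: the rotation $\mathscr C$ does \emph{not} commute with $T_k$ for generic $k$. Conjugation by $\mathscr C$ sends $2D_{\bar z}\mapsto\bar\omega\,2D_{\bar z}$ (and hence $2D_{\bar z}+k\mapsto\bar\omega\,2D_{\bar z}+k$), so the $p$-grading is not preserved by $T_k$ unless $k$ is a fixed point of the rotation. Only the translation symmetry $\mathscr L_{\mathbf a}$ genuinely commutes with $T_k$ for all $k$, which is why the theorem is stated on $L^2_0$ and not on a finer block $L^2_{0,p}$. This is harmless for your argument --- you only need the $\ell$-decomposition --- but the sentence ``$\mathscr L_{\mathbf a}\otimes I_2$ and $\mathscr C$ commute with $\operatorname{diag}(2D_{\bar z},2D_{\bar z})$'' should drop $\mathscr C$. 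Your closing assessment is accurate: the genuine content is the theta-multiplier construction showing that a single zero mode at one $k\notin\Gamma^*$ forces zero modes at all $k$, and this relies essentially on the holomorphic (chiral) structure rather than mere ellipticity.
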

 
 To see how $T_k$ enters in the discussion, notice the simple equivalence
\begin{equation}
\label{eq:easy-peasy}
-k \in \Spec(D(\alpha)) \Leftrightarrow -\alpha^{-1} \in \Spec(T_k)
\end{equation}
 which holds for any $\alpha \neq 0$ and $k \notin \Gamma^*.$ In fact the non-trivial part about Theorem \ref{theo:beta} is the rigidity that the spectrum of $T_k$ is independent of $k \notin \Gamma^*.$ What this means is that if $-k \in \Spec(D(\alpha))$ for some $k \notin \Gamma^*$, then $-k \in \Spec(D(\alpha))$ for all $ k \in \CC.$ The latter is then equivalent to the flat band condition in Def. \ref{defi:magic_angles}. 

We now introduce the concept of \emph{generic magic angles}. The term '\emph{generic}' is inspired by \cite[Theo. 3]{bhz4}, which demonstrates that, for a generic (in the Baire sense) choice of tunneling potentials
$U$, all magic angles exhibit specific properties outlined in the following definition.
\begin{defi}[Generic magic angles]
\label{def:generic_mag_angle}
We say that $\alpha \in \mathcal A$ is a simple or two-fold degenerate magic angle if $1/\alpha \in \Spec_{L^2_0}(T_k)$ and $\dim \ker_{L^2_0}(T_k-1/\alpha)=\nu$ with $\nu=1,2$, respectively. We refer to the union of these magic angles as \emph{generic magic angles.}
\end{defi}
\subsection{Magic angle (in)stability}
\label{subsec:Q1}
The first aim of this article is to study Question \ref{que0}, the stability of (the set of) magic angles in the chiral limit. We model this by studying perturbations $W$ of the chiral Hamiltonian \eqref{eq:Hamiltonian} with relative coupling strength $\lambda$, i.e. $\lambda=0$ recovers the unperturbed Hamiltonian \eqref{eq:Hamiltonian}, 
\begin{equation}
\label{eq:Hk}
 \begin{aligned} H_{k,\lambda}(m,\alpha):=\begin{pmatrix} m  I_2 & D_{k,\lambda}(\alpha)^*  \\ D_{k,\lambda}(\alpha) & -m I_2 \end{pmatrix} \text{ with } \\
D_{k,\lambda}(\alpha):=\begin{pmatrix} 2D_{\bar z} +k& \alpha U(z) \\ \alpha U(-z)  & 2D_{\bar z}+k \end{pmatrix} + \lambda \alpha  W \text{ and } W=\begin{pmatrix}  A_+ &  V_+ \\ 
 V_-  &  A_- \ 
\end{pmatrix}, \end{aligned}
\end{equation}
where we introduced bounded linear perturbation operators $A_{\pm},V_{\pm}$. We use the notation $A_{\pm}$ and $V_{\pm}$ to indicate that the perturbations, if we assume that they are multiplication operators, correspond to magnetic potentials, $A_{\pm}$, and tunnelling potentials $V_{\pm}.$ Multiplying the perturbation operators by $\alpha$ reflects that the perturbations act on the same length scale as the original tunnelling potentials, the \emph{moir\'e length scale}. A discussion of this correspondence can be found, for example, in \cite{suppl,magic}.

Our Theorem \ref{theo:perturbation} then provides an upper bound and therefore a \emph{stability bound}, on the shift of magic angles under such perturbations. This bound is more restrictive for large magic angles. This result is significant because the operator $T_k$, whose eigenvalues correspond to the magic angles, is non-normal. As a result, even small perturbations in norm could potentially cause substantial shifts in the spectrum, as discussed in \cite[Theo. 10.2]{ET05}. 
 Conversely, we obtain a lower bound \emph{instability bound} on the shift, showing that even simple rank $1$ perturbations of exponentially small size in the large parameter $1/\vert \theta \vert$ suffice to generate eigenvalues $\theta$ in the spectrum of $T_k$.
\begin{theo}[Instability]
\label{theo:instability}
Let $\alpha \in \mathbb C \setminus \{0\}$ and $k \notin \Gamma^*$, then there exists a rank-$1$ operator $R$ with $\Vert R \Vert = \mathcal O(e^{-c(k) \vert \alpha \vert})$ and $c(k)>0$ independent of $\alpha$ such that $-\alpha^{-1} \in \Spec(T_{k}+R)$.
\end{theo}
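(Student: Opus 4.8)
The plan is to reduce the statement to the existence of an exponentially accurate quasimode for the off‑diagonal block $D(\alpha)+kI_2$, and then to build that quasimode by semiclassical means with $h=1/|\alpha|$ the small parameter. First I would use the factorisation behind \eqref{eq:easy-peasy}: on $L^2_0$,
\[
T_k+\alpha^{-1}\;=\;\alpha^{-1}\,(2D_{\bar z}+k)^{-1}\bigl(D(\alpha)+kI_2\bigr),
\]
where $(2D_{\bar z}+k)^{-1}$ is bounded exactly because $k\notin\Gamma^*$. Since the statement is vacuous for bounded $|\alpha|$, it suffices to produce, for $|\alpha|$ large, a \emph{unit} vector $v\in L^2_0$ with $\|(D(\alpha)+kI_2)v\|_{L^2}=\mathcal O(e^{-c(k)|\alpha|})$: given such a $v$ one gets $\|(T_k+\alpha^{-1})v\|\le |\alpha|^{-1}\|(2D_{\bar z}+k)^{-1}\|_{L^2_0}\,\mathcal O(e^{-c(k)|\alpha|})$, and the rank‑one operator $R:=-\,|(T_k+\alpha^{-1})v\rangle\langle v|$ has $\|R\|=\mathcal O(e^{-c(k)|\alpha|})$ and satisfies $(T_k+\alpha^{-1}+R)v=0$, i.e.\ $-\alpha^{-1}\in\Spec(T_k+R)$.

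To build $v$ I would argue semiclassically. Writing $\alpha=|\alpha|e^{i\varphi_0}$ and $h=1/|\alpha|$, the operator $P_h:=h(D(\alpha)+kI_2)=2hD_{\bar z}I_2+hkI_2+e^{i\varphi_0}\left(\begin{smallmatrix}0&U(z)\\U(-z)&0\end{smallmatrix}\right)$ is a semiclassical first‑order $2\times2$ system on $L^2(\CC/\Gamma;\CC^2)$ with real‑analytic coefficients and principal symbol $\sigma_0(z,\zeta)=\left(\begin{smallmatrix}\zeta&e^{i\varphi_0}U(z)\\e^{i\varphi_0}U(-z)&\zeta\end{smallmatrix}\right)$, where $\zeta=\xi+i\eta$ and $(\xi,\eta)$ are dual to $z=x+iy$. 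Its determinant $p=\det\sigma_0=\zeta^2-e^{2i\varphi_0}U(z)U(-z)$ cuts out a smooth two‑dimensional characteristic set away from the zeros of $U(z)U(-z)$, along which $\ker\sigma_0$ is one‑dimensional. A Poisson‑bracket computation on $T^*(\CC/\Gamma)$ gives, up to a positive factor, $\tfrac1{2i}\{p,\bar p\}=\Im\bigl(\bar\zeta\,\partial_z(e^{2i\varphi_0}U(z)U(-z))\bigr)$, which on the characteristic set is a nonzero real multiple---of sign fixed by the branch choice $\zeta_0=\pm\sqrt{e^{2i\varphi_0}U(z_0)U(-z_0)}$---of $\Im\bigl(e^{i\varphi_0}(\partial_z\sqrt{U(z)U(-z)})(z_0)\bigr)$. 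Since $z\mapsto\partial_z\sqrt{U(z)U(-z)}$ is a nonconstant real‑analytic function that is not everywhere valued in a single real line (a nondegeneracy immediate for the Bistritzer--MacDonald potential, where $U(z)U(-z)$ depends genuinely on $\bar z$), there is a characteristic point $(z_0,\zeta_0)$ with $U(\pm z_0)\neq0$ at which Hörmander's bracket condition holds with the sign that produces a quasimode for $P_h$. Invoking the standard quasimode construction for operators violating the bracket condition (Davies; Zworski; Dencker--Sjöstrand--Zworski)---after the usual reduction of the $2\times2$ system at a simple characteristic point to a scalar symbol having the same bracket sign as $\det\sigma_0$, and using analyticity to upgrade the error from $\mathcal O(h^\infty)$ to $\mathcal O(e^{-c/h})$---I would obtain $v_1$ with $\|v_1\|=1$, concentrated near $z_0$ at scale $\sqrt h$, and $\|P_hv_1\|=\mathcal O(e^{-c/h})$; hence $\|(D(\alpha)+kI_2)v_1\|=h^{-1}\mathcal O(e^{-c/h})=\mathcal O(e^{-c'|\alpha|})$.

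Finally I would project into $L^2_0$: the $\mathscr L_{\mathbf a}$, $\mathbf a\in\Gamma_3$, give a unitary representation of $\Gamma_3/\Gamma\cong(\ZZ/3)^2$ commuting with $D(\alpha)+kI_2$, with $L^2_0$ its fixed subspace and $\Pi_0=\tfrac19\sum_{\mathbf a\in\Gamma_3/\Gamma}\mathscr L_{\mathbf a}$ the orthogonal projection. Setting $v:=\Pi_0v_1$, the nine functions $\mathscr L_{\mathbf a}v_1$ concentrate near the distinct points $z_0-\mathbf a$, so their overlaps are $\mathcal O(e^{-c/h})$, giving $\|v\|=\tfrac13+o(1)\neq0$ while $\|(D(\alpha)+kI_2)v\|=\|\Pi_0(D(\alpha)+kI_2)v_1\|\le\|(D(\alpha)+kI_2)v_1\|=\mathcal O(e^{-c'|\alpha|})$. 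This $v$ is the vector required above, with $c(k):=c'>0$ depending on $k$ only through the subprincipal term $hkI_2$ and through $\|(2D_{\bar z}+k)^{-1}\|$, and independent of $\alpha$ (uniformity in $\varphi_0=\arg\alpha$ following from compactness of the circle).

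Everything here except one step is soft; the genuine point---the main obstacle---is exhibiting a characteristic point where the bracket condition holds, i.e.\ verifying that the symbol of $D(\alpha)+kI_2$ is not self‑adjoint‑like along its entire characteristic variety; this is where the analyticity and the explicit shape of $U$ enter (analyticity also being what yields an $e^{-c/h}$ rather than a mere $\mathcal O(h^\infty)$ error), and the periodisation onto $L^2_0$, while routine, must be done with the twisted translations $\mathscr L_{\mathbf a}$. An alternative, if one is content to cite that the magic‑angle eigenvalues of $T_k$ have exponentially large condition numbers and that magic angles are asymptotically equally spaced, is to take the magic $\alpha_n$ nearest to a large $\alpha$ (so $|\alpha_n-\alpha|=\mathcal O(1)$, hence the distance from $-\alpha^{-1}$ to $\Spec(T_k)$ is $\mathcal O(|\alpha|^{-2})$ since $\Spec(T_k)$ is symmetric about $0$) and move the nearby eigenvalue exactly onto $-\alpha^{-1}$ by a rank‑one perturbation of norm $\mathcal O(|\alpha|^{-2})e^{-c|\alpha|}$.
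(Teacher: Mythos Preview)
Your proposal is correct and follows essentially the same route as the paper: both reduce to an exponentially accurate semiclassical quasimode for $D(\alpha)+k$ (equivalently $P(\mu)-\mu k$ with $\mu=1/\alpha$), and then manufacture a rank-$1$ perturbation of $T_k$ from it. The paper simply cites this quasimode as \cite[Theo.~4]{beta} (which is exactly the Dencker--Sj\"ostrand--Zworski/H\"ormander-bracket construction you sketch), and then passes through a resolvent lower bound before building $R$, whereas you go directly from the quasimode $v$ to $R=-\lvert(T_k+\alpha^{-1})v\rangle\langle v\rvert$; these are equivalent, and your explicit projection onto $L^2_0$ via $\Pi_0=\tfrac19\sum_{\mathbf a}\mathscr L_{\mathbf a}$ is a detail the paper leaves implicit.
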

This instability result informally states that for $\pm m$ to be in the spectrum of $H_{k,\lambda}(m,\alpha)$, the perturbed Hamiltonian \eqref{eq:Hk} at the twisting angle $\theta \propto \alpha^{-1}$,  a perturbation of size $\vert \lambda \vert = \mathcal O(e^{-c/\vert \theta \vert})$ suffices. 

To see this more clearly, recall that the perturbation of $T_k$ by a rank 1 operator $R$ is, following \eqref{eq:easy-peasy}, equivalent to a rank 1 perturbation $\lambda \alpha W$ of $D(\alpha)$ where $R = (2D_{\bar z}+k)^{-1}\lambda W$, i.e. 
\[ -k \in \Spec(D_{k,\lambda}(\alpha)) \Longleftrightarrow -\alpha^{-1} \in \Spec(T_k +R).\]
Thus, Theorem \ref{theo:instability} shows that if we fix any $k \notin \Gamma^*$ and perturb $D(\alpha)$ by a rank-$1$ operator $W$ with exponentially small coupling parameter $\lambda$,  then we can ensure that $-k \in \Spec(D(\alpha)+\lambda \alpha  W).$ This perhaps at first surprising result, shows that arbitrary eigenvalues $\pm m$, at the flat band energy level, can be easily generated for small twisting angles, i.e.
\[ \pm m \in \Spec(H_{k,\lambda}(m,\alpha)).  \]
If this condition holds for the unperturbed Hamiltonian, i.e. $\pm m \in  \Spec(H_{k}(m,\alpha))$, then, as explained before, this implies that $\alpha$ corresponds to a magic angle. 
There are at least two directions in which it would be desirable to improve the results of Theorem \ref{theo:instability}. The first concerns the nature of admissible perturbations. 
We consider rank 1 perturbations which are easy to treat. However, these may not be the most physically natural examples when compared with potential perturbations.  Such perturbations are covered by the stability bound in Theorem \ref{theo:perturbation}. Potential perturbations are also discussed in more detail using a very different approach in our companion article \cite{bov}.

By our above explanation, Theorem \ref{theo:instability} shows that it is easy to satisfy the magic angle condition \eqref{eq:criterion} of the unperturbed operator $D_k(\alpha)$ for the perturbed operator $D_{k,\lambda}(\alpha)$ in the small limit $\theta \propto 1/\alpha$. Although the magic angle condition \eqref{eq:criterion} implies the existence of perfectly flat bands for the unperturbed Hamiltonian $H(m,\alpha)$ at energy $\pm m$, this may no longer be true for the perturbed Hamiltonian. This should not be too surprising, as the existence of perfectly flat bands is a very special feature that heavily relies on symmetries. Even the full Bistritzer-MacDonald Hamiltonian is not believed to exhibit perfectly flat bands away from the chiral limit studied in this work. However, it may be desirable to consider magic angle criteria other than \eqref{eq:criterion} and analyze their stability under random perturbations.

\subsection{Anderson model and IDS}
\label{sec:Q2}
We now shift our perspective to Question \ref{que1}, focusing on the impact of disorder near a fixed magic angle. One notable consequence of a flat band is the presence of jump discontinuities in the \emph{integrated density of states} (IDS). 
The integrated density of states is defined as follows; see \cite{Sj} and others:
 \begin{defi}
 \label{defi:IDS}
 The \emph{integrated density of states (IDS)} for the energies $E_2>E_1$ and $I = [E_1,E_2]$ is defined by 
 \[ N(I):=\lim_{L \to \infty} \frac{\tr(\indic_{I}(H_{\Lambda_L}(\alpha)))}{\vert \Lambda_L\vert}\]
 with $\Lambda_L=\CC/(L\Gamma)$. $H_{\Lambda_L}(\alpha)$ is the Hamiltonian \eqref{eq:Hamiltonian} with periodic boundary conditions, i.e. $H_{\Lambda_L}(\alpha):H^1(\Lambda_L) \subset L^2({\Lambda_L}) \to L^2({\Lambda_L}).$
 \end{defi}
 Here, $\Lambda_L=\CC/(L\Gamma)$ is the fundamental domain of the lattice $L\Gamma$ that we, with some abuse of notation, identify with its representative centered at the origin. 
 For ergodic random operators, the almost sure existence of this limit is shown using the subadditive ergodic theorem; see, for instance, \cite[Sec. 7.3]{Kirsch}.
Alternatively, one may define for $f \in C_c^{\infty}(\RR)$ the regularized trace 
\begin{equation}
\label{eq:reg_trace}
\widetilde \tr(f(H(\alpha))) = \lim_{L \to \infty} \frac{\tr(\indic_{{\Lambda_L}} f(H(\alpha)))}{\vert \Lambda_L\vert}.
\end{equation}
By Riesz's representation theorem, one has that $$\widetilde \tr(f(H(\alpha))) = \int_{\RR} f(\lambda) \ d\rho(\lambda),$$ where $\rho$ is the \emph{density of states (DOS) measure} of $H(\alpha).$ This way, $N(I) = \int_I \ d\rho(\lambda).$
\begin{rem} 
For Schr\"odinger operators it is common to consider Dirichlet approximations of the finite-size truncation in the density of states. It is known that Dirac operators, as they are first-order operators, generally do not have any self-adjoint Dirichlet realizations. However, self-adjoint Neumann-type boundary conditions are possible, see \cite{Berry} and, for instance, the introduction of \cite{SV19} for a mathematical discussion. The independence of the definition of the IDS of the boundary conditions can then be shown using spectral shift function techniques if the operator contains a gap in the spectrum, see for instance the work by Nakamura \cite{N} on Schr\"odinger operators.
\end{rem}
For the periodic Hamiltonian $H(\alpha)$ with Bloch operators $H_k(\alpha)$ we have \cite[(1.29)]{Sj}
 \[ N(I) = \int_{\mathbb C/\Gamma^*} \Bigg( k \mapsto   \sum_{\lambda \in \Spec_{L^2(\CC/\Gamma)}(H_k(\alpha))} \indic_I(\lambda) \Bigg) \ \frac{dk}{4\pi^2}.\]

 In particular, a periodic Hamiltonian that has a flat band, such as \eqref{eq:Hamiltonian} for magic $\alpha$, at energy $E$ possesses a jump discontinuity in the IDS at $E$. In particular, the Lebesgue decomposition of $\rho$ has a pure point contribution at $E.$  As a consequence, if we define the associated cumulative distribution function $N_{E_0}: (E_0 ,\infty) \to \mathbb R$ by $N_{E_0}(E):=N([E_0,E])$, then this function is monotonically increasing and right-continuous (càdlàg). At a magic angle, the function $N_{E_0}$ for $E_0<\pm m$ has a jump discontinuity at $E= \pm m.$


Let $\alpha \in \mathcal A$ be a generic magic angle, as in Def. \ref{def:generic_mag_angle}, then we define the energy gap between the flat bands with $m=0$ and the rest of the spectrum by
\begin{equation}
\label{eq:gap}
E_{\operatorname{gap}}(\alpha):=\inf_{\lambda \in \Spec(H(m=0,\alpha)^2)\setminus \{0\}} \sqrt{\lambda}>0.
\end{equation} 
The existence of a spectral gap follows from \cite[Theo.$2$]{bhz2} for simple and from \cite[Theo.$4$]{bhz4} for two-fold degenerate magic angles and thus holds for all generic magic angles. We illustrate this in Figure \ref{fig:gap}. 
To summarize, for $\alpha$ a generic magic angle, the following union of intervals is in the resolvent set of the Hamiltonian $H(m,\alpha)$
\begin{equation}
\label{eq:spectral_gap}
\Big(-\sqrt{E_{\operatorname{gap}}(\alpha)^2+m^2},-m\Big) \cup \big(-m,m\big) \cup \Big(m,\sqrt{E_{\operatorname{gap}}(\alpha)^2+m^2}\Big) \subset \mathbb R \setminus \Spec(H(m,\alpha)).
\end{equation}
Let $P_X$ be the orthogonal projection onto a closed subspace $X$. For $\alpha \in \mathcal A$ generic, as in Def. \ref{def:generic_mag_angle}, it has been shown in  \cite[Theo. $4$]{bhz3} and \cite[Theo. $5$]{bhz4} that the Chern number of the Bloch bundle associated with the flat bands at energy $\pm m$  is $\mp 1$ or more generally (including $m=0$) for the Hamiltonian in \eqref{eq:Hamiltonian}
\begin{equation}
\label{eq:Chern}
\operatorname{Cher}(P_{\ker(D(\alpha))})=-1 \text{ and } \operatorname{Cher}(P_{\ker(D(\alpha)^*)})=1.
\end{equation}
The Chern number can be computed from the expression for the Hall conductivity $\Omega(P)$, see \eqref{eq:Chern_number}, by using that 
\[ \operatorname{Cher}(P) = -2\pi i \Omega(P).\]
In particular, the net Chern number of the flat bands is zero
\[ \operatorname{Cher}(P_{\ker(D(\alpha))}\oplus P_{\ker(D(\alpha)^*)}) = \operatorname{Cher}(P_{\ker(H(0,\alpha))})=0.\]

\begin{figure}
\includegraphics[width=\textwidth]{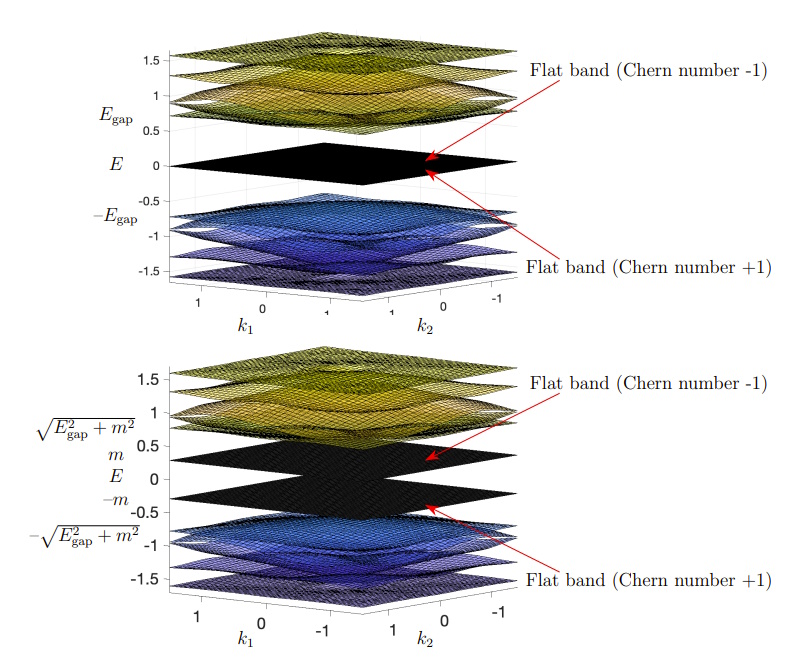}
\caption{\label{fig:gap} Band structure of non-disordered twisted bilayer graphene \eqref{eq:Hamiltonian} at the first real positive magic angle $\alpha \approx 0.58566$ with zero effective mass (top) and non-zero effective mass (bottom).}
\end{figure}

We are now going to state our assumptions on the admissible disorder profile that we consider to study Question \ref{que1}.

\begin{assumption}[Anderson model]
\label{ass:Anderson}
We consider the Anderson-type Hamiltonian with \emph{alloy-type} potentials and (possible) lattice relaxation effects with coupling strength $\lambda>0$ and $u \in C^{\infty}_{c}(\CC;\CC^4)$ of the form
\begin{equation}
\label{eq:Anderson_model}
H_{\lambda} = H + \lambda V_{X} \text{ where } V_{X} = \sum_{\gamma \in \Gamma}X_{\gamma} u(\bullet-\gamma-\xi_{\gamma}).
\end{equation}
\emph{Constraints on $X,\xi$:}
Coefficients $(X_{\gamma})_{\gamma}$ and $(\xi_{\gamma})_{\gamma}$ are families of i.i.d. random variables. 
The $(X_{\gamma})$ are assumed to be distributed according to an absolutely continuous bounded density $g$ with $\supp(g) \subset [-1,1]$. The probability measure of $(\xi_{\gamma})_{\gamma}$ is concentrated in a compact domain $D \subset \CC$. 

\emph{Constraints on $u$:} We shall impose either of the following two conditions on our matrix-valued disorder potential $u=u^* \in C^{\infty}_c(\CC;\CC^4)$.  
\begin{enumerate}
\item \emph{Case $1$:} The hermitian disorder potential $u$ is of the form
\begin{equation}
\label{eq:subtle}
 u(z) = \begin{pmatrix} Y(z) & Z(z)^* \\ Z(z) & -Y(z) \end{pmatrix} \in C^{\infty}_{c}(\CC;\CC^4)
 \end{equation}
where $ \inf_{\xi \in D^{\Gamma}} \inf_{z \in \CC} \sum_{\gamma \in \Gamma} Y(z-\gamma-\xi_{\gamma}) >0.$
\item \emph{Case $2$:} The disorder is signed, i.e. $u \ge 0$, and non-vanishing: There exist $z_0 \in \CC$ and $\varepsilon>0$ such that
\begin{equation}
\label{eq:less_subtle}
\inf_{z \in B_{\varepsilon}(z_0),\xi \in D}u(z-\xi) >0
\end{equation} as an operator. 
\end{enumerate}
For normalization purposes, we assume that $\sup_{\xi \in D^{\Gamma}} \Vert \sum_{\gamma \in \Gamma} u(\bullet-\gamma-\xi_{\gamma}) \Vert_{\infty}\le 1$ and $\supp u \subset \Lambda_R(0)$ for some fixed $R>0$ where $\Lambda_L :=\CC/(L\Gamma)$ and $\Lambda_L(z):=\Lambda_L+z.$
\end{assumption}

Random variables $\xi_{\gamma}$ model small inhomogeneities of the moir\'e lattice due to relaxation effects. 
Let us emphasize that under assumption (1) on $u$, the matrix $u$ is neither positive nor negative definite. This non-definiteness typically presents a challenge for proving Wegner estimates, as the eigenvalues do not exhibit monotonic behavior with respect to the coupling strength $\lambda$. However, we overcome this difficulty by leveraging the off-diagonal structure of the Hamiltonian. This potential-type perturbation is well-suited for studying disorder effects in the alignment of twisted bilayer graphene (TBG) with other substrates, which is essential for understanding the emergence of the anomalous quantum Hall effect (QHE). On the other hand, case (2) represents a more common scenario in the study of random Schrödinger operators, as it simplifies the proof of Wegner estimates by directly enforcing monotonicity through the positivity of the potential perturbation.

The probability space is the Polish space $\Omega = (\supp g)^{\Gamma} \times D^{\Gamma}$ with the product measure. Then $(H_{\lambda})$ is an ergodic (with respect to lattice translations) family of self-adjoint operators with continuous dependence $\Omega \ni (X,\xi) \mapsto (H_{\lambda}+i)^{-1}.$ 
Thus, there is $\Sigma \subset \mathbb R$ closed such that
\begin{equation}
\label{eq:Kirsch_inclusion}
\Spec_{L^2(\CC)}(H_{\lambda}) = \Sigma \text{ almost surely},
\end{equation} see \cite{KS80,KM82,Pa}. In addition, using ergodicity arguments, see e.g.\ \cite{W}, the density of states measure for the random operator, $\rho^{H_{\lambda}}$, exists almost surely and is almost surely non-random. In other words, $\rho^{H_{\lambda}}$ is almost surely equal to a non-random measure $\rho.$ 
An extension of our work to unbounded disorder is possible. In the context of Schr\"odinger operators this extension has been demonstrated for magnetic Landau Hamiltonians \cite{GKS2,GKM}. Furthermore, related proofs of localization for Dirac operators have also been obtained, assuming a spectral gap, in \cite{BCZ19}.

For $\lambda \neq 0$, the infinitely-degenerate point spectrum of $H$ at zero energy, corresponding to the flat band, is non-trivially perturbed and expands in energy. To capture this, we then introduce constants $K_{\pm} := \sqrt{E_{\operatorname{gap}}(\alpha)^2+m^2} \pm \lambda \sup_{X \in \Omega}  \Vert V_X \Vert_{\infty}$ and $k_{\pm}:= m \pm \vert \lambda \vert \sup_{X \in \Omega}  \Vert V_X \Vert_{\infty}.$
One thus finds analogously to \eqref{eq:spectral_gap} for the disordered Hamiltonian
\begin{equation}
\label{eq:random_spec}
 (-K_-,-k_+) \cup (-k_-,k_-) \cup (k_+,K_-) \subset \mathbb R \setminus \Sigma,
 \end{equation}
where all three intervals are non-trivial for $\lambda>0$ sufficiently small and $m >0$.
We then also define 
\begin{equation}
\label{eq:ivals}
J_-:=[-k_+,-k_-]\text{ and }J_+:=[k_-,k_+].
\end{equation}

{\begin{rem}
The condition $\lambda>0$ sufficiently small appears frequently in this text. At a fixed generic \ref{def:generic_mag_angle} magic angle, it is known that the flat bands are gapped from the remaining ones. This spectral gap however depends on the specific magic angle. Since we frequently try to keep these spectral gaps for the disordered Hamiltonian, as in \eqref{eq:random_spec}, we require $\lambda>0$ sufficiently small, depending on the gap of the specific, but fixed magic angle. 
\end{rem}}


Given a finite domain $\Lambda_L:=\CC/(L\Gamma) \subset \CC$, we introduce the Hamiltonian
\[ H_{\lambda, \Lambda_L} = H_{\Lambda_L}+\lambda V_{X,\Lambda_L},\]
with periodic boundary conditions where ${V_{X,\Lambda_L}}:=\sum_{\gamma \in \tilde \Lambda_L}X_{\gamma} (u\indic_{\Lambda_L})(\bullet-\gamma - \xi_{\gamma})\text{ with }\widetilde \Lambda_L:=\Lambda_L \cap \Gamma$. In general, we shall denote by $S_{\Lambda_L}$ the restriction of an operator $S$ to the domain $\Lambda_L$ with periodic boundary conditions in the case where $S$ is a differential operator.

The presence of a flat band in the unperturbed Hamiltonian \eqref{eq:Hamiltonian} causes a jump discontinuity in the integrated density of states (IDS). However, when considering the random Hamiltonian \eqref{eq:Anderson_model}, the IDS becomes Lipschitz continuous for all $\lambda \neq 0$. 
Due to the loss of periodicity in the randomly perturbed Hamiltonian, it is standard practice to assess the regularity of the IDS to evaluate the extent of flat band destruction.
  \begin{theo}[Continuous IDS]
  \label{theo:continuity}
The Anderson Hamiltonian as in Assumption \eqref{ass:Anderson} with $m \ge 0$ and coupling constant $\lambda \in (-\varepsilon(m),\varepsilon(m)) \setminus\{0\}$ for $\varepsilon(m) >0$ sufficiently small has almost surely H\"older continuous integrated density of states (IDS) in Hausdorff distance $d_H$ under either
\begin{itemize}
\item \emph{Case 1 disorder }\eqref{eq:subtle}: For for intervals $I,I' \subset [-k_+,k_+]$, with $k_+$ as in \eqref{eq:random_spec}, with $m >0$ or
\item \emph{Case 2 disorder }\eqref{eq:less_subtle}: For bounded intervals $I,I' \Subset \RR$, $\lambda \in \RR\setminus \{0\}$ and $m \ge 0.$ If we assume in addition that $u$ is globally positive, i.e.
\begin{equation}
\label{eq:globally_pos}
 \inf_{\xi \in D^{\Gamma}} \inf_{z \in \CC} \sum_{\gamma \in \Gamma} u(z-\gamma-\xi_{\gamma}) >0,
 \end{equation}  
 then the IDS is almost surely Lipschitz continuous
\[ \vert N(I) - N(I')\vert \lesssim_{ I,I'} d_{\text{H}}(I,I').\]
In particular, the IDS is almost surely differentiable and its Radon-Nikodym derivative, the density of states (DOS), exists almost surely and is almost surely bounded.
 \end{itemize}

 \end{theo}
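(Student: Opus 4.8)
\emph{Proof strategy.} The plan is to derive the stated continuity of the IDS from a finite-volume \textbf{Wegner estimate}: we aim to show that there are $\kappa\in(0,1]$ and a constant $C_W(I)$, bounded for $I$ in a fixed compact set, such that
\begin{equation}
\label{eq:wegner-plan}
\mathbb{E}\bigl[\tr\indic_I(H_{\lambda,\Lambda_L})\bigr]\ \le\ C_W(I)\,\vert I\vert^{\kappa}\,\vert\Lambda_L\vert\qquad\text{uniformly in }L,
\end{equation}
for the intervals $I$ admissible in each case, together with the a priori \textbf{number-of-eigenvalues bound} $\mathbb{E}\bigl[\tr\indic_J(H_{\lambda,\Lambda_L})\bigr]\le C_J\vert\Lambda_L\vert$ for compact $J$. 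Here $\kappa=1$ (Lipschitz) whenever the single-site terms satisfy a covering bound, namely Case 2 under \eqref{eq:globally_pos} and, via the chiral mechanism below, Case 1; while in Case 2 under only \eqref{eq:less_subtle} one gets some $\kappa\in(0,1)$. Granting \eqref{eq:wegner-plan}, divide by $\vert\Lambda_L\vert$ and send $L\to\infty$: the limit exists almost surely and equals the non-random density of states measure $\rho$ recalled above, so $N(I)=\rho(I)\le C_W(I)\vert I\vert^{\kappa}$; by monotonicity of $N$, $\vert N(I)-N(I')\vert\le\rho(I\triangle I')$, and $I\triangle I'$ is covered by two intervals of length at most $d_H(I,I')$, whence $\vert N(I)-N(I')\vert\lesssim_{I,I'}d_H(I,I')^{\kappa}$, which is the asserted H\"older (resp.\ Lipschitz) estimate. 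The relaxation variables $\xi_\gamma$ require no extra work: all positivity hypotheses on $u$ are uniform in $\xi\in D^{\Gamma}$, so one freezes $\xi$, proves \eqref{eq:wegner-plan}, and integrates.

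\emph{The number-of-eigenvalues bound.} When $\lambda$ is small and $J\subset[-k_+,k_+]$ (the situation in Case 1) this is immediate from the spectral gap: $H_{\Lambda_L}(\alpha)$ is $L\Gamma$-periodic, so $\Spec H_{\Lambda_L}(\alpha)\subset\Spec H(\alpha)$ and the gap \eqref{eq:spectral_gap} persists on $\Lambda_L$; a perturbation of size $\vert\lambda\vert\sup_X\Vert V_X\Vert_\infty$ then keeps the edges of $J_{\pm}$ in the resolvent set, so $\tr\indic_{J_{\pm}}(H_{\lambda,\Lambda_L})=\tr\indic_{\{\pm m\}}(H_{\Lambda_L})\le C\vert\Lambda_L\vert$, the last bound because a flat band on $\CC/\Gamma$ has multiplicity $\propto L^2$ on $\CC/(L\Gamma)$. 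For general bounded $J$ (needed in Case 2) we use a Weyl-type count instead: $H_{\lambda,\Lambda_L}^2$ is, up to a uniformly bounded perturbation, block-diagonal with the elliptic second-order operators $D(\alpha)^*D(\alpha)+m^2$ and $D(\alpha)D(\alpha)^*+m^2$, so min-max together with the standard heat-kernel volume bound gives $\#\{j:\vert E_j\vert\le R\}\le C_R\vert\Lambda_L\vert$, uniformly in the disorder.

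\emph{Wegner in Case 2} is the classical Combes--Hislop/Kotani--Simon spectral-averaging argument, transplanted to the matrix operator $H_{\lambda,\Lambda_L}$. One has the monotonicity $\partial_{X_\gamma}H_{\lambda,\Lambda_L}=\lambda\,(u\indic_{\Lambda_L})(\bullet-\gamma-\xi_\gamma)\ge0$. Under \eqref{eq:globally_pos} the single-site terms satisfy a covering bound, $\sum_{\gamma\in\widetilde\Lambda_L}(u\indic_{\Lambda_L})(\bullet-\gamma-\xi_\gamma)\ge c\,I_4$ (using the periodic boundary conditions to recover the full $\Gamma$-sum), and averaging $\tr\indic_I(H_{\lambda,\Lambda_L})$ over $(X_\gamma)$ against the bounded density $g$ yields \eqref{eq:wegner-plan} with $\kappa=1$, i.e.\ the Lipschitz bound. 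Under only \eqref{eq:less_subtle} there is no covering, and one injects the uniform lower bound $u\ge\varepsilon\indic_{B_\varepsilon(z_0)}$ into a quantitative unique continuation principle for the first-order elliptic operator $H$ (which has real-analytic coefficients), in the manner of Combes--Germinet--Klein, obtaining \eqref{eq:wegner-plan} with some $L$-independent $\kappa\in(0,1)$; this gives H\"older continuity. The one delicate point here is $L$-independence of the unique-continuation constants, which follows from the periodicity of the geometry.

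\emph{Wegner in Case 1} ($m>0$, $I\subset[-k_+,k_+]$) is the main obstacle, precisely because $u=\bigl(\begin{smallmatrix}Y&Z^{*}\\ Z&-Y\end{smallmatrix}\bigr)$ is indefinite: $\partial_{X_\gamma}H_{\lambda,\Lambda_L}$ has no sign and individual eigenvalue branches are not monotone in the couplings. The remedy uses the chiral block structure and moves along the \emph{collective} direction $t\mapsto X+t\mathbf 1$. First, as in the number-of-eigenvalues step, $\Spec H_{\lambda,\Lambda_L}\cap[-k_+,k_+]\subset J_-\cup J_+$ for $\lambda$ small, so it suffices to treat $I\subset J_+$ and $I\subset J_-$ separately. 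Second, a Schur-complement reduction of $H_{\lambda,\Lambda_L}-E$ in the chiral decomposition $L^2(\Lambda_L;\CC^4)=L^2(\Lambda_L;\CC^2)\oplus L^2(\Lambda_L;\CC^2)$, combined with the spectral gap $E_{\operatorname{gap}}(\alpha)$ of $D(\alpha)$ (which for a generic magic angle survives the periodic truncation and small disorder, since $\Spec_{L^2(\Lambda_L)}\bigl(D(\alpha)^*D(\alpha)\bigr)\subset\{0\}\cup[E_{\operatorname{gap}}(\alpha)^2,\infty)$), shows that any normalized eigenfunction $\psi_j=(a_j,b_j)$ with $E_j$ within $O(\lambda)$ of $+m$ has $\Vert b_j\Vert=O(\sqrt\lambda)$, and symmetrically $\Vert a_j\Vert=O(\sqrt\lambda)$ near $-m$. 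Third, therefore the collective-coupling derivative of such an eigenvalue satisfies, with $c_j=a_j$ on $J_+$ and $c_j=b_j$ on $J_-$,
\begin{equation}
\label{eq:collective-plan}
\begin{aligned}
\frac{d}{dt}E_j
&=\lambda\,\Bigl\langle\psi_j,\ \textstyle\sum_{\gamma\in\widetilde\Lambda_L}(u\indic_{\Lambda_L})(\bullet-\gamma-\xi_\gamma)\,\psi_j\Bigr\rangle\\
&=\pm\lambda\,\Bigl\langle c_j,\ \textstyle\sum_{\gamma\in\widetilde\Lambda_L}(Y\indic_{\Lambda_L})(\bullet-\gamma-\xi_\gamma)\,c_j\Bigr\rangle+O(\lambda^{3/2}),
\end{aligned}
\end{equation}
which by the hypothesis $\inf_{\xi}\inf_z\sum_\gamma Y(z-\gamma-\xi_\gamma)>0$ in \eqref{eq:subtle} is bounded below in modulus by $\lambda c/2$ with a \emph{fixed} sign ($+$ on $J_+$, $-$ on $J_-$) once $\lambda$ is small. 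This uniform monotone flow of the eigenvalue counting function along $\mathbf 1$ lets us run the collective spectral-averaging (vector-field) argument, giving \eqref{eq:wegner-plan} with $\kappa=1$. The genuinely delicate step is the second one: making the bounds $\Vert b_j\Vert,\Vert a_j\Vert=O(\sqrt\lambda)$ uniform in $L$ and in the disorder, for which the flat-band gap of \cite{bhz2,bhz4} together with elliptic a priori estimates must be transported to $H_{\lambda,\Lambda_L}$; once this is secured the remainder is routine.
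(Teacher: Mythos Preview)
Your overall architecture---derive IDS regularity from a finite-volume NE/Wegner estimate and then pass to the limit---is the paper's, and your central structural observation for Case~1 (eigenfunctions with energy near $+m$ have second block of size $O(\sqrt\lambda)$, by the chiral structure and the flat-band gap) is precisely the mechanism the paper isolates; compare your Schur-complement step with the paper's norm-continuity argument for $\indic_\Delta(H_{\lambda,\Lambda_L})$ leading to $\|\varphi_2\|\le\varepsilon(\lambda)\|\varphi_1\|$. For Case~2 with global positivity~\eqref{eq:globally_pos} your Combes--Hislop spectral-averaging argument also coincides with the paper's (their Subsection~3.3).

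The substantive divergence, and the gap, is in how you convert the block-concentration into the actual Wegner exponent in Case~1. You invoke the collective vector-field (Hislop--Klopp) argument and claim $\kappa=1$. That argument, however, ultimately integrates by parts in the coupling variables and therefore needs the single-site density $g$ to have a weak derivative (e.g.\ $g\in C^1_c$ in Hislop--Klopp); the paper only assumes $g\in L^\infty$ with compact support, and under that hypothesis alone collective monotonicity does not yield a linear Wegner bound. This is why the theorem asserts only H\"older continuity in Case~1. The paper's route is different: from the block-concentration it first proves the operator inequality $\indic_{\tilde\Delta}(H_{0,\Lambda_L})\lesssim\indic_{\tilde\Delta}(H_{0,\Lambda_L})\tilde V_{X,\Lambda_L}\indic_{\tilde\Delta}(H_{0,\Lambda_L})$ by a real-analytic unique-continuation argument on the unperturbed Bloch eigenfunctions, reduces to $\mathbf E\,\tr(\indic_\Delta(H_{\lambda,\Lambda_L})\tilde V_{X,\Lambda_L})$, and then bounds this via the \emph{spectral shift function} method of Combes--Hislop--Nakamura: writing $\tr(\tilde V f'_\Delta(H_\lambda))=\sum_\gamma\partial_{X_\gamma}\tr f_\Delta(H_\lambda)$, averaging each $X_\gamma$ against $g\in L^\infty$ yields a telescoped difference $\tr(f_\Delta(H(X_\gamma{=}1)))-\tr(f_\Delta(H(X_\gamma{=}0)))$, which after an $\arctan$-regularization is controlled by $\|\xi\|_{L^p}\le\|\cdot\|_{1/p}^{1/p}$ and H\"older's inequality, producing $|\Delta|^\beta$ with $\beta<1$. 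Your display~\eqref{eq:collective-plan} is morally equivalent to the paper's reduction step, but you are missing this spectral-shift endgame; without it (or without strengthening the hypothesis on $g$) the Case~1 claim is incomplete. Similarly, for Case~2 without global positivity the paper does not run a quantitative UCP \`a la Combes--Germinet--Klein as you suggest; it uses the same spectral-shift-function argument, with the sign condition $u\ge0$ replacing the block-concentration step, and the H\"older exponent again comes from the $L^p$ bound on $\xi$ rather than from UCP constants.
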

 The above results follow directly from the subsequent estimate on the number of eigenvalues (NE) that imply Wegner estimates \eqref{eq:Wegner}. 
 \begin{prop}[NE]
 \label{prop:Wegner}
 Under the assumptions of Theorem \ref{theo:continuity}, we find that there is $\beta \in (0,1)$ such that
\[\mathbf E \tr(\indic_{I}(H_{\lambda,\Lambda_L})) \lesssim_{\beta} \vert  I \vert^{\beta} \vert \Lambda_L \vert.\]
If in Case 2 we assume in addition that \eqref{eq:globally_pos} holds, then we may take $\beta=1$ such that
\[ \mathbf E \tr(\indic_{I}(H_{\lambda,\Lambda_L})) \lesssim \vert I \vert \vert \Lambda_L \vert. \]
\end{prop}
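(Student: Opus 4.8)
The plan is to adapt the spectral averaging / Combes--Thomas-style Wegner argument to the off-diagonal Dirac structure of the Bistritzer--MacDonald Hamiltonian, splitting into the two disorder cases. In both cases the strategy is to bound $\mathbf E \tr(\indic_I(H_{\lambda,\Lambda_L}))$ by an average over the $X_\gamma$-variables of a resolvent expression, and then to exploit a spectral-shift or positivity mechanism to convert the $X$-average into the Lebesgue measure of $I$ (up to a power $\beta$). First I would fix the domain $\Lambda_L$, condition on the displacement variables $\xi = (\xi_\gamma)$, and work with the finite-dimensional-type random operator $H_{\lambda,\Lambda_L}$ restricted to the energy window where the relevant interval lives (using the gap \eqref{eq:random_spec} in Case 1 to guarantee $\tr(\indic_I(H_{\lambda,\Lambda_L}))$ only picks up the perturbed flat band). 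A standard reduction (e.g. via $\tr\indic_I(H) \le C|I|\,\mathbf E_{\text{spectral avg}}\operatorname{Im}\tr(H - E - i|I|)^{-1}$, integrated over $E \in I$) reduces everything to estimating, in expectation over $X$, the imaginary part of a suitably localized resolvent, multiplied by $|\Lambda_L|$ to account for the number of lattice sites contributing.

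For \emph{Case 2} (signed disorder, $u \ge 0$), the eigenvalues of $H_{\lambda,\Lambda_L}$ are monotone in each $X_\gamma$ because $\partial_{X_\gamma} H_{\lambda,\Lambda_L} = \lambda (u\indic_{\Lambda_L})(\bullet - \gamma - \xi_\gamma) \ge 0$. This is the textbook setting: I would use the spectral averaging lemma (Combes--Hislop--Klopp, or the Kotani--Simon argument) against the bounded density $g$. The local positivity \eqref{eq:less_subtle} gives a lower bound on $\sum_\gamma u(\bullet - \gamma - \xi_\gamma)$ on a ball, which — combined with a finite-overlap / covering argument over the $\gamma \in \widetilde\Lambda_L$ — yields the linear bound $\mathbf E\tr\indic_I(H_{\lambda,\Lambda_L}) \lesssim |I|\,|\Lambda_L|$ once \eqref{eq:globally_pos} upgrades the pointwise positivity to a uniform-in-$z$ lower bound on the total potential; without \eqref{eq:globally_pos} one only controls a sub-region and recovers the weaker exponent $\beta < 1$ via an interpolation/Hölder argument between the trivial bound and the partial linear bound. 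The displacement average over $\xi$ then passes through harmlessly since all estimates are uniform in $\xi \in D^\Gamma$.

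For \emph{Case 1} (off-diagonal hermitian disorder $u = \begin{pmatrix} Y & Z^* \\ Z & -Y\end{pmatrix}$, which is indefinite), monotonicity fails and this is the main obstacle. The idea — signalled in the text ("we overcome this difficulty by leveraging the off-diagonal structure") — is to square the Dirac operator: since $H_{\lambda,\Lambda_L}$ near energy $\pm k_\pm$ is (up to the gap) unitarily related to a problem governed by $D_{k,\lambda}(\alpha)^* D_{k,\lambda}(\alpha)$ or its partner, and since $\mathbf 1_{J_+}(H_{\lambda,\Lambda_L})$ corresponds to a spectral window of $H_{\lambda,\Lambda_L}^2 - m^2$ near $0$, I would transfer the Wegner estimate to the nonnegative operator $H_{\lambda,\Lambda_L}^2$. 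Differentiating $H_{\lambda,\Lambda_L}^2$ in $X_\gamma$ produces $\lambda\{H_{\lambda,\Lambda_L}, u_\gamma\}$ plus $\lambda^2 u_\gamma^2$; the anticommutator term is handled because the relevant diagonal block of $\{H, u_\gamma\}$ contains, via the mass term $mI_2$ on the diagonal of $H$, a contribution $2m\,\mathrm{diag}(Y_\gamma, -Y_\gamma)$ together with off-diagonal Dirac contributions — and the hypothesis $\inf \sum_\gamma Y(z - \gamma - \xi_\gamma) > 0$ is exactly what makes the net effect on the appropriate spectral subspace sign-definite (this is why Case 1 requires $m > 0$ and the interval confined to $[-k_+, k_+]$). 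I would make this precise by sandwiching $\partial_{X_\gamma}(H_{\lambda,\Lambda_L}^2)$ between multiples of $\sum_\gamma u_\gamma$-type positive operators on $\operatorname{Ran}\indic_{[-k_+,k_+]}$, for $\lambda$ small enough that the $\lambda^2$ and cross terms are dominated; then run the same spectral-averaging machinery as in Case 2 on $H_{\lambda,\Lambda_L}^2$, obtaining the Hölder exponent $\beta \in (0,1)$ after transferring back through the (locally bi-Lipschitz, away from $0$) map $E \mapsto \operatorname{sgn}(E)\sqrt{E^2 + \cdots}$ relating spectra of $H$ and $H^2$. The main technical care is in the operator-positivity bookkeeping for the anticommutator term and in confirming the constants are uniform over $\xi \in D^\Gamma$ and over $L$.
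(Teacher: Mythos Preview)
Your Case~2 sketch with global positivity is essentially what the paper does for the Lipschitz bound (spectral averaging \`a la Combes--Hislop), so that part is fine. The Case~1 argument, however, has a real gap, and the paper's route is quite different.

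First, a computation: with $H=\begin{pmatrix} mI_2 & D^*\\ D & -mI_2\end{pmatrix}$ and $u=\begin{pmatrix} Y & Z^*\\ Z & -Y\end{pmatrix}$ one finds that the diagonal blocks of $\{H,u\}$ are $2mY + D^*Z + Z^*D$ and $2mY + DZ^* + ZD^*$, \emph{not} $2m\operatorname{diag}(Y,-Y)$. More seriously, the pieces $D^*Z + Z^*D$ are first-order differential operators; even after restriction to $\operatorname{Ran}\indic_{[-k_+,k_+]}(H_\lambda)$ (where $D\varphi_1$ is controlled by the energy), these terms are the \emph{same order in $\lambda$} as the $2mY$ term you want to dominate with. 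So ``for $\lambda$ small the cross terms are dominated'' does not follow: nothing in your scheme makes the $D$-$Z$ contributions small relative to $2mY$. Without that, you have no sign-definiteness of $\partial_{X_\gamma}(H_\lambda^2)$ on the spectral subspace and the squaring trick stalls. A second issue is the exponent: spectral averaging produces a Lipschitz bound, not a H\"older one, so even if the positivity went through you have not explained where $\beta\in(0,1)$ comes from in Case~1 (or in Case~2 without \eqref{eq:globally_pos}).

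The paper does not square $H$. Instead it works directly with $H_{\lambda,\Lambda_L}$ and exploits a perturbative fact about the spectral projection: since $\indic_\Delta(H_{0,\Lambda_L})=P_{\ker(D(\alpha)_{\Lambda_L})}\oplus 0$ and the projection depends norm-continuously on $\lambda$ through the gap, any $\varphi\in\operatorname{Ran}\indic_\Delta(H_{\lambda,\Lambda_L})$ satisfies $\|\varphi_2\|\le \varepsilon(\lambda)\|\varphi_1\|$ with $\varepsilon(0)=0$. This is what makes the indefinite $u$ effectively positive after tracing against $\indic_\Delta(H_{\lambda,\Lambda_L})$: the $-Y$ block sees only the small component $\varphi_2$. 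With this in hand the paper reduces to $\tr(\indic_\Delta(H_{\lambda,\Lambda_L}))\lesssim \tr(\indic_\Delta(H_{\lambda,\Lambda_L})\tilde V_{X,\Lambda_L})$, then uses $\partial_{X_\gamma}\tr(f_\Delta(H_{\lambda,\Lambda_L}))$ together with the Birman--Krein formula and the $L^p$ bound on the spectral shift function (Combes--Hislop--Nakamura) to extract the H\"older exponent $\beta\in(0,1)$. The SSF step is precisely what supplies $\beta<1$; spectral averaging alone would not.
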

\subsection{Mobility edges}
In the works of Germinet--Klein \cite{GK01,GK02,GK04} dynamical measures of transport have been introduced. The dynamical localization implies a strong form of decaying eigenfunctions; see Def. \ref{defi:Sudec}. 
To measure \emph{dynamical localization/delocalization} one introduces the following Hilbert-Schmidt norm
\begin{equation}
\label{eq:HS_quantity}
  M_{\lambda}(p, \chi, t) = \left\lVert \langle \bullet \rangle^{p/2} e^{-itH_{\lambda}} \chi(H_{\lambda})\indic_{\CC/\Gamma_3} \right\rVert_2^2,
  \end{equation}
  where $\Gamma_3:=\Gamma/3$, $\langle z \rangle:=(1+\vert z \vert^2)^{1/2}$, for some non-negative $\chi \in C_c^{\infty}(\RR)$ with time average
\[ \mathcal M_{\lambda}(p,\chi,T) = \frac{1}{T} \int_0^{\infty} \mathbf E \Big(M_{\lambda}(p, \chi, t)\Big) e^{-t/T} \ dt.\]

Recall that {$\frac{1}{T} \int_0^{\infty} t^p e^{-t/T} \ dt = T^p \Gamma(p+1)$ and $\frac{1}{T} \int_0^{\infty} e^{at} e^{-t/T} \ dt = \frac{e^{a T-1}-1}{a T-1}$, for $aT\neq 1$} to see that $\mathcal M_{\lambda}(p,\chi,T)$ indicates a time-averaged power scaling of $M_{\lambda}(p, \chi, t),${ at least for polynomial scalings of \eqref{eq:HS_quantity}.} Here, $ M_{\lambda}(p, \chi, t) $ measures the spread of mass in a spectral energy window of the Hamiltonian from the origin under the free Schr\"odinger evolution.

We shall then show that the random Hamiltonian \eqref{eq:Anderson_model} exhibits diffusive behavior in the vicinity of magic angles.
\begin{theo}[Dynamical delocalization]
\label{theo:transport}
Let $\alpha_*$ be a generic magic angle as in Definition \ref{def:generic_mag_angle}.
We consider a coupling constant $\lambda \in (-\varepsilon(m,\alpha_*),\varepsilon(m,\alpha_*))$, where $\alpha \in (\alpha_*- \delta(m,\alpha_*),\alpha_*+ \delta(m,\alpha_*))$, with mass $m \ge 0$ and sufficiently small $\varepsilon(m,\alpha_*),\delta(m,\alpha_*)>0$. The random Hamiltonian $H_{\lambda}$ demonstrates diffusive behavior for $m>0$ at no less than two energies $E_{\pm}(\lambda)$ located near $\pm m$, respectively, and at no less than one energy $E(\lambda)$ for $m=0$. 
Finally, for every $\chi \in C_c^{\infty}$ that equals one in an open interval $J$ containing at least one of $E_{\pm}(\lambda)$ and $p>0$ we have for all $T>0$
\[ \mathcal M_{\lambda}(p,\chi,T) \gtrsim_{p,J} T^{\frac{p}{4}-6}.\]
\end{theo}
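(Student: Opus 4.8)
The plan is to adapt the Germinet--Klein--Schenker delocalization machinery \cite{GKS} to our setting, where the role of Landau levels is played by the (individually Chern-nontrivial) flat bands at $\pm m$. The argument proceeds by contradiction: if $\mathcal M_\lambda(p,\chi,T)$ grew slower than a suitable power of $T$ at \emph{every} energy in the interval $J_+$ (say), then one would have a region of complete dynamical localization there, and hence — via the bootstrap multiscale analysis of Germinet--Klein \cite{GK01,GK04}, which applies because Proposition \ref{prop:Wegner} supplies the Wegner estimate (W) and the number-of-eigenvalues bound (NE), and because the spectral gaps \eqref{eq:random_spec} supply the needed initial-scale estimate — a genuine region of dynamical localization. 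In such a region the Hall conductivity $\Omega(P)$ is constant (it is locally constant in energy throughout regions of dynamical localization, by the results underlying \cite{GKS}) and, being a Chern number up to the factor $-2\pi i$, is quantized. The strategy is then to show that $\Omega$ must \emph{jump} as the Fermi energy is swept from just below $-k_+$ to just above $k_+$: below the lower flat band one has the Chern number $0$, while crossing the flat band at $-m$ changes it by $\mp 1$ because of \eqref{eq:Chern}, and this jump is stable under the small disorder $\lambda V_X$. A jump in a locally-constant-on-localization-regions quantity forces at least one energy $E_-(\lambda)$ near $-m$ (and symmetrically $E_+(\lambda)$ near $+m$, and a single $E(\lambda)$ when $m=0$) at which dynamical localization \emph{fails}, which is exactly dynamical delocalization.

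The quantitative lower bound $\mathcal M_\lambda(p,\chi,T)\gtrsim_{p,J} T^{p/4-6}$ is then extracted from the failure of localization by the standard transport-exponent argument of \cite{GK04,GKS}: the negation of the fast-decay estimates in the multiscale analysis at the delocalization energy yields a lower bound on the spread of mass on a sequence of scales $L_k\to\infty$, of the form $M_\lambda(p,\chi,t)\gtrsim L_k^{p}$ for times $t$ comparable to a polynomial in $L_k$; feeding this into the Laplace-transform definition of $\mathcal M_\lambda$ and optimizing over the scale matched to $T$ produces a power of $T$. The specific exponent $p/4-6$ comes from the exponents appearing in the GKS bound (their delocalization estimate gives a transport exponent at least $1/4$, with a fixed polynomial loss absorbed into the additive constant $-6$); I would track the constants in their Theorem and simply verify that the same bookkeeping goes through with the matrix-valued first-order operator $H_\lambda$ in place of the Landau Hamiltonian.

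Two points require genuine work rather than citation. First, the abstract machinery of \cite{GK01,GK04,GKS} was developed for Schrödinger operators; porting it to the Dirac-type operator \eqref{eq:Hamiltonian} needs the deterministic inputs — a Combes--Thomas estimate, finite-volume resolvent bounds, the generalized eigenfunction expansion, and the trace-class/Hilbert--Schmidt estimates defining $M_\lambda$ — to be re-established for a first-order matrix operator; these are routine but not automatic, and the $\indic_{\CC/\Gamma_3}$ localization in \eqref{eq:HS_quantity} (rather than a single lattice cell) has to be threaded through. Second, and this is the main obstacle: one must show the Hall conductivity genuinely jumps, i.e. that the small disorder does not wash out the flat-band contribution to the Chern number. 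For $m>0$ this is comparatively safe because the two flat bands are separated by a gap of size $\gtrsim m$ and the bands are individually Chern-nontrivial by \eqref{eq:Chern}, so for $\lambda$ small the spectral projections onto the disordered analogues of these bands stay close in norm and the Chern number is preserved; the delicate case is $m=0$, where the two flat bands touch and only the \emph{net} Chern number (which is zero) is a priori well defined. There one cannot split the bands by a spectral gap, and the persistence of a mobility edge has to be attributed, as the abstract stresses, to the $C_{2z}T$ symmetry forcing nontrivial sublattice transport — concretely, one works with the sublattice-graded (chiral) structure of \eqref{eq:Hamiltonian}, defines a relative/sublattice Hall conductivity for the off-diagonal blocks $D(\alpha)$ and $D(\alpha)^*$ whose difference is $\operatorname{Cher}(P_{\ker D(\alpha)^*})-\operatorname{Cher}(P_{\ker D(\alpha)})=2\neq 0$, and shows this graded invariant is what jumps and is what is protected under disorder respecting the chiral/$C_{2z}T$ structure. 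Making that invariant well-defined in the disordered, gapless-at-zero setting and proving its stability is where the bulk of the effort lies; the rest is adaptation of known arguments.
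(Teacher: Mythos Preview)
Your proposal is correct and follows essentially the same route as the paper: contradiction via a nonvanishing Hall conductivity of the spectral projection onto the island around the flat band, stability of that conductivity under small disorder (via the trace-of-commutators argument and a compact-support approximation), and the quantitative lower bound quoted from \cite[Theo.~2.10]{GK04}. The deterministic inputs (SGEE, SLI, EDI, NE, W) are indeed what need checking, and the paper does exactly that.

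Two points where the paper's implementation differs from what you sketch. First, a technical device you omit: the paper works with $S_\lambda:=H_\lambda^3$ rather than $H_\lambda$ itself, because in two dimensions $H^3\hookrightarrow L^2$ is trace-class while $H^1\hookrightarrow L^2$ is not; this is what makes the Combes--Thomas and trace estimates go through cleanly for a first-order operator. Second, and more substantively, your handling of $m=0$ is slightly off. You anticipate needing the disorder to respect the chiral/$C_{2z}T$ structure in order to protect the sublattice invariant, but the paper's construction avoids this restriction entirely. The partial Chern numbers $\Omega_i(P)$ are defined not by projecting $P$ onto sublattices (which would indeed be fragile under non-chiral disorder) but by replacing the switch functions $\Theta_j$ with $\hat\Theta_j(i):=\pi_i\Theta_j$; since $[\hat\Theta_1(i),\hat\Theta_2(i)]=0$, the commutator identity $\Omega_i(P)=\tr([P\hat\Theta_1(i)P,P\hat\Theta_2(i)P])$ still holds, and the same trace-of-commutators stability argument shows $\Omega_i(P_\lambda)=\Omega_i(P_0)$ for \emph{arbitrary} small bounded disorder. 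At $\lambda=0$ the block-diagonality of $H_0^2$ gives $\Omega_i(P_0)=\pm i/(2\pi)\neq 0$, and SUDEC forces $\Omega_i=0$ by the same finite-rank decomposition as for $\Omega$. So the ``bulk of the effort'' you flag is in fact absorbed by a good choice of invariant rather than by any symmetry hypothesis on the randomness.
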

{ The bound in Theorem \eqref{theo:transport} is a consequence of the transport
bounds in the region of dynamical localization. Dynamical localization will be proven in Section \ref{sec:Dyn_loc}.}
In general, we do not have a very precise understanding of how close $E_{\pm}(\lambda)$ are to $\pm m$ and how large the spectral range of dynamical delocalization is. However, by choosing a suitable disorder (of fixed support, that is, normalized strength $\lambda$, but rescaled probability), we can show that the mobility edges $E_{\pm}(\lambda)$ can be located arbitrarily close to the energies of the flat bands of the unperturbed Hamiltonian. This is discussed in Theorem \ref{theo:localization}, when $\alpha \in \mathcal A$ is a generic magic angle.
\begin{rem}
Transport behavior can also be characterized by the $p$-dependence of the estimate in the previous theorem in terms of local transport exponents
\[ \beta_{\lambda}(E) = \sup_{p>0} \inf_{\substack{I \ni E \\ I \text{ open }}} \sup_{\chi \in C_c^{\infty}(I;[0,\infty))} \liminf_{T \to \infty} \frac{\log_+ \mathcal M_{\lambda}(p,\chi,T)}{p \log(T)}.\]
The region of dynamical localization is then defined as the open set
\begin{equation}
\label{eq:DL}
\Sigma^{\operatorname{DL}} :=\{E \in \RR; \beta_{\lambda}(E)=0\},
\end{equation}
whereas the region of dynamical delocalization $\Sigma^{\operatorname{DD}}$ is defined as its complement.
A mobility edge is an energy $E \in \Sigma^{\operatorname{DD}} \cap \overline{\Sigma^{\operatorname{DL}}\cap \Sigma}.$ It follows from \cite[Theo.\ $2.10$, $2.11$]{GK04} that Theorem \ref{theo:transport} implies $\beta_{\lambda}(E_{\pm}(\lambda))>1/4.$ Theorem \ref{theo:localization} then proves the existence of mobility edges for the disordered Hamiltonian. 
\end{rem}

Although Theorem \ref{theo:transport} outlines the dynamical properties of the Hamiltonian, it is also important to explore a spectral-theoretic interpretation of transport and localization. The nature of the spectrum in the dynamically localized phase is captured by the concept of SUDEC, as stated in Definition \ref{defi:Sudec}. 

However, the presence of dynamically delocalized regimes, as described above, does not necessarily imply the existence of absolutely continuous (a.c.) or singular continuous (s.c.) spectrum. In particular, at magic angles, the Hamiltonian $H_0(m,\alpha)$ is known to exhibit an (infinitely degenerate) point spectrum at energies $\pm m$. Whether such phases can arise for our disordered Hamiltonian in the vicinity of the flat bands remains an open question. We conjecture that they do not.

As we will explain in the following, see Remark \ref{rem:uniform_decay}, the point spectrum of the Hamiltonian within an energy window containing the mobility edges, if it exists, cannot be \emph{too localized}. 

This can be made precise using the concept of generalized Wannier functions \cite{CMM19,MMP, LS21} { which applies to our setting due to the existence of Combes-Thomas estimates for the spectral projection.}

\begin{defi}[Wannier basis]
\label{def:Wannier}
Let $P$ be an orthogonal projection onto $L^2(\CC)$. We say an orthonormal basis $(\psi_{\beta})_{\beta \in I} \in L^2(\CC)$ for an index set $I \subset \mathbb N$ is an \emph{$s$-localized generalized Wannier basis} for $P$ for some $s>0$ if:
\begin{itemize}
\item $\overline{\operatorname{span}}(\psi_{\beta}) = \operatorname{ran}(P).$
\item There exists a universal $M<\infty$ and a collection of localization centers $(\mu_{\beta}) \subset \CC$ such that for all $\beta \in I$
\[ \int_{\CC} \langle z-\mu_{\beta} \rangle^{2s} \vert \psi_{\beta}(z)\vert^2 d\lambda(z) \le M, \text{ with } \lambda \text{ Lebesgue measure.}\]
\end{itemize}     
\end{defi}
Then we have the random Hamiltonian $H_{\lambda}:$
\begin{theo}[Slow decay; $m>0$]
\label{theo:spectral_deloc}
Under the assumptions of Theorem \ref{theo:transport}, we define the orthogonal projection $P_{\lambda}:= \indic_{J_{\pm}}(H_{\lambda})$ on $L^2(\CC)$ with $J_{\pm}$ as in \eqref{eq:ivals} for $m>0.$
For any $\delta>0$ and for any $\lambda \in (-\varepsilon(m),\varepsilon(m))$ with $\varepsilon(m)>0$ sufficiently small and independent of $\delta>0$, $P_{\lambda}$ does not admit a $1+\delta$-localized generalized Wannier basis. 

However, the projection admits a $1-\delta$-localized generalized Wannier basis for small disorder. 
\end{theo}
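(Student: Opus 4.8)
The plan is to prove the two halves separately. The first, non-existence, half is a localization-versus-delocalization dichotomy: an $(1+\delta)$-localized generalized Wannier basis for the spectral projection $P_\lambda = \indic_{J_\pm}(H_\lambda)$ would force the Hall conductivity $\Omega(P_\lambda)$ to vanish, contradicting the transport/topological input coming from Theorem \ref{theo:transport}. Concretely, I would argue as follows. First, recall from the discussion around \eqref{eq:Chern} and \eqref{eq:random_spec} that for $m>0$ and small $\lambda$ the intervals $J_\pm$ isolate the (perturbed) flat bands, and that the unperturbed projection $P_{\ker(D(\alpha))}$, resp. $P_{\ker(D(\alpha)^*)}$, carries Chern number $\mp 1$. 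The Hall conductivity $\Omega(P_\lambda)$ is well-defined and, by continuity/stability of the index under the norm-resolvent perturbation that opens the gaps bounding $J_\pm$ (using the Combes–Thomas estimates alluded to after Definition \ref{def:Wannier} to give trace-class control of the relevant commutators), equals $\mp 1/(2\pi i)$ — in particular it is nonzero. On the other hand, the key analytic fact I would invoke (this is the mechanism behind the result of \cite{GKS} that the whole argument adapts, and appears in the generalized-Wannier literature \cite{CMM19,MMP,LS21}) is: \emph{if $P$ admits an $s$-localized generalized Wannier basis with $s>1$, then $\Omega(P)=0$}. Combining these two statements gives the contradiction, so no $(1+\delta)$-localized basis can exist.

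The cleanest route to the ``$s>1 \Rightarrow \Omega(P)=0$'' step is to write $\Omega(P) = -i\,\widetilde\tr\big(P[[P,X_1],[P,X_2]]\big)$ (with $X_1,X_2$ the two coordinate multiplications, and $\widetilde\tr$ the trace per unit volume from \eqref{eq:reg_trace}) and expand $P$ in the Wannier basis: $P = \sum_\beta |\psi_\beta\rangle\langle\psi_\beta|$. The localization bound $\int \langle z-\mu_\beta\rangle^{2s}|\psi_\beta|^2 \le M$ with $s>1$ makes the matrix elements $\langle \psi_\beta, X_j \psi_{\beta'}\rangle$ decay fast enough in $|\mu_\beta-\mu_{\beta'}|$ that all the multiple sums defining the trace converge absolutely, and a discrete analogue of ``$\tr([A,B])=0$'' / antisymmetrization in $(X_1,X_2)$ forces the sum to vanish. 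One must be a little careful because the $X_j$ are unbounded and the localization centers $\mu_\beta$ are not assumed to form a lattice; here the uniform bound $M$ and $s>1$ (so that $\langle z-\mu_\beta\rangle^{2s}$ is summable against the off-diagonal decay in dimension $2$) are exactly what is needed, and the cutoff $\indic_{\CC/\Gamma_3}$ in the per-unit-volume trace keeps the centers of bounded multiplicity. I would quote this as a lemma, proven essentially as in \cite{LS21}, adapted to the matrix-valued, $\Gamma$-periodic-plus-disorder setting via the Combes–Thomas decay of $P_\lambda$.

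For the second half — existence of a $(1-\delta)$-localized basis for small disorder — the input is the \emph{dynamical localization} inside $J_\pm$ that is established (for $\lambda$ small, $m>0$) in Section \ref{sec:Dyn_loc} via the multi-scale analysis, together with the fact that $P_\lambda$ is a finite-rank-per-unit-volume spectral projection onto a region of complete localization. From SUDEC/strong dynamical localization one gets an exponentially (hence polynomially, to any order) localized eigenbasis of $H_\lambda$ restricted to $J_\pm$; these eigenfunctions, with localization centers taken to be e.g. their centers of mass, form the desired generalized Wannier basis, and the finite per-unit-volume rank plus the uniform decay give a uniform constant $M$ for any prescribed $s = 1-\delta$ (indeed for any $s$). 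The slightly delicate point is orthonormality and the ``$\overline{\operatorname{span}} = \operatorname{ran}(P_\lambda)$'' clause: one takes genuine eigenfunctions of the self-adjoint operator $H_\lambda$, which are automatically orthonormal across distinct eigenvalues and can be orthonormalized within the (at most finite-per-unit-volume) eigenspaces without destroying the decay, so this is routine.

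The main obstacle is the first step: making the identity $\Omega(P_\lambda) = \mp 1/(2\pi i)$ rigorous for the \emph{disordered} projection, i.e. showing that the gaps surviving in \eqref{eq:random_spec} together with the Combes–Thomas bounds give enough trace-class control to (a) define $\Omega(P_\lambda)$ at all and (b) identify it with the unperturbed Chern number by a norm-resolvent continuity/homotopy argument. Everything else — the Wannier-$\Rightarrow$-zero-conductivity lemma and the construction of the localized basis from dynamical localization — is adaptation of known arguments (\cite{GKS,LS21,GK01,GK02}) to the matrix-valued BM operator, and the bookkeeping there is standard.
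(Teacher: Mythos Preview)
Your non-existence argument is correct but takes a different route from the paper's. The paper does not work directly with $\Omega(P_\lambda)$; instead it reduces to $\lambda=0$ via the Kato--Nagy unitary: since $\|P_0-P_\lambda\|=\mathcal O(|\lambda|)$ there is an isometry $U$ with $UU^*=P_0$, $U^*U=P_\lambda$ whose Schwartz kernel is exponentially close to the identity, so a hypothetical $(1+\delta)$-localized Wannier basis for $P_\lambda$ transports to one for $P_0=P_{\ker(D(\alpha))}$ (resp.\ $P_{\ker(D(\alpha)^*)}$). Then \cite{LS21} is invoked only for the \emph{periodic} projection $P_0$, whose Chern number is known from \eqref{eq:Chern}. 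Your route---first establishing $\Omega(P_\lambda)\neq 0$ by the homotopy argument of Section~\ref{sec:mob}, then applying the \cite{LS21} mechanism directly to the disordered $P_\lambda$---also works, but imports more machinery (the full stability of the disordered Hall conductivity) than the paper actually needs for this statement.

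Your existence argument, however, has a genuine gap. You claim dynamical localization holds inside $J_\pm$ and propose to use the SUDEC eigenbasis. But the content of Theorem~\ref{theo:transport} is precisely that $H_\lambda$ exhibits dynamical \emph{delocalization} at some energy $E_\pm(\lambda)\in J_\pm$; Section~\ref{sec:Dyn_loc} proves delocalization, not localization, and the localization result (Theorem~\ref{theo:localization}) uses a different disorder model and only covers energies \emph{away} from $\pm m$, never all of $J_\pm$. Worse, if SUDEC held on all of $J_\pm$ and yielded a uniformly exponentially localized eigenbasis, that would be an $s$-localized Wannier basis for \emph{every} $s>0$, contradicting the first half of the theorem you just proved. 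The paper's argument for existence again runs through the Kato--Nagy isometry, in the reverse direction: \cite[Theo.~2.4]{MPPT} supplies a $(1-\delta)$-localized Wannier basis for the unperturbed periodic projection $P_0$ (using real-analyticity and covariance of $k\mapsto P_k(\alpha)$), and this basis is transported to $P_\lambda$ via $U^*$. No localization input for the disordered operator is used---or available---here.
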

In this article, we have not considered disorder that only perturbs the off-diagonal entries of the Hamiltonian \eqref{eq:Hamiltonian}, since no techniques to show Wegner estimates for such disorder are known, which are an essential ingredient of the multi-scale analysis.

Wegner estimates are however not needed to study the decay of Wannier functions and thus we shall consider such perturbations now, by looking at the Hamiltonian
\begin{equation}
\label{eq:chiral_disorder}
H_{\lambda} = \begin{pmatrix} m I_2 & (D(\alpha) + \lambda W)^*\\  D(\alpha) + \lambda W & -m I_2 \end{pmatrix}
\end{equation}
where $W  \in L^{\infty}(\CC;\CC^{2\times 2})$ is a (possibly random) potential which we assume without loss of generality to satisfy $\Vert W \Vert_{\infty}\le 1.$
The result of Theorem \ref{theo:spectral_deloc} cannot be directly extended to $m=0$, since the net Chern number of the Hamiltonian is zero. However, the square of the Hamiltonian \eqref{eq:chiral_disorder} exhibits a diagonal form 
\begin{equation}
\label{eq:chiral_disorder_square}
H_{\lambda}^2 = \operatorname{diag}( (D(\alpha) + \lambda W)^*(D(\alpha) + \lambda W)+m^2,(D(\alpha) + \lambda W)(D(\alpha) + \lambda W)^*+m^2).
\end{equation}
Thus, to capture the low-lying spectrum, we may study the projections 
\begin{equation}
\label{eq:projections}
\begin{split}
P_{+,\lambda} &:= \indic_{[0,\mu]}((D(\alpha) + \lambda W)^*(D(\alpha) + \lambda W)) \text{ and } \\
P_{-,\lambda} &:=\indic_{[0,\mu]}((D(\alpha) + \lambda W)(D(\alpha) + \lambda W)^*), 
\end{split}
\end{equation}
separately, where we dropped the $m \ge 0$, dependence as it does not affect the spectrum apart from a constant shift. We then have
\begin{theo}[Slow decay; $m\ge 0$]
\label{theo:spectral_deloc2}
Let $\mu<E_{\operatorname{gap}}(\alpha)^2/2$ with $E_{\operatorname{gap}}(\alpha)$ as in \eqref{eq:gap} and $P_{\pm,\lambda}$ be as in \eqref{eq:projections}. For any $\delta>0$ and for any $\lambda \in (-\varepsilon,\varepsilon)$ with $\varepsilon>0$ sufficiently small and independent of $\delta>0$, projection $P_{\pm,\lambda}$ does not admit a $1+\delta$-localized generalized Wannier basis. However, the projections admit a $1-\delta$-localized generalized Wannier basis for small disorder. 
\end{theo}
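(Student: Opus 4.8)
\textbf{Proof proposal for Theorem \ref{theo:spectral_deloc2}.}

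The plan is to reduce the statement to the topological obstruction of the unperturbed flat bands, namely the nonzero Chern numbers in \eqref{eq:Chern}, and then invoke the dichotomy between the existence of a localized Wannier basis and the vanishing of the Chern number. Recall that a projection $P$ on $L^2(\CC)$ that is covariant under a lattice and commutes with the relevant magnetic translations admits an $s$-localized generalized Wannier basis with $s>1$ only if its Chern number (Hall conductivity $\Omega(P)$) vanishes; this is precisely the content of the Wannier-obstruction results of \cite{CMM19,MMP,LS21}, whose hypotheses are met here because, by \eqref{eq:random_spec} adapted to the chiral block, the spectral interval $[0,\mu]$ with $\mu < E_{\operatorname{gap}}(\alpha)^2/2$ sits in a spectral gap of $(D(\alpha)+\lambda W)^*(D(\alpha)+\lambda W)$ for $\lambda$ small, so $P_{\pm,\lambda}$ are finite-rank-per-cell gapped spectral projections and Combes--Thomas estimates give them exponential off-diagonal decay. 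The heart of the argument is therefore to show $\operatorname{Cher}(P_{\pm,\lambda}) \neq 0$, and then the non-existence of a $1+\delta$-localized Wannier basis follows immediately.

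The first step is the case $\lambda = 0$: here $P_{+,0} = \indic_{[0,\mu]}(D(\alpha)^*D(\alpha))$ and $P_{-,0} = \indic_{[0,\mu]}(D(\alpha)D(\alpha)^*)$. By \eqref{eq:spectral_gap} with $m=0$ and the choice $\mu < E_{\operatorname{gap}}(\alpha)^2/2$, these are exactly the projections $P_{\ker D(\alpha)}$ and $P_{\ker D(\alpha)^*}$ onto the (flat-band) kernels, whose Chern numbers are $-1$ and $+1$ respectively by \eqref{eq:Chern}. The second step is a stability-under-disorder argument: since $[0,\mu]$ remains a spectral gap edge configuration for the chiral operators after perturbation by $\lambda W$ with $\|W\|_\infty \le 1$ and $\lambda$ small (here one uses that $\|(D(\alpha)+\lambda W)^*(D(\alpha)+\lambda W) - D(\alpha)^*D(\alpha)\| = \mathcal O(\lambda)$ on the relevant spectral subspace, together with the gap $E_{\operatorname{gap}}(\alpha)>0$), the projections $P_{\pm,\lambda}$ depend norm-continuously on $\lambda$. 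The Hall conductivity $\Omega(P)$, defined via the usual trace-per-unit-volume formula \eqref{eq:Chern_number}, is an integer-valued (hence locally constant) function of such norm-continuous families of gapped, exponentially-localized covariant projections — this is the standard homotopy invariance of the Chern number in the non-commutative / index-theoretic framework. Hence $\operatorname{Cher}(P_{\pm,\lambda}) = \operatorname{Cher}(P_{\pm,0}) = \mp 1 \neq 0$ for all small $\lambda$, and the Wannier obstruction applies.

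For the positive direction — that $P_{\pm,\lambda}$ \emph{does} admit a $1-\delta$-localized generalized Wannier basis for small disorder — the plan is to use the Combes--Thomas / Helffer--Sjöstrand exponential decay of the integral kernel of $P_{\pm,\lambda}$ (valid because $[0,\mu]$ lies in a spectral gap) and then run the construction of generalized Wannier functions from exponentially localized projections: one can, e.g., localize $P_{\pm,\lambda}$ against a partition of unity subordinate to lattice translates of a fundamental cell and orthonormalize, or invoke the general theorem that any spectral projection with exponentially decaying kernel admits an $s$-localized Wannier basis for every $s$ \emph{up to} the topological threshold. Since for a rank-one-per-cell (or finite-rank-per-cell) bundle with nonzero Chern number the sharp threshold is exactly $s=1$ (cf. the critical exponent in \cite{LS21}), any $s = 1-\delta$ is attainable; this is where the quantitative decay rate from Combes--Thomas enters. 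The main obstacle I anticipate is technical rather than conceptual: verifying carefully that the abstract Wannier-obstruction and Wannier-construction theorems of \cite{CMM19,MMP,LS21}, which are typically stated for Schrödinger-type projections on $L^2(\RR^d)$ or tight-binding models, apply verbatim to our matrix-valued first-order (Dirac-type) chiral operator $D(\alpha)+\lambda W$ on $L^2(\CC;\CC^2)$ — in particular checking the covariance structure under $\mathscr L_{\mathbf a}$ and $\mathscr C$ and that the disorder $W$ preserves the ergodic/covariant framework — and, in the $\lambda\neq 0$ random case, ensuring the constant $M$ in Definition \ref{def:Wannier} and the localization centers can be chosen uniformly (almost surely), which is exactly what the exponential Combes--Thomas bound, uniform in the disorder configuration, delivers.
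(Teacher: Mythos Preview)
Your approach is correct in spirit but takes a different route from the paper. The paper does not attempt to define or track the Chern number of the disordered projection $P_{\pm,\lambda}$; instead it reduces everything to $\lambda=0$ via the Kato--Nagy intertwiner. Since $\|P_{\pm,\lambda}-P_{\pm,0}\|=\mathcal O(|\lambda|)$ by the gap and the resolvent identity, there is a partial isometry $U$ with $U^*U=P_{\pm,\lambda}$, $UU^*=P_{\pm,0}$ whose integral kernel decays exponentially off the diagonal (cf.\ \cite[Lemma 8.5]{CMM19}); conjugation by $U$ then carries a hypothetical $(1+\delta)$-localized Wannier basis for $P_{\pm,\lambda}$ to one for $P_{\pm,0}=P_{\ker D(\alpha)}$ (resp.\ $P_{\ker D(\alpha)^*}$), contradicting \eqref{eq:Chern} via \cite{LS21}. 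For the positive direction the paper likewise invokes \cite{MPPT} only for the \emph{periodic} projection $P_{\pm,0}$ and transfers the resulting $(1-\delta)$-localized basis back through $U$. Your route --- stability of the Hall conductivity $\Omega(P_{\pm,\lambda})$ under norm-continuous deformation, then applying the obstruction of \cite{LS21} directly at $\lambda\neq 0$ --- also works, but it requires you to check that $\Omega$ is well-defined and integer-valued for the non-periodic $P_{\pm,\lambda}$, and that the construction of an almost-optimal Wannier basis goes through for a non-periodic gapped projection; you correctly flag both as the main technical burdens. The paper's Kato--Nagy transfer sidesteps these entirely, since the obstruction and the construction are only ever applied to the periodic object $P_{\pm,0}$, which also makes your worry about the covariance structure under $\mathscr L_{\mathbf a}$ for general $W$ moot.
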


We make a few observations related to Theorem \ref{theo:spectral_deloc} and the notion of Wannier bases. First, these theorems imply a lower bound on the uniform decay of eigenfunctions for the random Hamiltonian. In particular, if the random Hamiltonian exhibits a pure point spectrum, then the decay cannot be \emph{too fast} in a uniform sense. This should be compared with the notion of SUDEC, see Def. \ref{defi:Sudec} which one obtains by applying multiscale analysis.
In particular, one has
\begin{rem}[Lower bound on uniform eigenfunction decay]
\label{rem:uniform_decay}
If the Hamiltonian only exhibits point spectrum in the interval $I$, for which the associated spectral projections does not admit a $1+\delta$ generalized Wannier basis, then we can choose an orthonormal basis of eigenfunctions $(\psi_{\beta})$ such that $\overline{\operatorname{span}}(\psi_{\beta}) = \operatorname{ran}(P)$ and any sequence of localization centers $\mu_{\beta}$
\[ \sup_{\beta } \int_{\CC} \langle z-\mu_{\beta} \rangle^{2+\delta} \vert \psi_{\beta}(z)\vert^2 \ dz = \infty .\] 
In this sense, Theorem \ref{theo:spectral_deloc} gives a lower-bound on the decay of eigenfunctions in case that the random Hamiltonian exhibits only pure point spectrum.  
\end{rem}

\smallsection{Outline of article}
\begin{itemize}
\item In Section \ref{sec:stability}, we focus on Question \ref{que0} and study the (in)stability of magic angles. 
\item In Section \ref{sec:IDSWeg}, we turn to Question \ref{que1} and study the regularity of the integrated density of states by stating an estimate on the number of eigenvalues (NE) under Assumption \ref{ass:Anderson}.
\item In Section \ref{sec:mob}, we derive the existence of a mobility edge in a neighborhood of perturbed flat bands.
\item In Section \ref{sec:spec_deloc}, we prove Theorem \ref{theo:spectral_deloc}.
\end{itemize}

\section{(In)stability of magic angles} 
\label{sec:stability}
In this section, we study Question \ref{que0} and derive (in)-stability bounds on magic angles under perturbations. 
We recall the definition of the compact Birman-Schwinger operator $ T_{k } $ \eqref{eq:magic} with $ k = ( \omega^2 k_1 -
 \omega k_2 )/\sqrt 3$, where $ ( k_1, k_2 ) \in \mathbb R^2 \setminus (3 \ZZ^2 + \{ (0,0), (-1,-1) \})$. Recall that this operator is defined as 
\[   T_{k} :=(2D_{\bar z}-k)^{-1} \begin{pmatrix} 0
  &  U(z)  \\
U(-z) & 0
\end{pmatrix} :  L^2_{0}  (  \mathbb C/\Gamma ; \mathbb C^2 ) \to 
( H^1 \cap L^2_{0} ) (  \mathbb C/\Gamma ; \mathbb C^2 ) ,  \]
where
\begin{equation*}
 L^2_{ p } (  \mathbb C/\Gamma; \mathbb C^2  ) :=
\Big\{ u \in  L^2 (  \mathbb C/\Gamma, \CC^2 ) : 
\mathscr L_{\mathbf a } u(z) = e^{ 2 \pi i ( a_1 p + a_2 p ) } u(z+\mathbf a) , \ 
a_j \in \tfrac13 \mathbb Z \Big\} , \end{equation*}
for $\mathbf a = 4 \pi i( \omega a_1 + \omega^2 a_2 ).$

For scalar functions, we also define spaces $L^2_p(\CC/\Gamma;\CC)$ where we replace the translation operator \eqref{eq:La} by its first component
\eqref{eq:La}. 
As described in \eqref{eq:magic}, $\alpha \neq 0$ is magic for the unperturbed Hamiltonian if and only if $1/\alpha \in \Spec_{L^2_{0}}(T_k) \setminus\{0\}.$ 
One can then show that $1/\alpha \in \Spec_{L^2_{0}}(T_k) \setminus\{0\}$ if and only if $1/\alpha \in \Spec_{L^2_{1}}(T_k)\setminus\{0\},$ see \cite{bhz3}.
 \begin{table}
\begin{subtable}{.5\linewidth}
  \centering
\begin{tabular}{P{1.5cm} |P{2.5cm} P{2.5cm}|}
    $p$ & $3^{-p} \sigma_p\frac{\sqrt{3}}{\pi}$ \\[1ex]
    \hline\hline
    {1} & {$2/9$} \\[1ex]
    \hline
    2 & $4/9$\\[1ex]
    \hline
    3 &$32/63$ \\[1ex]
    \hline
     4 & $40/81$ \\[1ex]
    \hline
\end{tabular}
\end{subtable}%
\begin{subtable}{.5\linewidth}
  \centering
\begin{tabular}{P{1.5cm} |P{4.5cm} P{4.5cm}|}
    $p$ & $3^{-p} \sigma_p \frac{\sqrt{3}}{\pi}$ \\[1ex] 
    \hline\hline
     5 & ${9560}/{20007}$ \\[1ex]
    \hline
    6 & $ {245120}/{527877}$ \\[1ex]
    \hline
    7 & $ {1957475168}/{4337177481}$ \\[1ex]
    \hline
    8 & $ {13316086960}/{30360242367}$ \\[1ex]
    \hline
  \end{tabular}
\end{subtable}
\caption{Traces of $T_{k}^{2n}$, $\sigma_p = \tr(T_k^{2p})$, where ${\sigma_1}$ is not absolutely summable as $T_k^2$ is not of trace-class.}
\label{table:traces}
\end{table}

 We then consider a perturbation of potentials $U(z), U(-z)$ by bounded potentials $A_{\pm}, V_{\pm} \in C^{\infty}(\CC/\Gamma)$ with $(A_+,A_-) \in L^2_0$ and $\lambda>0$, where $V_{\pm}$ satisfies the same symmetries as $U(\pm \bullet)$, respectively; cf. \eqref{eq:symmetries}. This gives us a new operator $T_{k,\lambda}$ of the operator $T_k$ in \eqref{eq:magic} characterizing the new magic angles with perturbed potentials
\begin{equation}
\label{eq:Tkdelta}
 T_{k,\lambda}=(2D_{\bar z}-k)^{-1}\begin{pmatrix}\lambda A_+(z) & U(z)+\lambda V_{+}(z) \\ U(-z)+ \lambda V_{-}(z) & \lambda A_-(z) \end{pmatrix}: L^2_{1}(\CC/\Gamma;\CC^2)\to L^2_{1}(\CC/\Gamma;\CC^2).
 \end{equation}
 

To describe the spectral (in)-stability of nonnormal operators one uses \emph{pseudospectrum}, see also the book \cite[Theo. 10.2]{ET05}.
\begin{defi}
Let $P$ be a bounded linear operator. We denote the \emph{$\varepsilon$-pseudospectrum} of $P$, for every $\varepsilon>0$, by
\begin{equation}
\label{eq:perturbed}
 \Spec_{\varepsilon}(P) := \bigcup_{K \in L(H); \Vert K \Vert \le \varepsilon}  \Spec(P+K),
 \end{equation}
with $L(H)$ the space of bounded linear operators. Equivalently, it is given by
\begin{equation}
\label{eq:pseudo_spec}
\Spec_{\varepsilon}(P) = \Spec(P) \cup \{ z \notin \Spec(P); \Vert (z-P)^{-1} \Vert >1/\varepsilon\}.
\end{equation}
\end{defi}
\subsection{Stability of magic angles}
In order to study the stability of small magic angles, characterized by the eigenvalues of $T_k$ ($\alpha$ \emph{is magic} if and only if $\alpha^{-1} \in \Spec_{L^2_{1}}(T_k)$), we start with a resolvent bound and recall the definition of the regularized determinant for compact operators such that $S^2$ is a Hilbert-Schmidt operator \cite{Simon}
\[ \det_4(1+S) := \prod_{\lambda \in \Spec(S)} (1+\lambda)e^{-\lambda+\lambda^2/2-\lambda^3/3}.\] 
By the characterization of the pseudospectrum above, see \eqref{eq:pseudo_spec}, we require estimates on the resolvent to study the spectral stability.
Direct bounds on the norm of the resolvent of $T_k$ are currently not explicitly accessible and non-trivial since $T_k$ is not normal. However, estimates on the (regularized) determinant are available, since it can be expressed in terms of traces of powers of the operator $T_k$. Such traces have been studied in \cite{bhz2}.
Our approach effectively reduces the problem of magic angle stability to the analysis of the determinant of $T_k$, which can be viewed as a generalization of Cramer's rule, as demonstrated in the following lemma. We focus on the case $k=0$ to simplify the presentation and start with the resolvent bound:
 \begin{lemm}
 \label{lemm:Lemma1}
 Let $T=T_0$ be as above, then for $\alpha \in \CC$ such that $1 \notin \Spec_{L^2_{1}}( \alpha T)$
 \[ \Vert  (1-\alpha T)^{-1} \Vert \le (1+3\vert \alpha \vert)^2 + \frac{e^{3(4\vert \alpha\vert+1)^4/4}}{\vert \det_4(1- \alpha T)\vert} .\]
 \end{lemm}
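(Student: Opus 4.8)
The plan is to bound the resolvent $(1-\alpha T)^{-1}$ by separating a finite-rank "principal part" from a tail that the regularized determinant controls. The underlying identity is a Cramer-type formula: for a trace-class-adjacent operator one can write $(1-\alpha T)^{-1}$ as an algebraic combination of the operator and its adjugate, divided by the determinant. Since $T$ is not trace class but $T^2$ is Hilbert--Schmidt, the correct object is the regularized determinant $\det_4$, and the adjugate must be replaced by a regularized adjugate $\mathrm{adj}_4(1-\alpha T)$, i.e. the entire function of $\alpha$ satisfying $(1-\alpha T)\,\mathrm{adj}_4(1-\alpha T) = \det_4(1-\alpha T)\cdot \mathrm{Id}$. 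This gives
\[
(1-\alpha T)^{-1} = \frac{\mathrm{adj}_4(1-\alpha T)}{\det_4(1-\alpha T)},
\]
so that $\Vert (1-\alpha T)^{-1}\Vert \le \Vert \mathrm{adj}_4(1-\alpha T)\Vert / |\det_4(1-\alpha T)|$, and the whole problem reduces to an a priori bound on the regularized adjugate.

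First I would recall (from Simon's work on regularized determinants, \cite{Simon}) the standard estimate $|\det_4(1+S)| \le \exp(c\Vert S\Vert_{\mathrm{HS}}^{?})$ — more precisely, for the $\det_{p}$ regularization there is a bound of the form $|\det_{p}(1+S)| \le \exp(\Gamma_p \Vert S\Vert_{S_p}^p)$. Here $T$ itself is not Hilbert--Schmidt, only $T^2$ is, which is why $\det_4$ (rather than $\det_2$) is the natural choice: $\det_4(1+S)$ is well-defined when $S \in S_4$, and one needs $S^2 \in S_2$, i.e. $S \in S_4$. So I would first verify $T \in S_4(L^2_1)$, with an explicit bound on $\Vert T\Vert_{S_4}$ in terms of $\Vert (2D_{\bar z})^{-1}\Vert_{S_4}$ and $\Vert U\Vert_\infty$ — the factor $3$ appearing in the statement (note $1/\alpha \in \Spec(T)$ means the relevant scale of $\alpha T$ is governed by a quantity normalized so that $\Vert U\Vert$-type constants are absorbed into the "$3$") suggests $\Vert \alpha T\Vert_{S_4}\le 3|\alpha|$ or something very close after the normalization fixed in the cited papers. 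Then the denominator $|\det_4(1-\alpha T)|$ is exactly the quantity in the statement, and the numerator's exponential factor $e^{3(4|\alpha|+1)^4/4}$ should come from an a priori bound $\Vert \mathrm{adj}_4(1-\alpha T)\Vert \le \exp(\text{const}\cdot(\text{something})^4)$ — the power $4$ being the signature of $\det_4$, and $(4|\alpha|+1)$ presumably arising as a crude bound on $1 + \Vert \alpha T\Vert_{S_4}$ or $1+\Vert\alpha T\Vert + \Vert \alpha T\Vert_{S_4}$.

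For the numerator bound on $\mathrm{adj}_4$, the key step is to express $\mathrm{adj}_4(1-\alpha T) = \det_4(1-\alpha T)(1-\alpha T)^{-1}$ away from the spectrum and show it extends to an entire function of $\alpha$ with a growth bound; alternatively, write $\mathrm{adj}_4(1-\alpha T) = R(\alpha)\cdot\det_4(1-\alpha T) + \text{polynomial correction}$, where $R(\alpha)$ is built from the first few terms (the regularization subtracts $\lambda - \lambda^2/2 + \lambda^3/3$, i.e. the first three Taylor coefficients, which is where the polynomial part $(1+3|\alpha|)^2$ in the final bound originates — a degree-$2$-in-$\alpha$ remainder term, with the $3$'s again the normalization constants). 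I would therefore split: $(1-\alpha T)^{-1} = P(\alpha) + Q(\alpha)$ where $P(\alpha)$ is an explicit low-degree polynomial in $\alpha T$ (of degree $\le 2$, giving the $(1+3|\alpha|)^2$ term after norm estimates), and $Q(\alpha) = (\text{entire, determinant-divided remainder})$ carrying the $e^{3(4|\alpha|+1)^4/4}/|\det_4(1-\alpha T)|$ term; this is essentially the Plemelj--Smithies expansion of the resolvent truncated appropriately.

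The main obstacle I expect is getting the constants right in the regularized-adjugate bound — specifically, controlling $\Vert \mathrm{adj}_4(1-\alpha T)\Vert$ uniformly in $\alpha$ by an expression of the exact shape $e^{3(4|\alpha|+1)^4/4}$. This requires a careful version of Simon's inequality for the regularized minors/adjugate (not just the determinant), namely a bound of the form $\Vert \mathrm{adj}_p(1+S)\Vert \le \exp(\Gamma_p(1+\Vert S\Vert_{S_p})^p)$ or similar, applied with $p=4$, $S = -\alpha T$, $\Vert S\Vert_{S_4}\le 3|\alpha|$; the term $(4|\alpha|+1)$ then has to be produced as a clean majorant of whatever combination of $\Vert \alpha T\Vert_{\mathrm{op}}$ and $\Vert \alpha T\Vert_{S_4}$ shows up, using $\Vert \alpha T\Vert_{\mathrm{op}} \le \Vert \alpha T\Vert_{S_4} \le 3|\alpha|$ and crude arithmetic. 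The subsidiary obstacle is bookkeeping the $L^2_1$ versus $L^2_0$ domains and the precise normalization of $U$ fixed in \cite{bhz2,beta} so that the numerical constant is literally $3$; but this is routine once the structural decomposition above is in place.
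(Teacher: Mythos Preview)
Your structural plan is correct and matches the paper's: both proofs are a regularized Cramer rule, splitting $(1-\alpha T)^{-1}$ into a polynomial-in-$\alpha T$ correction (giving the additive $(1+3|\alpha|)^2$) plus a remainder bounded by $e^{c(\|\alpha T\|_4+1)^4}/|\det_4(1-\alpha T)|$. The difference is in execution. You plan to cite a Plemelj--Smithies bound on a regularized adjugate $\mathrm{adj}_4$; the paper instead uses a differentiation trick that avoids invoking that machinery and produces the constants directly. Concretely: for a rank-one $K=\langle\phi,\cdot\rangle\psi$ one computes
\[
\partial_\mu\big|_{\mu=0}\log\det_4(1+S+\mu K)=\tr\big((1+S)^{-1}K\big)-\tr\big((S^2-S+1)K\big),
\]
the correction $-(S^2-S+1)$ being exactly the derivative of the regularizing polynomial $-\lambda+\lambda^2/2-\lambda^3/3$. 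Multiplying by $\det_4(1+S)$ and applying the Cauchy estimate $|\partial_\mu f(0)|\le\sup_{|\mu|=1}|f(\mu)|$ to $f(\mu)=\det_4(1+S+\mu K)$ (then Simon's bound $|\det_4(1+S+\mu K)|\le e^{3\|S+\mu K\|_4^4/4}$) gives the exponential numerator, while the subtracted $\langle\phi,(S^2-S+1)\psi\rangle$ term gives $\|S^2-S+1\|$ after optimizing over unit $\phi,\psi$. This is slicker than locating the right off-the-shelf adjugate bound, and it makes the constants transparent.

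One point to fix in your bookkeeping: you conflate the operator norm and the Schatten-$4$ norm. The paper uses \emph{two} separate inputs from \cite{bhz2}: $\|T\|\le 3$, which controls the polynomial term via $\|1+\alpha T+\alpha^2 T^2\|\le 1+3|\alpha|+9|\alpha|^2\le(1+3|\alpha|)^2$; and $\|T\|_4\le 4$, which enters the exponent as $\|\alpha T\|_4+1\le 4|\alpha|+1$. Your guess ``$\|\alpha T\|_{S_4}\le 3|\alpha|$'' and the resulting puzzlement about where $4|\alpha|+1$ comes from reflect this mix-up.
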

 Before stating the proof of this lemma, we state a perturbation estimate that limits by how much the eigenvalues of $T_{0,\lambda}$ can spread by using the pseudo-spectrum. This bound is illustrated in Fig. \ref{fig:bound}.
 \begin{figure}
 \includegraphics[width = \textwidth]{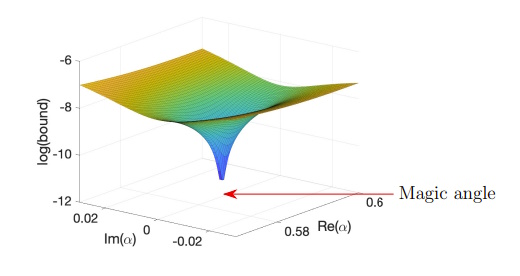}
 \caption{This figure shows the right-hand side of equation \eqref{eq:bound} close to the first magic angle.}
 \label{fig:bound}
 \end{figure}
 \begin{theo}
 \label{theo:perturbation}
 Let $T:=T_0$ and define $T_{\lambda}:=T_{0,\lambda}$ as in \eqref{eq:Tkdelta}. The perturbed operator $T_{\lambda}$ does not have any eigenvalues $\alpha^{-1}$ with $\alpha \in \CC\setminus\{0\}$ as long as the size of the perturbation satisfies
\begin{equation}
\label{eq:bound}
\Vert T_{\lambda}-T \Vert\le \frac{1}{\vert \alpha \vert \Big(  (1+3\vert \alpha \vert)^2 + \frac{e^{3(4\vert \alpha\vert+1)^4/4}}{\vert \det_4(1- \alpha T)\vert }
\Big)}.
 \end{equation}
 \end{theo}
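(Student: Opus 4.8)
\textbf{Proof proposal for Theorem \ref{theo:perturbation}.}

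The plan is to combine the resolvent bound of Lemma \ref{lemm:Lemma1} with the standard Neumann-series perturbation argument underlying the pseudospectral characterization \eqref{eq:pseudo_spec}. First I would observe that the claim ``$T_\lambda$ has no eigenvalue equal to $\alpha^{-1}$'' is equivalent to the statement that $1$ is not an eigenvalue of $\alpha T_\lambda$, i.e.\ that $1 - \alpha T_\lambda$ is invertible on $L^2_1(\CC/\Gamma;\CC^2)$ (invertibility rather than mere injectivity holds because $\alpha T_\lambda$ is compact, so the Fredholm alternative applies). Write $1 - \alpha T_\lambda = (1 - \alpha T) - \alpha(T_\lambda - T) = (1-\alpha T)\bigl(1 - \alpha (1-\alpha T)^{-1}(T_\lambda - T)\bigr)$, which is legitimate since the hypothesis $1 \notin \Spec_{L^2_1}(\alpha T)$ makes $1-\alpha T$ invertible and thus puts us in the setting of Lemma \ref{lemm:Lemma1}.

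Next, the inner factor $1 - \alpha (1-\alpha T)^{-1}(T_\lambda - T)$ is invertible by the Neumann series as soon as $\|\alpha (1-\alpha T)^{-1}(T_\lambda - T)\| < 1$, and by submultiplicativity this is guaranteed whenever
\[
\vert \alpha\vert \, \|(1-\alpha T)^{-1}\| \, \|T_\lambda - T\| < 1.
\]
Inserting the explicit bound $\|(1-\alpha T)^{-1}\| \le (1+3\vert\alpha\vert)^2 + e^{3(4\vert\alpha\vert+1)^4/4}/\vert\det_4(1-\alpha T)\vert$ from Lemma \ref{lemm:Lemma1}, the condition becomes exactly \eqref{eq:bound} (with the non-strict inequality there being the borderline case, which one can either absorb by a harmless $\varepsilon$-argument or simply note that at equality the Neumann series still converges absolutely when the operator norm is $\le 1$ and $\alpha T_\lambda$ is compact, so $1$ still cannot be in the point spectrum). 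Hence under \eqref{eq:bound} the product $1-\alpha T_\lambda$ is a composition of two invertible operators, so it is invertible, and $\alpha^{-1}\notin \Spec_{L^2_1}(T_\lambda)$.

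The one genuine subtlety — and the step I expect to require the most care — is the borderline/non-strictness in \eqref{eq:bound}: Lemma \ref{lemm:Lemma1} and the Neumann estimate naturally give a \emph{strict} inequality, whereas the theorem is stated with ``$\le$''. This is resolved by noting that compactness of $\alpha T_\lambda$ means $\Spec(\alpha T_\lambda)\setminus\{0\}$ consists of isolated eigenvalues, so the set of $\lambda$ (equivalently, of perturbations $T_\lambda - T$) for which $1\in\Spec(\alpha T_\lambda)$ is closed; combined with the open condition from the strict inequality on a slightly larger ball, one upgrades ``$<$'' to ``$\le$''. Everything else is the routine factorization-plus-Neumann-series mechanism; no new analytic input beyond Lemma \ref{lemm:Lemma1} is needed, and in particular the $L^2_1$ versus $L^2_0$ distinction is immaterial here since, as recalled before the lemma, $1/\alpha\in\Spec_{L^2_0}(T_k)\setminus\{0\}$ iff $1/\alpha\in\Spec_{L^2_1}(T_k)\setminus\{0\}$, and the same equivalence transfers to $T_{k,\lambda}$ because the perturbation respects the relevant symmetries.
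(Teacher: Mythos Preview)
Your proposal is correct and is essentially the same argument as the paper's: the paper phrases the perturbation step via the pseudospectrum characterization \eqref{eq:perturbed}--\eqref{eq:pseudo_spec} (i.e., $1\in\Spec(\alpha T_\lambda)\Rightarrow 1/(\vert\alpha\vert\,\Vert T_\lambda-T\Vert)\le\Vert(1-\alpha T)^{-1}\Vert$), while you spell out the equivalent Neumann-series factorization directly, and both then invoke Lemma \ref{lemm:Lemma1}. Your extra discussion of the borderline ``$\le$'' versus ``$<$'' is more careful than the paper's treatment, though with the paper's pseudospectrum convention (strict inequality in \eqref{eq:pseudo_spec}) the non-strict bound in \eqref{eq:bound} already follows without the additional compactness/closure argument.
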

 Before proceeding, let us discuss the meaning of \eqref{eq:bound}. By construction, we have $\Vert T_{\lambda}-T \Vert = \mathcal O(\lambda)$ with $\lambda>0$ fixed in this discussion. 
 Thus, \eqref{eq:bound} holds because of the factor $1/\vert \alpha \vert$, for small $\vert \alpha\vert$, which corresponds to large twisting angles. Thus, large angles are not magic even under random perturbations.

The right-hand side of \eqref{eq:bound} is small for large $\vert \alpha \vert$ (small twisting angles) and for $1/\alpha$ close to $\Spec_{L^2_{1}}(T)$ ($\alpha$ that are \emph{almost magic}). This means that for such $\alpha$ even small perturbations of the potential can generate eigenvalues of the form $1/\alpha$ of the perturbed operator $T_{\lambda}$. This shows that such $\alpha$ are inherently unstable, as small perturbations can create or destroy them.  In particular, this bound implies spectral stability for small $\alpha$, corresponding to large magic angles, since they remain relatively unchanged. The regularized determinant in \eqref{eq:bound} can be controlled (from above and below) by Lemma \ref{lemm:aux_lemm}.
 \begin{proof}[Proof of Theo. \ref{theo:perturbation}]
 On the one hand by the characterization of the pseudospectrum \eqref{eq:perturbed}, we find from \eqref{eq:perturbed} 
 $$\Spec_{L^2_1}(\alpha T_{\lambda}) \subset \Spec_{L^2_1, \vert \alpha \vert \Vert T_{\lambda}-T \Vert}(\alpha T).$$ This implies that if $1 \in \Spec_1(\alpha (T+R) ) ,$ with $R=\lambda (2D_{\bar z})^{-1}W$ then $1 \in \Spec_{\Vert R \Vert}(T)$. Thus, by the equivalent characterization \eqref{eq:pseudo_spec} of the pseudo-spectrum and Lemma \ref{lemm:Lemma1}
 $$\frac{1}{ \Vert \alpha R \Vert} \le \Vert (1-\alpha T)^{-1} \Vert \le (1+3\vert \alpha \vert)^2+\frac{e^{3(4 \vert \alpha\vert+1)^4/4}}{\vert \operatorname{det}_4(1-\alpha T) \vert}.$$ 
 Rearranging this estimate implies the result.
 \end{proof}

 We now give the proof of the auxiliary Lemma \ref{lemm:Lemma1}.
 \begin{proof}[Proof of Lemma \ref{lemm:Lemma1}]
 We recall from \cite[Theo.$6.4$]{Simon} that
 \begin{equation}
 \label{eq:useful_bound}
 \vert \det_4(1+S+K) \vert \le e^{ 3 \Vert S+K \Vert_4^4/4}.
 \end{equation}
Assuming $1+S$ is invertible and $S,K$ a finite rank operator, we have for the usual determinant
 \[\begin{split}
  \det(1+S+\mu K) &= \det(1+S) \det(1+\mu (1+S)^{-1} K) \\
  &= \det(1+S) (1+ \mu \tr( (1+S)^{-1} K)) + \mathcal O(\mu^2).
  \end{split}\]
This shows that 
  \[\partial_{\mu} \vert_{\mu=0}\det(1+S+\mu K)  =\det(1+S)  \tr( (1+S)^{-1} K)) \]
which shows 
   \[\partial_{\mu} \vert_{\mu=0} \log\det(1+S+\mu K) =  \tr( (1+S)^{-1} K)).\]
  Using that 
\begin{equation}
\label{eq:log_det}
 \log \det_4(1+S+\mu K) = \log \det(1+S+\mu K) - \tr(S+\mu K) + \tfrac{\tr((S+\mu K)^2)}{2}- \tfrac{\tr((S+\mu K)^3)}{3} ,
 \end{equation}
we find the log-derivative of the regularized 3-Fredholm determinant 
\[ \partial_{\mu} \vert_{\mu=0}  \log( \det_4(1+S+\mu K) ) = \tr((1+S)^{-1} K) - \tr((S^2-S+1)K).\]
By using a density argument it follows that this formula also holds for $S^2$ Hilbert-Schmidt, i.e. we can drop the assumption that $S$ is of finite rank.
Thus, from \eqref{eq:log_det} one finds specializing to $K=\langle \phi, \bullet \rangle \psi$, with $\Vert \phi \Vert = \Vert \psi \Vert =1$ and multiplying by $\det_4(1+S)$
\[ 
\det_4(1+S) \langle \phi, (1+S)^{-1} \psi \rangle = \partial_{\mu} \Big \vert_{\mu=0}\det_4(1+S+\mu K) - \det_4(1+S) \langle \phi,(S^2-S+1)\psi \rangle.
\]

Hence, using a Cauchy estimate $\vert \partial_{\mu} \vert_{\mu=0} f(\mu) \vert \le \sup_{\vert \mu \vert=1} \vert f(\mu) \vert$ for $f(\mu):= \det_4(1+S+\mu K)$, we find
\[\begin{split} \Vert \det_4(1+S) (1+S)^{-1} \Vert &\le \sup_{\vert \mu \vert=1}  \vert \det_4(1+S+\mu K) \vert+  \vert \det_4(1+S)\vert \Vert (S^2-S+1)\Vert \end{split}\]i
it thus follows together with \eqref{eq:useful_bound} that
\[\begin{split} \Vert  (1+S)^{-1} \Vert &\le \Vert S^2-S+1  \Vert + \sup_{\vert \mu \vert=1}  \frac{\vert \det_4(1+S+\mu K) \vert}{\vert \det_4(1+S)\vert}\le  \Vert S^2-S+1 \Vert + \frac{e^{3 (\Vert S \Vert_4 + 1)^4/4 }}{\vert \det_4(1+S)\vert} \end{split}.\]
Specializing the estimate to $S = - \alpha T$, we find by using that $\Vert T\Vert \le 3 $ and  $\Vert T \Vert_4\le 4$, see \cite[Lemma $4.1$]{bhz2}, that
\[ \begin{split} \Vert  (1-\alpha T)^{-1} \Vert 
&\le \Vert 1 + \alpha T+ \alpha^2 T^2 \Vert  + \frac{e^{3 (4\vert \alpha \vert  + 1)^4/4}}{\vert \det_4(1-\alpha T)\vert} \\ 
&\le 1 + 3 \vert \alpha\vert   +9\vert  \alpha^2 \vert  + \frac{e^{3 (4\vert \alpha \vert  + 1)^4/4}}{\vert \det_4(1-\alpha T)\vert} \\ 
\end{split}\]
which was to be shown.
\end{proof}
Consequently, if $\alpha_*$ is a magic angle, we can estimate $\det_4(1- \alpha T)$ in \eqref{eq:bound} by using \cite[Lemma $5.1$]{bhz3}, which in a reduced version states that
\begin{lemm}
\label{lemm:aux_lemm}
The entire function $\CC \ni \alpha \mapsto \operatorname{det}_4(1-\alpha T)$ satisfies for any $n \ge 0$
\[ \begin{split}
\left\lvert  \operatorname{det}_4(1-\alpha T) -\sum_{k=0}^{n} \mu_k \frac{(-\alpha)^k}{k!} \right\rvert &\le \sum_{j=n+1}^{\infty}   \frac{(4e^{3/4}  \vert \alpha\vert)^j}{(j!)^{1/4}} \end{split} \] 
with $\Vert A_{0} \Vert_2 \le 2$, where 
\begin{equation}
\label{eq:muk}
 {\mu}_j =  \operatorname{det}\begin{pmatrix} \sigma_1 & j-1 &0 & \cdots & 0 \\
\sigma_2 & \sigma_1 & j-2 & \cdots & 0 \\
\vdots & \vdots & \ddots & \ddots & \vdots \\ 
\sigma_{j-1}& \sigma_{j-2} & \cdots  & 0  & 1\\
\sigma_{j} & \sigma_{j-1}  & \sigma_{j-2}& \cdots & \sigma_1 \end{pmatrix}, \text{ with }  \sigma_j= \begin{cases} 
 0 & j < 4 \\
\tr T_{0}^{j} & j \ge 4. 
\end{cases}
\end{equation}
The first traces $\sigma_j$ are summarized in Table \ref{table:traces}.
\end{lemm}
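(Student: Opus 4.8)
The plan is to derive the bound on $\det_4(1-\alpha T)$ from the Plemelj-type formula expressing the regularized determinant through the traces $\sigma_j = \tr T_0^j$. Recall that for a Hilbert–Schmidt operator $S$ with $S^2$ trace class, $\det_4(1+S) = \exp\left(\sum_{j\ge 4} (-1)^{j+1}\tr(S^j)/j\right)$, since the lower-order traces $\tr S$, $\tr S^2$, $\tr S^3$ are subtracted off in the definition of $\det_4$ (indeed for our $T$ one has $\sigma_1,\sigma_2,\sigma_3 = 0$ in the sense used in \eqref{eq:muk}, owing to the vanishing of the relevant traces; see \cite{bhz2,bhz3}). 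Exponentiating the Taylor series in $\alpha$ and using Newton's identities relating power sums to elementary symmetric functions, one gets that the Taylor coefficients of $\alpha\mapsto \det_4(1-\alpha T)$ are exactly $\mu_k/k!$ with $\mu_k$ the determinant of the Hessenberg matrix displayed in \eqref{eq:muk}. This is the "generalized Cramer's rule" alluded to in the text. So the first step is to establish this identity for the Taylor coefficients; I would cite \cite[Lemma 5.1]{bhz3} or reprove it via the standard Jacobi–Trudi / Newton-identity computation.

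Second, I would estimate the tail of the Taylor series. Writing $\det_4(1-\alpha T) = \sum_{k\ge 0} \mu_k (-\alpha)^k/k!$, the quantity to bound is $\left\lvert \sum_{j=n+1}^\infty \mu_j(-\alpha)^j/j!\right\rvert \le \sum_{j=n+1}^\infty |\mu_j|\,|\alpha|^j/j!$. The key analytic input is a bound on $|\mu_j|/j!$. From the general estimate $\left\lvert \det_4(1+S)\right\rvert \le e^{3\Vert S\Vert_4^4/4}$ (equation \eqref{eq:useful_bound}, from \cite[Theo. 6.4]{Simon}) applied with $S = -\alpha T$ and the norm bound $\Vert T\Vert_4 \le 4$ from \cite[Lemma 4.1]{bhz2}, the entire function $f(\alpha) := \det_4(1-\alpha T)$ satisfies $|f(\alpha)| \le e^{3(4|\alpha|)^4/4} = e^{192|\alpha|^4}$. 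More precisely, the cleanest route is to bound the individual coefficients directly: a Cauchy estimate on circles of radius $r$ gives $|\mu_j|/j! \le r^{-j}\sup_{|\alpha|=r} |f(\alpha)| \le r^{-j} e^{192 r^4}$, and optimizing over $r$ yields $|\mu_j|/j! \le C^j/(j!)^{1/4}$ with $C = 4e^{3/4}$ (the factor $e^{3/4}$ coming from the fourth-root-of-factorial Stirling optimization $\inf_{r>0} r^{-j}e^{cr^4}$, which behaves like $(\text{const}\cdot c^{1/4})^j (j/4)^{-j/4} \sim (\text{const})^j/(j!)^{1/4}$). Substituting this into the tail sum gives $\sum_{j=n+1}^\infty |\mu_j||\alpha|^j/j! \le \sum_{j=n+1}^\infty (4e^{3/4}|\alpha|)^j/(j!)^{1/4}$, which is precisely the claimed bound.

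The main obstacle is getting the constant $4e^{3/4}$ exactly right in the single-coefficient estimate, i.e.\ verifying that the Cauchy-estimate-plus-optimization argument really produces $(4e^{3/4})^j/(j!)^{1/4}$ rather than some larger constant. This requires the sharp form of the bound $\inf_{r>0} r^{-j} e^{c r^4} = (4ec/j)^{j/4}$ (attained at $r^4 = j/(4c)$), combined with $c = 192$ to get $(4e\cdot 192/j)^{j/4} = (768 e/j)^{j/4}$, and then comparing $(j/4)^{-j/4}$ with $(j!)^{-1/4}$ via Stirling: $j! \ge (j/e)^j$ forces $(j!)^{1/4} \ge (j/e)^{j/4}$, so $(j!)^{-1/4} \ge (e/j)^{j/4} \cdot e^{-\text{?}}$ — one must track the lower-order Stirling factor carefully, but since we want an \emph{upper} bound on $|\mu_j|/j!$ in terms of $(j!)^{-1/4}$, the inequality $(j/4)^{-j/4} \le (j!)^{-1/4} \cdot (\text{const})^j$ is what is needed, and $768^{1/4} e^{1/4} \cdot e^{1/4}$-type bookkeeping should collapse to $4 e^{3/4}$. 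Everything else — the Hessenberg-determinant identity for the coefficients, the fact that $\sigma_j$ vanishes for $j<4$, and the norm bounds $\Vert A_0\Vert_2 \le 2$ (which is \cite[Lemma 4.1]{bhz2} applied to the block $A_0$ of $T_0^2$) — is routine and can be imported from \cite{bhz2,bhz3,Simon}. I would also remark that the stated bound is uniform in $\alpha\in\CC$, so it immediately controls $\det_4(1-\alpha T)$ from above, and near a magic angle $\alpha_*$ (where $\det_4(1-\alpha_* T) = 0$ if the magic angle is simple) it gives a quantitative handle on the vanishing order, which is what feeds back into \eqref{eq:bound} in Theorem \ref{theo:perturbation}.
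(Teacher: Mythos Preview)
Your identification of the Taylor coefficients via the Plemelj--Smithies formula matches the paper's approach (which cites \cite[Theo.~6.8]{Simon} for exactly this). For the coefficient bound $|\mu_j|/j! \le (4e^{3/4})^j/(j!)^{1/4}$, however, you take a different route: you propose to combine the global determinant estimate $|\det_4(1-\alpha T)| \le e^{192|\alpha|^4}$ with Cauchy estimates on circles, optimize in the radius, and then convert $(j/4)^{-j/4}$ into $(j!)^{-1/4}$ via Stirling. The paper instead invokes \cite[Theorem~7.8]{Simon}, which directly furnishes the abstract Plemelj--Smithies coefficient bound $|\mu_k| \le (k!)^{3/4} e^{3k/4} \Vert T_0\Vert_4^k$; plugging in $\Vert T_0\Vert_4 \le 4$ and dividing by $k!$ gives the claimed tail estimate in one line. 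Your approach does work in principle---the Cauchy-plus-optimization argument yields $|\mu_j|/j!\le(768e/j)^{j/4}$, and one can verify via Stirling that this is bounded above by $(4e^{3/4})^j/(j!)^{1/4}$ for all $j\ge 1$---but it essentially re-derives a special case of Simon's abstract estimate and requires exactly the constant bookkeeping you yourself flag as the ``main obstacle.'' The paper's direct citation sidesteps that entirely and makes the origin of the constant $4e^{3/4}$ transparent: it is simply $\Vert T_0\Vert_4 \cdot e^{\Gamma_4}$ with $\Gamma_4=3/4$.
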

\begin{proof}
This follows from the Plemelj-Smithies formula \cite[Theo. 6.8]{Simon}
and 
\cite[Theorem 7.8]{Simon} which shows that together with the above estimate $\Vert T_0\Vert_4 \le 4$
\[ \vert \mu_k \vert \le (k!)^{3/4} e^{3k/4} \Vert T_0\Vert_4^k \le (k!)^{3/4} (4e^{3/4})^k.\]
\end{proof}

\subsection{Instability of magic angles}
\begin{figure}
\includegraphics[width=\textwidth]{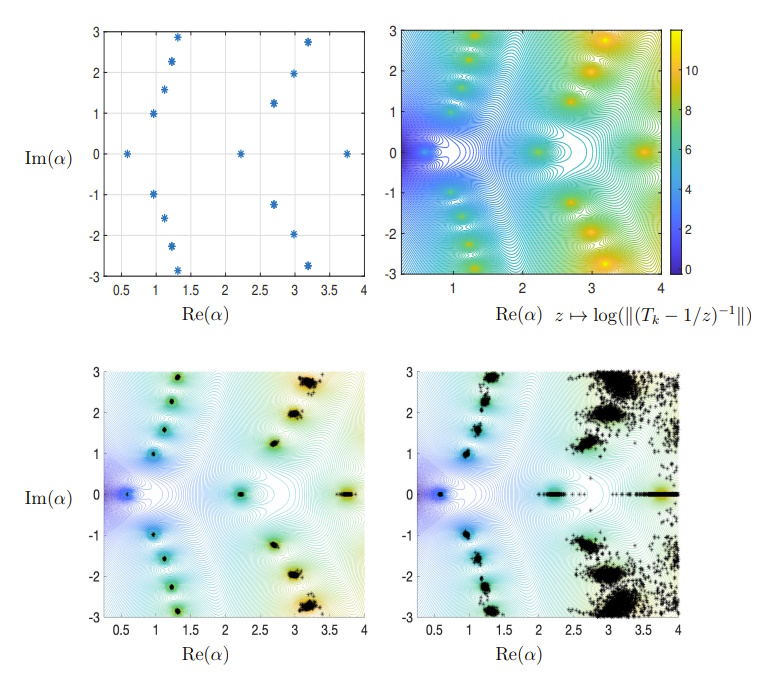}
\caption{Upper row: Magic angles (left) and resolvent norm of operator $T_k$ (right).\\
Lower row: 1000 realizations of random perturbations of tunneling potential $U+\lambda V$ with new magic angles (black dots) superimposed on resolvent norm figure. $\lambda=1/100$ (left) and $\lambda=1/10$ (right).}
\end{figure}
We shall now give the proof of Theorem \ref{theo:instability}. Arbitrary low-lying eigenvalues of $T_k$, which correspond to large magic angles in the unperturbed case, can be produced by rank $1$ perturbations of $T_k$ that are exponentially small in the spectral parameter. Let $\mu$ be one such low-lying eigenvalue of $T_k$. On the Hamiltonian side, this indicates that zero modes with quasi-momentum $k$ and $\alpha = 1/\mu$ can be generated by rank one perturbations of the Bloch-Floquet Hamiltonian, $H_k(\alpha)$

\begin{proof}[Proof of Theo. \ref{theo:instability}]
We recall that by \cite[Theo $4$]{beta} there exists for each $k \in \mathbb C$ an $L^2$-normalized $u_\mu \in C_c^{\infty}(\CC;\CC^2)$ such that the operator 
\[ P(\mu) = \begin{pmatrix} 2\mu D_{\bar z} &U(z) \\ U(-z) & 2 \mu D_{\bar z}\end{pmatrix} \] 
satisfies $\Vert (P(\mu)-\mu k)u_\mu \Vert  = \mathcal O(e^{-c/\vert \mu \vert})$ with  $\Vert u_{\mu} \Vert_{L^2} = 1$ and $c>0.$
This implies that there is a constant $K>0$, which we allow to change throughout this proof, such that $\Vert (P(\mu)-\mu k)^{-1}\Vert  \ge K e^{c/\vert \mu \vert}.$
Hence, we define the normalized $v_{\mu}:=\frac{ (P(\mu)-\mu k)u_{\mu} }{\Vert  (P(\mu)-\mu k)u_{\mu}  \Vert},$ then $\Vert (P(\mu)-\mu k)^{-1}v_{\mu}\Vert  > K e^{c/\vert \mu \vert}.$ We recall that 
\[ (P(\mu)-\mu k)^{-1} = -(T_{k}-\mu)^{-1} (2D_{\bar z}- k)^{-1}.\]
This implies that, since $\Vert (2D_{\bar z}- k)^{-1} \Vert = 1/d(k,\Gamma^*)$, where $d$ denotes the Hausdorff distance
\[\left\lVert (T_{k}-\mu)^{-1} \right\rVert \ge K e^{c/\vert \mu \vert}.\]
Hence, for the normalized $s_\mu:=\frac{(2D_{\bar z}- k)^{-1}v_\mu}{\Vert (2D_{\bar z}- k)^{-1}v_\mu\Vert}$, we have 
$$ (T_{k}-\mu)^{-1} s_\mu  =t_\mu\text{ with }\Vert t_\mu\Vert \ge K e^{c/\vert \mu\vert}.$$
Thus, we can define $R\varphi:=\frac{\langle \varphi,t_{\mu}\rangle}{\Vert t_{\mu}\Vert^2} s_{\mu}$ with norm $\Vert R \Vert =  \mathcal O(e^{-c/\vert \mu \vert})$
such that $$\mu \in \Spec(T_{k} - R).$$
\end{proof}

\section{Integrated DOS and Wegner estimate}
\label{sec:IDSWeg}

This is the first section on Question \ref{que1}. In this section, we prove Theorem \ref{theo:continuity} by providing a proof of Prop.\ref{prop:Wegner} on the regularity of the integrated density of states and prove a corresponding estimate on the number of eigenvalues of the disordered Hamiltonian. This also implies a Wegner estimate by \eqref{eq:Wegner}.
 We start with the proof of H\"older continuity using the spectral shift function, see \cite{CHK,CHK03,CHN01}, and then subsequently explain the modifications to obtain Lipschitz continuity, which uses spectral averaging.
 In the following, we will write $\chi_{x,L}:=\indic_{\Lambda_L}(x)$ with $\chi_x:=\chi_{x,1}$, $\Lambda_L:=\CC/(L\Gamma)$, and $\Lambda_L(z):=z+\Lambda_L$. Here, with some abuse of notation, we identify $\Lambda_L$ with a subset of $\CC$ centred at zero. We shall often drop subscripts to simplify the notation. For a compact operator $A$, we denote by $\Vert A \Vert_k$ the $k$-th Schatten class norm. 

\subsection{Proof of Prop. \ref{prop:Wegner}}
In this subsection we shall give the proof of Prop. \ref{prop:Wegner}, up to two crucial estimates that are provided in different subsections, namely the H\"older estimate \eqref{eq:expectation} in Subsection \ref{sec:Hoelder} and the Lipschitz estimate \eqref{eq:Lipschitz} in Subsection \ref{subsec:Lipschitz}. 

 \begin{proof}[Proof of Prop. \ref{prop:Wegner}]
 In the proof, we shall focus on Case 1 disorder, as in Assumption \ref{ass:Anderson}, as Case 2 disorder follows along the same lines. We focus on Case 1 as it requires more care, since the potential $u$ is not positive. However, we shall outline the differences of the two cases in our proof. 
 Since the spectrum in Case 1 exhibits a spectral gap, see \eqref{eq:random_spec}, we may focus without loss of generality on the spectrum around $m$. The argument around $-m$ is analogous.  In Case 2, we do not have to restrict ourselves to those neighborhoods.
Let $E_0 \in \Delta \subset \tilde \Delta \subset(k_-,k_+)$ for two closed bounded intervals $\Delta,\tilde \Delta$, with $\Delta$ of non-empty interior centered at $E_0$, and $d_0 :=d(E_0,\RR \setminus \tilde \Delta)$.
We decompose
\begin{equation}
\label{eq:disjoint}
\tr(\indic_{\Delta}(H_{\lambda,\Lambda_L}) ) = \tr(\indic_{\Delta}(H_{\lambda,\Lambda_L}) \indic_{\tilde \Delta}(H_{0,\Lambda_L}) ) + \tr(\indic_{\Delta}(H_{\lambda,\Lambda_L}) \indic_{\RR \setminus \tilde \Delta}(H_{0,\Lambda_L}) ).
\end{equation}

We then write for the second term in \eqref{eq:disjoint}
\begin{equation}
\label{eq:equa}
\begin{split}  \tr(\indic_{\Delta}(H_{\lambda,\Lambda_L}) \indic_{\RR \setminus \tilde \Delta}(H_{0,\Lambda_L}) ) &=  \tr(\indic_{\Delta}(H_{\lambda,\Lambda_L})(H_{\lambda,\Lambda_L}-E_0)(H_{0,\Lambda_L}-E_0)^{-1} \indic_{\RR \setminus \tilde \Delta}(H_{0,\Lambda_L}) )\\
&  - \tr(\indic_{\Delta}(H_{\lambda,\Lambda_L})\lambda V_{X,\Lambda_L}(H_{0,\Lambda_L}-E_0)^{-1} \indic_{\RR \setminus \tilde \Delta}(H_{0,\Lambda_L}) ).
\end{split}
\end{equation}
The first term in \eqref{eq:equa} satisfies by H\"older's inequality and the definition of $d_0$, showing $\Vert (H_{0,\Lambda_L}-E_0)^{-1} \indic_{\RR \setminus \tilde \Delta}(H_{0,\Lambda_L}) \Vert \le d_0^{-1},$
\[  \vert \tr(\indic_{\Delta}(H_{\lambda,\Lambda_L})(H_{\lambda,\Lambda_L}-E_0)(H_{0,\Lambda_L}-E_0)^{-1} \indic_{\RR \setminus \tilde \Delta}(H_{0,\Lambda_L}) ) \vert \le \frac{\vert \Delta \vert}{2 d_0} \tr(\indic_{\Delta}(H_{\lambda,\Lambda_L})).\]
We then use the inequality
\[ \begin{split}
& \tr(\indic_{\Delta}(H_{\lambda,\Lambda_L})\lambda V_{X,\Lambda_L}(H_{0,\Lambda_L}-E_0)^{-1} \indic_{\RR \setminus \tilde \Delta}(H_{0,\Lambda_L}) ) \\
&\le \frac{\vert \lambda \vert}{d_0} \Vert \indic_{\Delta}(H_{\lambda,\Lambda_L}) V_{X,\Lambda_L}\Vert_2 \Vert  \indic_{\RR \setminus \tilde \Delta}(H_{0,\Lambda_L}) \indic_{\Delta}(H_{\lambda,\Lambda_L}) \Vert_2  \\
& \le  \frac{\zeta\tr(  \indic_{\Delta}(H_{\lambda,\Lambda_L})\indic_{\RR \setminus \tilde \Delta}(H_{0,\Lambda_L}) )}{2d_0} + \frac{\lambda^2 \tr(  \indic_{\Delta}(H_{\lambda,\Lambda_L})V_{X,\Lambda_L}^2 )}{2\zeta d_0},
\end{split}\]
with $\zeta>0$. We can then bound \eqref{eq:equa}, in terms of the truncated potential
\[\tilde V_{X,\Lambda_L} := \sum_{\gamma \in \tilde\Lambda_L} u(\bullet-\gamma-\xi_{\gamma}),\] 
by choosing $\zeta>0$ sufficiently small 
\begin{equation}
\begin{split}
\label{eq:first_estm}
\tr(\indic_{\Delta}(H_{\lambda,\Lambda_L}) \indic_{\RR \setminus \tilde \Delta}(H_{0,\Lambda_L}) ) 
&\le \frac{\vert \Delta \vert}{d_0} \tr(\indic_{\Delta}(H_{\lambda,\Lambda_L}) ) +  \frac{\lambda^2 \tr(  \indic_{\Delta}(H_{\lambda,\Lambda_L})V_{X,\Lambda_L}^2 )}{\zeta d_0} \\ 
&\lesssim \frac{\vert \Delta \vert}{d_0} \tr(\indic_{\Delta}(H_{\lambda,\Lambda_L}) ) +  \frac{\lambda^2 \tr(  \indic_{\Delta}(H_{\lambda,\Lambda_L})\tilde V_{X,\Lambda_L} )}{\zeta d_0}. 
\end{split}
\end{equation}
Notice that while we do not have that $V_{X,\Lambda_L}^2 \lesssim \tilde V_{X,\Lambda_L} $, at least for Case 1 disorder, since $\tilde V_{X,\Lambda_L} $ is not positive, we still have that 
\begin{equation}
\label{eq:estm_fail} 
\tr(  \indic_{\Delta}(H_{\lambda,\Lambda_L})V_{X,\Lambda_L}^2 ) \lesssim  \tr(  \indic_{\Delta}(H_{\lambda,\Lambda_L})\tilde V_{X,\Lambda_L} ).
\end{equation}
We shall now argue this bound using a few intermediate steps. {We first recall that the spectral projection of the Hamiltonian satisfies}
\begin{equation}
\label{eq:projection}
\indic_{\Delta}(H_{0,\Lambda_L}) = P_{\ker(D(\alpha)_{\Lambda_L})} \oplus 0_{\CC^{2 \times 2}}.
\end{equation}
{This follows immediately from the structure of the Hamiltonian \eqref{eq:Bloch_Hamiltonian}. Indeed, let $(x,0)$ with $x \in \ker(D(\alpha)_{\Lambda_L})$, then it follows that 
\[H_{0,\Lambda_L}(x,0)^{\top}=m(x,0)^{\top}.\]  }

Moreover, since $\Lambda_L=\CC/(L\Gamma)$, we have by periodicity of the Hamiltonian $H_0$ that $\Spec(H_{0,\Lambda_L}) \subset \Spec(H_0).$ { This is a direct consequence of Bloch-Floquet theory.}  Since the spectrum of $H_{\lambda,\Lambda_L}${ is uniformly gapped for $\lambda \in [0,\lambda_0]$ small, as detailed in \eqref{eq:random_spec}}, it follows that the spectral projection $[0,\lambda_0]\ni \lambda \mapsto \indic_{\Delta}(H_{\lambda,\Lambda_L})$ is norm-continuous. { This is a direct consequence of the holomorphic functional calculus.} We conclude from \eqref{eq:projection} that for $\varphi = (\varphi_1,\varphi_2)$
\begin{equation}
\label{eq:continity}
 \varphi  = \indic_{\Delta}(H_{\lambda,\Lambda_L}) \varphi \Longrightarrow \Vert \varphi_2 \Vert \le \varepsilon(\lambda) \Vert \varphi_1 \Vert
 \end{equation}
with $\varepsilon(0) = 0$ and $\lambda \mapsto \epsilon(\lambda) \ge 0$ continuous. 
To see the implication in \eqref{eq:continity}, we apply norms to the left-hand side of \eqref{eq:continity}. Then, we find by substituting
\[\indic_{\Delta}(H_{\lambda,\Lambda_L}) = \indic_{\Delta}(H_{0,\Lambda_L})+ (\indic_{\Delta}(H_{\lambda,\Lambda_L})-\indic_{\Delta}(H_{0,\Lambda_L}))\]
in \eqref{eq:continity} that, { by the continuity of the spectral projection} $ \Vert \indic_{\Delta}(H_{\lambda,\Lambda_L})-\indic_{\Delta}(H_{0,\Lambda_L})\Vert = \mathcal O(\vert \lambda \vert)$, there is $C>0$ such that 
\[ \sqrt{\Vert \varphi_1 \Vert^2 + \Vert \varphi_2 \Vert^2} = \Vert \varphi\Vert  \le \Vert P_{\ker(D(\alpha)_{\Lambda_L})}\varphi_1 \Vert + C\lambda \sqrt{\Vert \varphi_1 \Vert^2 + \Vert \varphi_2 \Vert^2}.\]
Rearranging this, we find
\[(1-C\lambda) \sqrt{\Vert \varphi_1 \Vert^2 + \Vert \varphi_2 \Vert^2}  \lesssim \Vert P_{\ker(D(\alpha)_{\Lambda_L})}\varphi_1 \Vert  \le \Vert \varphi_1 \Vert.\]
Thus, we have that
\[(1-C\lambda) \sqrt{1+\Vert \varphi_2 \Vert^2/\Vert \varphi_1 \Vert^2} \le 1\]
and thus by solving this for $\Vert \varphi_2\Vert$ we find { the right-hand side of the implication in \eqref{eq:continity}
\[ \Vert \varphi_2 \Vert \le \underbrace{\frac{\sqrt{\lambda(2C-C^2\lambda)}}{1-C\lambda}}_{=:\varepsilon(\lambda)} \Vert \varphi_1 \Vert.\]
This implies in the notation of \eqref{eq:subtle} the following lower bound on the right-hand side of \eqref{eq:estm_fail}
\[ \begin{split}
\tr(  \indic_{\Delta}(H_{\lambda,\Lambda_L})\tilde V_{X,\Lambda_L} ) &= \sum_{\gamma \in \tilde\Lambda_L} \sum_{\substack{\varphi \text{ ONB of } \\
\operatorname{ran}(\indic_{\Delta}(H_{\lambda,\Lambda_L}))}} \Big( \langle \varphi_1, Y(\bullet-\gamma-\xi_{\gamma})\varphi_1\rangle- \langle \varphi_2, Y(\bullet-\gamma-\xi_{\gamma})\varphi_2\rangle \\
 &\qquad \qquad + 2\Re( \langle \varphi_1,Z(\bullet-\gamma-\xi_{\gamma}) \varphi_2\rangle ) \Big) \\ 
&\ge  \sum_{\substack{\varphi \text{ ONB of } \\
\operatorname{ran}(\indic_{\Delta}(H_{\lambda,\Lambda_L}))}} \Big(\Vert \varphi_1\Vert^2 \inf Y-k \Vert \varphi_2\Vert^2 \sup Y -2k \Vert \varphi_1\Vert \Vert \varphi_2\Vert \sup Z\Big) \\ 
& \gtrsim \sum_{\substack{\varphi \text{ ONB of } \\
\operatorname{ran}(\indic_{\Delta}(H_{\lambda,\Lambda_L}))}}\Vert \varphi_1\Vert^2 (\inf Y- \varepsilon(\lambda)^2 k \sup Y -\varepsilon(\lambda) k \sup Z) \\
& \gtrsim \sum_{\substack{\varphi \text{ ONB of } \\
\operatorname{ran}(\indic_{\Delta}(H_{\lambda,\Lambda_L}))}} \Vert \varphi_1\Vert^2 \inf Y \text{ for }\lambda \text{ small enough},
\end{split}\]
where we use that there are maximal $k \ge 1$-overlapping $\gamma$-translates and that $\lim_{\lambda \to 0} \varepsilon(\lambda)=0.$ }We can easily obtain, along the same lines, an upper bound on the left-hand side of \eqref{eq:estm_fail}{ with $M_{ij}$ denoting the $(i,j)$ entry of a matrix $M=(M_{ij}).$
\[\begin{split} \tr(  \indic_{\Delta}(H_{\lambda,\Lambda_L})V_{X,\Lambda_L}^2 ) 
& \lesssim  \sum_{\substack{\varphi \text{ ONB of } \\
\operatorname{ran}(\indic_{\Delta}(H_{\lambda,\Lambda_L}))}} \langle \varphi_1,  (V_{X,\Lambda_L}^2)_{11} \varphi_1 \rangle +  \langle \varphi_2,  (V_{X,\Lambda_L}^2)_{22} \varphi_2 \rangle + 2\Re\langle \varphi_1,(V_{X,\Lambda_L}^2)_{12} \varphi_2 \rangle  \\
& \lesssim  \sum_{\substack{\varphi \text{ ONB of } \\
\operatorname{ran}(\indic_{\Delta}(H_{\lambda,\Lambda_L}))}} \Vert \varphi_1 \Vert^2  \sup (V_{X,\Lambda_L}^2)_{11} + \Vert \varphi_2 \Vert^2  \sup (V_{X,\Lambda_L}^2)_{22} + \Vert \varphi_1 \Vert \Vert \varphi_2 \Vert  \sup (V_{X,\Lambda_L}^2)_{12} \\
& \lesssim  \sum_{\substack{\varphi \text{ ONB of } \\
\operatorname{ran}(\indic_{\Delta}(H_{\lambda,\Lambda_L}))}} \Vert \varphi_1 \Vert^2 ( \sup (V_{X,\Lambda_L}^2)_{11} + \varepsilon(\lambda)^2  \sup (V_{X,\Lambda_L}^2)_{22} +   \varepsilon(\lambda)  \sup (V_{X,\Lambda_L}^2)_{12}) \\
& \lesssim  \sum_{\substack{\varphi \text{ ONB of } \\
\operatorname{ran}(\indic_{\Delta}(H_{\lambda,\Lambda_L}))}} \Vert \varphi_1 \Vert^2 \sup (V_{X,\Lambda_L}^2)_{11},
\end{split} \]
we see that with a constant determined by the ratio of $\sup (V_{X,\Lambda_L}^2)_{11}/\inf Y$ we have shown that \eqref{eq:estm_fail} holds.}

Finally, for the first term in \eqref{eq:disjoint}, we have using the estimate
\begin{equation}
\label{eq:2bshown}
\indic_{\tilde \Delta}(H_{0,\Lambda_L}) \lesssim \indic_{\tilde \Delta}(H_{0,\Lambda_L}) \tilde V_{X,\Lambda_L}  \indic_{\tilde \Delta}(H_{0,\Lambda_L}),
\end{equation}
that we show below, the inequalities
\begin{equation}
\label{eq:rechnung}
\begin{split}  \tr(\indic_{\Delta}&(H_{\lambda,\Lambda_L})\indic_{\tilde \Delta}(H_{0,\Lambda_L}))\lesssim \tr(\indic_{\Delta}(H_{\lambda,\Lambda_L})\indic_{\tilde \Delta}(H_{0,\Lambda_L}) \tilde V_{X,\Lambda_L}  \indic_{\tilde \Delta}(H_{0,\Lambda_L})\indic_{\Delta}(H_{\lambda,\Lambda_L}))\\
&= \tr(\indic_{\Delta}(H_{\lambda,\Lambda_L})\indic_{\tilde \Delta}(H_{0,\Lambda_L}) \tilde V_{X,\Lambda_L} (\indic- \indic_{\RR\setminus \tilde \Delta}(H_{0,\Lambda_L})))\\
&= \tr(\indic_{\Delta}(H_{\lambda,\Lambda_L}) (\indic-\indic_{\RR\setminus \tilde \Delta}(H_{0,\Lambda_L}))\tilde V_{X,\Lambda_L} -\indic_{\Delta}(H_{\lambda,\Lambda_L})\indic_{\tilde \Delta}(H_{0,\Lambda_L}) \tilde V_{X,\Lambda_L} \indic_{\RR\setminus \tilde \Delta}(H_{0,\Lambda_L}))\\
&= \tr(\indic_{\Delta}(H_{\lambda,\Lambda_L})(\tilde V_{X,\Lambda_L} -\indic_{\RR\setminus\tilde \Delta}(H_{0,\Lambda_L}) \tilde V_{X,\Lambda_L} \indic_{\RR\setminus\tilde \Delta}(H_{0,\Lambda_L})  \\
&\qquad -\indic_{\RR\setminus\tilde \Delta}(H_{0,\Lambda_L}) \tilde V_{X,\Lambda_L}  \indic_{\tilde \Delta}(H_{0,\Lambda_L})  -\indic_{\tilde \Delta}(H_{0,\Lambda_L})\tilde V_{X,\Lambda_L} \indic_{\RR\setminus\tilde \Delta}(H_{0,\Lambda_L}))).
\end{split}
\end{equation}

To verify \eqref{eq:2bshown}, we proceed as follows.
Since $H_{0,\Lambda_L}$ is an unperturbed Hamiltonian, the eigenvectors associated with the spectrum in $\tilde \Delta$ are supported on the first two entries of the wavefunction, cf. \eqref{eq:projection}. Let $\pi_1:=\operatorname{diag}(\operatorname{id}_{\CC^2},0_{\CC^2})$ be the projection onto the first two entries.

We can then define another auxiliary potential $\hat V_{\Lambda_L}(z):=\inf_{\xi \in D^{\Gamma}}\sum_{\gamma \in \tilde\Lambda_L} \pi_1 u(z-\gamma-\xi_{\gamma})\pi_1.$ Thus, one has that $0 \le \hat V_{\Lambda_L} \le \pi_1 \tilde V_{X,\Lambda_L}\pi_1$ {by the positivity assumption of Case 1 in Assumption \ref{ass:Anderson}.} The projection onto the first two components is redundant for Case $2$ disorder since $u \ge 0$ in that case. 

Thus, to show \eqref{eq:2bshown}, it suffices to argue that 
$$\indic_{\tilde \Delta}(H_{0,\Lambda_L}) \lesssim \indic_{\tilde \Delta}(H_{0,\Lambda_L}) \hat V_{\Lambda_L}  \indic_{\tilde \Delta}(H_{0,\Lambda_L}).$$

Since $H_{0}$ is a periodic Hamiltonian with respect to any lattice $L\Gamma$ it suffices by Bloch-Floquet theory to prove the estimate in the Bloch function basis of the full Hamiltonian $H_{0}.$ { This is because $\Spec(H_{0,\Lambda_L})\subset \Spec(H_0)$ for any periodic subdomain of $\Lambda_L$ of the full Hamiltonian, directly by Bloch-Floquet theory.}
Indeed, let $(v_i(k))_{i \in I(k)}$ be the Bloch functions associated with the spectral projection $\indic_{\tilde \Delta}(H_{0}),$ where $I(k)$ is the set of Bloch eigenvalues inside $\tilde \Delta$ with quasimomentum $k$, where $H_{0,\Lambda_L}$ has a finite subset (in $k$) of those as eigenvectors. 
It then suffices to show that $M(k) :=( \langle v_i(k),  \hat V_{\Lambda_L}  v_j(k) \rangle_{L^2(\CC)} )_{i,j} $ is strictly positive definite for all $k$. If not, then there is $k_0 \in \CC$ and $w(k_0):=\sum_{j} \beta_j v_j$ with $\beta_j$ not all zero, such that $M(k_0) w(k_0)=0$ and by strict positivity of$\hat V_{\Lambda_L} $ on $\operatorname{ran}(\pi_1)$, see \eqref{eq:subtle}, we find $w(k_0)\vert_{B_{\varepsilon}(z_0)} \equiv 0,$ but this implies that $w\equiv 0$ by real-analyticity of $w(k_0),$ since $H_0$ is elliptic with real-analytic coefficients, which is a contradiction. Thus $M_k$ is a strictly positive matrix and using continuity in $k$\footnote{Continuity of the Bloch eigenfunctions does in general not hold, if one enforces the Bloch boundary conditions for the Bloch functions $\psi(k-\lambda) =e^{-i \bullet \lambda}  \psi(k)$. However, the Bloch boundary conditions are irrelevant for this argument. Thus, we may choose them continuously, as any vector bundle over a compact contractible space is trivial \cite[Corr. 2.17+2.18]{notes}.} and compactness of $\CC/\Gamma^*$, we also see that $M_k >c_0 >0$ for all $k$.

For the second term in the last line of \eqref{eq:rechnung}, we observe that by the boundedness of the potential
\[\vert \tr(\indic_{\Delta}(H_{\lambda,\Lambda_L})\indic_{\RR\setminus\tilde \Delta}(H_{0,\Lambda_L}) \tilde V_{X,\Lambda_L} \indic_{\RR\setminus\tilde \Delta}(H_{0,\Lambda_L}) \indic_{\Delta}(H_{\lambda,\Lambda_L}))\vert \lesssim \tr(\indic_{\Delta}(H_{\lambda,\Lambda_L})\indic_{\RR\setminus\tilde \Delta}(H_{0,\Lambda_L}))\]
where the last term can be estimated using \eqref{eq:first_estm}. 

We shall now estimate the third and fourth term at the end of \eqref{eq:rechnung} for $\delta>0$, using Young's inequality, the Cauchy-Schwarz inequality, and that $\Vert A \Vert_2 = \Vert A^*\Vert_2$
\[ \begin{split}
&\vert \tr(\indic_{\Delta}(H_{\lambda,\Lambda_L})\indic_{\RR\setminus\tilde \Delta}(H_{0,\Lambda_L}) \tilde V_{X,\Lambda_L}  \indic_{\tilde \Delta}(H_{0,\Lambda_L}) ) \vert\\
& \le \frac{\tr(\indic_{\Delta}(H_{\lambda,\Lambda_L})\indic_{\RR\setminus\tilde \Delta}(H_{0,\Lambda_L}))}{2\delta} + \frac{\delta}{2} \Vert \indic_{\Delta}(H_{\lambda,\Lambda_L})  \indic_{\tilde \Delta}(H_{0,\Lambda_L}) \tilde V_{X,\Lambda_L} \Vert_2^2 \\
& \lesssim \frac{\tr(\indic_{\Delta}(H_{\lambda,\Lambda_L})\indic_{\RR\setminus\tilde \Delta}(H_{0,\Lambda_L}))}{2\delta} + \frac{\delta}{2} \Vert \indic_{\Delta}(H_{\lambda,\Lambda_L})  \indic_{\tilde \Delta}(H_{0,\Lambda_L})\Vert_2^2
\end{split}\]
and similarly 
\[ \begin{split}
&\vert \tr(\indic_{\Delta}(H_{\lambda,\Lambda_L}) \indic_{\tilde \Delta}(H_{0,\Lambda_L}) \tilde V_{X,\Lambda_L} \indic_{\RR\setminus\tilde \Delta}(H_{0,\Lambda_L}) ) \vert\\
& \lesssim \frac{\tr(\indic_{\Delta}(H_{\lambda,\Lambda_L})\indic_{\RR\setminus\tilde \Delta}(H_{0,\Lambda_L}))}{2\delta} + \frac{\delta}{2} \Vert \indic_{\Delta}(H_{\lambda,\Lambda_L})  \indic_{\tilde \Delta}(H_{0,\Lambda_L})\Vert_2^2.
\end{split}\]
Inserting the last two estimates into \eqref{eq:rechnung} and choosing $\delta>0$ small enough
\[  \tr(\indic_{\Delta}(H_{\lambda,\Lambda_L})\indic_{\tilde \Delta}(H_{0,\Lambda_L})) \lesssim \tr(\indic_{\Delta}(H_{\lambda,\Lambda_L})\tilde V_{X,\Lambda_L} ) +\frac{\tr(\indic_{\Delta}(H_{\lambda,\Lambda_L})\indic_{\RR\setminus\tilde \Delta}(H_{0,\Lambda_L}))}{\delta}.\]
Inserting this estimate into \eqref{eq:disjoint} yields
\[\tr(\indic_{\Delta}(H_{\lambda,\Lambda_L}) ) \lesssim \tr(\indic_{\Delta}(H_{\lambda,\Lambda_L}) \indic_{\RR \setminus \tilde \Delta}(H_{0,\Lambda_L}) ) + \tr(\indic_{\Delta}(H_{\lambda,\Lambda_L})\tilde V_{X,\Lambda_L} ) +\frac{\tr(\indic_{\Delta}(H_{\lambda,\Lambda_L})\indic_{\RR\setminus\tilde \Delta}(H_{0,\Lambda_L}))}{\delta}.\]
Thus, by choosing $\vert \Delta \vert$ sufficiently small in \eqref{eq:first_estm}
\[ \tr(\indic_{\Delta}(H_{\lambda,\Lambda_L})) \lesssim \tr(\indic_{\Delta}(H_{\lambda,\Lambda_L})\tilde V_{X,\Lambda_L} ).\]

Applying expectation values and using \eqref{eq:expectation}, which we show in the next subsection, we find for $q \in (0,1)$
\begin{equation}
\label{eq:crucial_estm}
\mathbf E \tr(\indic_{\Delta}(H_{\lambda,\Lambda_L})) \lesssim \mathbf E\tr(\indic_{\Delta}(H_{\lambda,\Lambda_L})\tilde V_{X,\Lambda_L} )   \lesssim_q \vert \Delta\vert^{q} \vert \Lambda_L \vert.
\end{equation}
This shows the result by using a partition of small intervals $ \Delta $ covering $I$.
The case of Lipschitz continuity under more restrictive assumptions is relegated to Subsection \ref{subsec:Lipschitz}.
\end{proof}

\subsection{Spectral shift function and H\"older continuity}
\label{sec:Hoelder}
To obtain the H\"older estimate, used to show \eqref{eq:crucial_estm}, we recall the definition of the \emph{spectral shift function}, first.
Let $H_0$ and $H_1$ be two self-adjoint operators such that $H_1-H_0$ is trace-class, then the spectral shift function is defined as, see \cite[Ch. $8$, Sec. $2$, Theo. $1$]{Y92}
\[ \xi(\lambda,H_1,H_0) := \frac{1}{\pi}\lim_{\varepsilon \downarrow 0} \operatorname{arg}\det (\operatorname{id}+(H_1-H_0)(H_0-\lambda-i\varepsilon)^{-1}).\]
In particular for any $p \ge 1$ one has the $L^p$ bound \cite[Theorem $2.1$]{CHN01}
\begin{equation}
\label{eq:Lp}
 \Vert \xi(\bullet, H_1,H_0) \Vert_{L^p} \le \Vert H_1-H_0\Vert_{1/p}^{1/p}
 \end{equation}
where the right-hand side is defined as the generalized Schatten norm $$\Vert T \Vert_q = \Big(\sum_{\lambda \in \Spec(T^*T)} \lambda^{q/2}\Big)^{1/q}.$$

We then start by setting $\varphi(x):=\arctan(x^n)$, with $n \in 2\mathbb N_0+1$ sufficiently large, such that $h_1-h_0$ is trace-class, with $h_0 :=\varphi(H_0)$ and $h_1 :=\varphi(H_1)$. Then, we have the \emph{Birman-Krein formula}, see \cite[Ch.\ 8, Sec.\ 11, Lemma $3$]{Y92} stating that for absolutely continuous $f$
\[ \tr(f(H_1)-f(H_0)) = \int_{\RR}  \xi(\varphi(\lambda),h_1,h_0) \ df(\lambda).\]
Let $\Delta=[a,b]$ then we start by defining 
\[s(x):=\begin{cases}
0           & x \le 0 \\
3x^2 - 2x^3 & 0 \le x \le 1 \\
1           & 1 \le x \\
\end{cases}\]
and 
\begin{equation}
\label{eq:step-func}
f_{\Delta}(t) := 1-s\Bigg(\tfrac{t-a+\tfrac{1}{2}\vert \Delta \vert }{2\vert \Delta \vert}\Bigg).
\end{equation} 
We observe that this function satisfies $\inf_{t \in [1/4,3/4]} (s'(t)) = 9/8$.

Thus, we have for $C>0$
\[ \indic_{\Delta}(H_{\lambda,\Lambda_L}) \le -C\vert \Delta \vert f_{\Delta}'(H_{\lambda,\Lambda_L})\]
which implies
\[\begin{split} \tr(\lambda \tilde V_{X,\Lambda_L}  \indic_{\Delta}(H_{\lambda,\Lambda_L})) &\le -C \vert \Delta\vert \tr( \lambda \tilde V_{X,\Lambda_L}  f_{\Delta}'(H_{\lambda,\Lambda_L})) \\
&= -C\vert \Delta\vert \sum_{\gamma \in \tilde \Lambda_L} \partial_{X_{\gamma}} \tr(f_{\Delta}(H_{\lambda,\Lambda_L})).\end{split}\]
Applying the expectation value to this inequality, we find by positivity of $g$, the density of $X_{\gamma},$ that for $\mathbf E_{\gamma}$ the expectation value with respect to all random variables $(\xi_{\gamma'})$ and all $X_{\gamma'}$ apart from $\gamma' = \gamma$ 
\begin{equation}
\begin{split}
\label{eq:many_estimates} \mathbf E\tr(\lambda & \tilde V_{X,\Lambda_L}  \indic_{\Delta}(H_{\lambda,\Lambda_L}))\le-\sum_{\gamma \in \tilde \Lambda_L} \mathbf E_{\gamma} C\vert \Delta\vert \int_0^1 g(X_{\gamma}) \partial_{X_{\gamma}}\mathbf  \tr(f_{\Delta}(H_{\lambda,\Lambda_L})) \ dX_{\gamma}\\
&\le -\sum_{\gamma \in \tilde \Lambda_L} \mathbf E_{\gamma}C\vert \Delta\vert \Vert g \Vert_{\infty}  \int_0^1   \partial_{X_{\gamma}} \mathbf \tr(f_{\Delta}(H_{\lambda,\Lambda_L}))\  dX_{\gamma}\\
& \le - C\vert \Delta\vert \Vert g \Vert_{\infty}  \sum_{\gamma  \in \tilde \Lambda_L}\mathbf E_{\gamma} \tr(f_{\Delta}(H_{\lambda,\Lambda_L}(X_{\gamma}=1))-f_{\Delta}(H_{\lambda,\Lambda_L}(X_{\gamma}=0)))\\ 
&=C\vert \Delta\vert \Vert g \Vert_{\infty} \sum_{\gamma  \in \tilde \Lambda_L} \int_{\supp(f_{\Delta})} f_{\Delta}'(t)\mathbf E_{\gamma}\xi(\varphi(t),\varphi(H_{\lambda,\Lambda_L}(X_{\gamma}=1)),\varphi(H_{\lambda,\Lambda_L}(X_{\gamma}=0))) \ dt,
\end{split}
\end{equation}
where $H_{\lambda,\Lambda_L}(X_{\gamma}=\zeta)$ is the Hamiltonian $H_{\lambda,\Lambda_L}$ with $X_{\gamma}$ replaced by the constant $\zeta$ and $\vert \supp(f_{\Delta}) \vert  = \mathcal O(\vert \Delta \vert).$
Thus, using H\"older's inequality, we find for any $\beta \in (0,1)$ with \eqref{eq:Lp} and $n$ in the $\arctan$ regularization $\varphi$ sufficiently large\footnote{using $\varphi(t)-\varphi(t_0) = \int_{t_0}^t \frac{ns^{n-1}}{1+s^{2n}} \ ds$ we can create, by choosing $n$ sufficiently large, arbitrarily large powers of the resolvent. This yields the desired trace-class condition.}
\begin{equation}
\label{eq:expectation}
\mathbf E\tr(\lambda \tilde V_{X,\Lambda_L}  \indic_{\Delta}(H_{\lambda,\Lambda_L})) \lesssim \vert \Delta\vert^{\beta} \vert \Lambda_L \vert
\end{equation}
which is the identity used to obtain \eqref{eq:crucial_estm}.
\subsection{Spectral averaging and Lipschitz continuity}
\label{subsec:Lipschitz}
We now complete the proof of Lipschitz continuity for Case 2 disorder with full support, as claimed in Theorem \ref{theo:continuity} and follow an argument developed initially by Combes and Hislop \cite[Corr. $4.2$]{CH94} for Schrödinger operators.
\begin{proof}[Proof of Theorem \ref{theo:continuity} (Lipschitz continuity)]
Let $E =\max \{\vert E_1 \vert, \vert E_2 \vert\}$ where $\Delta = [E_1,E_2],$ then
\begin{equation}
\label{eq:chain_of_inequalities}
\begin{split}
 \mathbf E(\tr(\indic_\Delta(H_{\lambda,  \Lambda_L}))) 
 &\le e^{E^2} \mathbf E(\tr(\indic_\Delta(H_{\lambda,  \Lambda_L})e^{-H^2_{\lambda,  \Lambda_L}})) \\
  &\le e^{E^2} \sum_{j \in   \tilde \Lambda_L} \Bigg(\Vert \mathbf E(\chi_j \indic_\Delta(H_{\lambda, \Lambda_L}) \chi_j) \Vert \sup_{X \in \Omega} \tr\Big(   \chi_j e^{-H^2_{\lambda, \Lambda_L}}\Big)\Bigg) \\
 &\lesssim e^{E^2}  \sum_{j \in   \tilde \Lambda_L} \Vert \mathbf E( \chi_j \indic_\Delta(H_{\lambda, \Lambda_L}) \chi_j )\Vert,
\end{split}
\end{equation}
where we used that $\sup_{X \in \Omega} \tr\Big(   \chi_j e^{-H^2_{\lambda, \Lambda_L}}\Big)$ is uniformly bounded in all parameters.
Under the assumptions of Theorem \ref{theo:continuity}, we know that $u_j$ are strictly positive on $\supp(\chi_j)$ thus also $0 \le \chi_j^2 \lesssim u_j$ which is the necessary condition \cite[(4.2)]{CH94} to apply spectral averaging which readily implies together with \eqref{eq:chain_of_inequalities} that 
\begin{equation}
\label{eq:Lipschitz}
  \mathbf E(\tr(\indic_\Delta(H_{\lambda,  \Lambda_L})))  \lesssim \vert \Delta \vert \vert \Lambda_L\vert
  \end{equation}
which is the identity \eqref{eq:crucial_estm} with $\beta=1$ for Case $2$ disorder.
\end{proof}

\section{Mobility edge}
\label{sec:mob}

To prove Theorem \ref{theo:transport}, we recall the notion of summable uniform decay of correlations (SUDEC) introduced by Germinet and Klein. see \cite{GK06}.

\begin{defi}[SUDEC]
\label{defi:Sudec}
The Hamiltonian $H_{\lambda}$ exhibits a.e. \emph{SUDEC} in an interval $J$ if its spectrum in $J$ is pure point and for every closed $I \subset J$, for $\{\varphi_{n}\}$ an orthonormal set of eigenfunctions of $H_{\lambda}$ with eigenvalues $E_{n} \in I$, we define $\beta_{n}:=\Vert \langle z \rangle^{-2} \varphi_{n} \Vert^2.$ Then for $\zeta \in (0,1)$ there is $C_{I, \zeta} < \infty$ such that
\[\Vert \chi_z (\varphi_{n} \otimes \varphi_{n}) \chi_w \Vert \le C_{I, \zeta} \beta_{n} \langle z \rangle^2\langle w \rangle^2  e^{-\vert z-w \vert^{\zeta}}\text{ for }w,z\in \CC\]
and in addition one has $\mathbf P$-almost surely
\begin{equation}
\label{eq:SUDEC}
\sum_{n \in \mathbb N} \beta_{n}  < \infty.
\end{equation}
\end{defi}

The strategy to establish delocalization is to show that if the Hamiltonian would exhibit only SUDEC-type localization (SUDEC), then this would contradict the non-vanishing Chern numbers of the flat bands.
\subsection{The ingredients to the multi-scale analysis}
For the applicability of the multi-scale analysis \`a la Germinet-Klein we require six ingredients of our Hamiltonian often referred to by acronyms in their works, see also \cite{GK01},
\begin{itemize}
\item Strong generalized eigenfunction expansion {\bf{SGEE}} (Lemma \ref{lemm:SGEE}), 
\item Simon-Lieb inequality {\bf{SLI}} and exponential decay inequality {\bf{EDI}} (both Lemma \ref{lemm:SLIEDI}), 
\item Number of eigenvalues estimate {\bf{NE}} and Wegner estimate {\bf{W}} (both \eqref{eq:Wegner} and Prop.  \ref{prop:Wegner}), and 
\item Independence at a distance {\bf{IAD}}.
\end{itemize} 
The independence at a distance (IAD) just follows from the choice of Anderson-type randomness and means that the disordered potentials at a certain distance are independent of each other.

We then start with the strong generalized eigenfunction expansion (SGEE). Therefore, we introduce Hilbert spaces
\[ \mathcal H_{\pm} := L^2(\CC, \CC^4;  \langle z \rangle^{\pm 4 \nu} \ dz).\]
\begin{lemm}[SGEE]
\label{lemm:SGEE}

Let $\nu > 1/2$. The set $D^{X}_+:=\{\phi \in D(H_{\lambda}) \cap \mathcal H_+; H_{\lambda}\phi \in \mathcal H_+\}$ is dense in $\mathcal H_+$ and a core of $H_{\lambda}.$
Moreover, for $\mu \in \mathbb R \setminus \{0\}$ we have \[
\tag{SGEE} \mathbf E\left\lvert \tr\Big(\langle z \rangle^{-2\nu}(H_{\lambda}-i\mu)^{-4}\indic_I(H_{\lambda}) \langle z \rangle^{-2\nu} \Big) \right\rvert^2 <\infty.\]
\end{lemm}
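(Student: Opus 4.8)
The plan is to follow the template of Germinet--Klein \cite{GK01} and split the assertion into (i) the elementary density/core statement and (ii) a \emph{deterministic} trace-class bound that is uniform in the disorder configuration $X \in \Omega$; once (ii) is in hand the probabilistic claim is immediate, since then $\mathbf E\bigl|\tr(\,\cdot\,)\bigr|^2$ is bounded by a finite constant.

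For (i) I would use the test space $C^\infty_c(\CC;\CC^4)$ and check $C^\infty_c(\CC;\CC^4) \subset D^X_+$. Indeed $C^\infty_c \subset H^1(\CC;\CC^4) = D(H_\lambda)$, and $\langle z\rangle^{\pm 4\nu}$ is locally bounded, so $C^\infty_c \subset \mathcal H_+$; moreover $H_0\phi \in C^\infty_c$ and, since the sum defining $V_X$ is locally finite and $u \in C^\infty_c$, the function $V_X\phi$ is smooth with support contained in $\supp\phi$, hence $H_\lambda\phi \in C^\infty_c \subset \mathcal H_+$. Density of $C^\infty_c$ in the weighted space $\mathcal H_+ = L^2(\CC,\CC^4;\langle z\rangle^{4\nu}\,dz)$ is standard. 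Finally, $H_0$ is a first-order elliptic system, so its graph norm on $H^1$ is equivalent to the $H^1$-norm and $C^\infty_c$ is a core for $H_0$; adding the bounded self-adjoint term $\lambda V_X$ changes neither the domain nor (up to equivalence) the graph norm, so $C^\infty_c$, and a fortiori $D^X_+$, is a core for $H_\lambda$.

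For (ii), observe first that $\sup_{X\in\Omega}\Vert V_X\Vert_\infty < \infty$ by Assumption \ref{ass:Anderson}, so it suffices to prove the deterministic bound
\[
\bigl\Vert \langle z\rangle^{-2\nu}(H_\lambda-i\mu)^{-4}\indic_I(H_\lambda)\langle z\rangle^{-2\nu}\bigr\Vert_1 \le C(\mu,I)
\]
with $C(\mu,I)$ independent of $X$. Since $\indic_I(H_\lambda)$ commutes with $(H_\lambda-i\mu)^{-1}$, I would split the fourth power and write
\begin{multline*}
\langle z\rangle^{-2\nu}(H_\lambda-i\mu)^{-4}\indic_I(H_\lambda)\langle z\rangle^{-2\nu}\\
= \bigl(\langle z\rangle^{-2\nu}(H_\lambda-i\mu)^{-2}\bigr)\bigl((H_\lambda-i\mu)^{-2}\indic_I(H_\lambda)\langle z\rangle^{-2\nu}\bigr).
\end{multline*}
The adjoint of the second factor is $\langle z\rangle^{-2\nu}(H_\lambda+i\mu)^{-2}\indic_I(H_\lambda)$, a Hilbert--Schmidt operator composed with the contraction $\indic_I(H_\lambda)$, so both factors are Hilbert--Schmidt provided $\langle z\rangle^{-2\nu}(H_\lambda\mp i\mu)^{-2}$ is Hilbert--Schmidt with $X$-uniform norm. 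For the latter I would invoke elliptic regularity for the first-order system $H_\lambda = H_0 + \lambda V_X$: from the resolvent identity, the mapping $(H_0\mp i\mu)^{-1}\colon H^s \to H^{s+1}$, and the boundedness of $V_X$ on $H^1$ (here $u,\nabla u \in L^\infty$ and the translates overlap with finite multiplicity), one gets $(H_\lambda\mp i\mu)^{-2}\colon L^2 \to H^2$ with $\Vert \langle D\rangle^2 (H_\lambda\mp i\mu)^{-2}\Vert$ bounded uniformly in $X$, where $\langle D\rangle$ denotes the Fourier multiplier with symbol $\langle\xi\rangle$. Writing
\[
\langle z\rangle^{-2\nu}(H_\lambda\mp i\mu)^{-2} = \bigl(\langle z\rangle^{-2\nu}\langle D\rangle^{-2}\bigr)\bigl(\langle D\rangle^{2}(H_\lambda\mp i\mu)^{-2}\bigr),
\]
the second factor is bounded and the first is Hilbert--Schmidt on $L^2(\RR^2;\CC^4)$ exactly because $\langle z\rangle^{-2\nu}\in L^2(\RR^2)$ --- this is where the hypothesis $\nu>1/2$ enters, through $2\nu>1$ in dimension $d=2$ --- and $\langle\xi\rangle^{-2}\in L^2(\RR^2)$. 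Thus the product is trace class with an $X$-uniform bound, which gives (ii).

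The analytic steps above are routine; the one point that deserves care is moving the weight $\langle z\rangle^{-2\nu}$ past the differential operator without generating uncontrolled commutators. I would therefore keep the weight entirely on one side of each factor and peel it off together with $\langle D\rangle^{-2}$, as above, rather than conjugating resolvents by $\langle z\rangle^{\nu}$ (which for large $\nu$ would force one to iterate Combes--Thomas-type bounds). With this bookkeeping the dimension count $d=2$ and the assumption $\nu>1/2$ are precisely what is needed, and uniformity in the randomness comes for free from the uniform boundedness of $V_X$ and its first derivatives.
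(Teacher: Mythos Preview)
Your proposal is correct and follows essentially the same route as the paper: part (i) via $C_c^\infty(\CC;\CC^4)$ being a core (the paper cites its Theorem \ref{theo:essentially_self}), and part (ii) by factoring the trace-class operator into two Hilbert--Schmidt pieces of the form $\langle z\rangle^{-2\nu}(H_\lambda\mp i\mu)^{-2}$, uniformly in $X$. The only difference is that the paper simply asserts this Hilbert--Schmidt property by reference to \cite[Prop.~9.2]{DS10}, whereas you spell out the elliptic-regularity/Fourier-multiplier argument explicitly.
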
 
\begin{proof}
The statement about the core is immediate, as $C_c^{\infty}(\CC;\CC^4)$ is a core, see for instance Theorem \ref{theo:essentially_self}. The second statement follows as $\langle z \rangle^{-2\nu}(H_{\lambda}-i\mu)^{-2}$ is a uniformly bounded (in $X$) Hilbert-Schmidt operator. This follows for instance from \cite[Prop.9.2]{DS10}.
\end{proof}
The next lemma covers two important concepts. The Simon-Lieb inequality (SLI) relates resolvents at different scales. The eigenfunction decay inequality (EDI) connects the decay of finite-volume resolvents to the decay of generalized eigenfunctions, leading to Anderson localization. We thus define the characteristic function of the belt 
\[ \Upsilon_L(z):=\Lambda_{L-1}(z) \setminus \Lambda_{L-3}(z)\]
and denote it by $\Xi_{\Lambda_L(z)}.$ 
For $z \in \Gamma$ and $l>4$, we define smooth cut-off functions $\tilde \chi_{\Lambda_l(z)} \in C_c^{\infty}(\CC; [0,1])$ that are equal to one on $\Lambda_{l-3}(z)$ and $0$ on $\CC \setminus \Lambda_{l-5/2}(z).$
\begin{lemm}[SLI \& EDI]
\label{lemm:SLIEDI}
Let $J$ be a compact interval. For $L, l',l'' \in 2\mathbb N$ and $x,y',y'' \in \Gamma$ with $\Lambda_{l''}(y) \subsetneq \Lambda_{l'}(y') \subsetneq \Lambda_L(x)$, then $\mathbf P$-almost surely: If $E \in J \cap (\Spec(H_{\lambda,\Lambda_L(x)}) \cap \Spec(H_{\lambda, \Lambda_{l'}(y')}))^c$ then the Simon-Lieb inequality holds
\[\tag{SLI} \begin{split} \Vert \Xi_{\Lambda_L(x)}(H_{\lambda,\Lambda_L(x)}-E)^{-1} \chi_{\Lambda_{l''}(y)}\Vert &\lesssim_J \Vert \Xi_{\Lambda_{l'}(y')}(H_{\lambda,\Lambda_{l'}(y')}-E)^{-1} \chi_{\Lambda_{l''}(y)} \Vert \\
&\times \Vert \Xi_{\Lambda_L(x)} (H_{\lambda,\Lambda_L(x)}-E)^{-1} \Xi_{\Lambda_{l'}(y')}\Vert.\end{split}\]
Moreover, for any $z \in \Gamma$ and any generalized eigenfunction $\psi$\footnote{ $\psi$ solving $(H_{\lambda}-E)\psi=0$ and growing at most polynomially} with generalized eigenvalue $E \in J \cap \Spec(H_{\lambda,\Lambda_L(x)})^c$ satisfies $\mathbf P$-almost surely the eigenfunction decay inequality
 \[\tag{EDI}
\begin{split} \Vert \chi_z \psi \Vert &\lesssim_J \Vert \Xi_{\Lambda_L(x)} (H_{\lambda,\Lambda_L(x)}-E)^{-1}\chi_z \Vert \Vert \Xi_{\Lambda_L(x)} \psi \Vert.
\end{split}\]
\end{lemm}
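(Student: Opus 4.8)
The plan is to derive both inequalities from the geometric resolvent identity, using crucially that $H_\lambda = H + \lambda V_X$ (cf.\ \eqref{eq:Anderson_model}) is a \emph{first-order} differential operator perturbed by a bounded multiplication operator. The first step is the commutator computation. Since $V_X$ is a multiplication operator, $[H_\lambda,\tilde\chi_{\Lambda_l(z)}] = [H,\tilde\chi_{\Lambda_l(z)}]$, and because $H$ is first order (the $m I_2$ blocks commute with $\tilde\chi_{\Lambda_l(z)}$ and the tunnelling potentials $U$ do as well) this commutator is a \emph{bounded matrix-valued multiplication operator} whose entries are constant multiples of first derivatives of $\tilde\chi_{\Lambda_l(z)}$; in particular it is supported in $\supp(\nabla\tilde\chi_{\Lambda_l(z)}) \subset \Lambda_{l-5/2}(z)\setminus\Lambda_{l-3}(z) \subset \Upsilon_l(z)$, with norm bounded by $C\,\Vert\nabla\tilde\chi_{\Lambda_l(z)}\Vert_\infty \le C'$ uniformly in $z$ and $l$ since the cut-offs transition over a fixed width. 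Consequently $[H_\lambda,\tilde\chi_{\Lambda_l(z)}] = \Xi_{\Lambda_l(z)}\,[H_\lambda,\tilde\chi_{\Lambda_l(z)}]\,\Xi_{\Lambda_l(z)}$.

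For the SLI I would set $\varphi := \tilde\chi_{\Lambda_{l'}(y')}$. Because $\varphi$ and $\nabla\varphi$ are supported strictly inside $\Lambda_{l'}(y')\subset\Lambda_L(x)$, multiplication by $\varphi$ maps the periodic-boundary-condition domains of both finite-volume operators into themselves, and on functions supported where $\varphi$ and $\nabla\varphi$ live the box operators agree with $H_\lambda$; this yields, for $E$ outside both finite-volume spectra (the hypothesis),
\[ (H_{\lambda,\Lambda_L(x)}-E)^{-1}\varphi = \varphi(H_{\lambda,\Lambda_{l'}(y')}-E)^{-1} + (H_{\lambda,\Lambda_L(x)}-E)^{-1}[H_\lambda,\varphi](H_{\lambda,\Lambda_{l'}(y')}-E)^{-1}. \]
Multiplying on the left by $\Xi_{\Lambda_L(x)}$ annihilates the first term, since $\supp\varphi\subset\Lambda_{l'}(y')$ is disjoint from the outer belt $\Upsilon_L(x)$ (the geometric meaning of $\Lambda_{l'}(y')\subsetneq\Lambda_L(x)$); multiplying on the right by $\chi_{\Lambda_{l''}(y)}$ changes nothing on that factor because $\varphi\equiv1$ on $\Lambda_{l''}(y)$ (as $\Lambda_{l''}(y)\subsetneq\Lambda_{l'}(y')$); and inserting $[H_\lambda,\varphi]=\Xi_{\Lambda_{l'}(y')}[H_\lambda,\varphi]\Xi_{\Lambda_{l'}(y')}$ and taking operator norms, using $\Vert[H_\lambda,\varphi]\Vert\le C'$, gives (SLI). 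The resulting constant is in fact uniform in $E\in J$ and in $J$, so in particular the stated $\lesssim_J$ holds.

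For the EDI I would take $\varphi := \tilde\chi_{\Lambda_L(x)}$ and note that $\varphi\psi$ lies in the periodic-boundary-condition domain of $H_{\lambda,\Lambda_L(x)}$, with
\[ (H_{\lambda,\Lambda_L(x)}-E)(\varphi\psi) = \varphi(H_\lambda-E)\psi + [H_\lambda,\varphi]\psi = [H_\lambda,\varphi]\psi \]
by the generalized eigenvalue equation $(H_\lambda-E)\psi = 0$; note $[H_\lambda,\varphi]\psi\in L^2$ since $[H_\lambda,\varphi]$ is a bounded operator with compact support. Hence $\varphi\psi = (H_{\lambda,\Lambda_L(x)}-E)^{-1}[H_\lambda,\varphi]\psi$, and multiplying by $\chi_z$ (with $z$ in the bulk of $\Lambda_L(x)$, so $\chi_z\varphi=\chi_z$) and inserting the belt projections gives $\chi_z\psi = \chi_z(H_{\lambda,\Lambda_L(x)}-E)^{-1}\Xi_{\Lambda_L(x)}[H_\lambda,\varphi]\Xi_{\Lambda_L(x)}\psi$. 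Taking norms, using self-adjointness of $(H_{\lambda,\Lambda_L(x)}-E)^{-1}$ at real $E$ to rewrite $\Vert\chi_z(H_{\lambda,\Lambda_L(x)}-E)^{-1}\Xi_{\Lambda_L(x)}\Vert = \Vert\Xi_{\Lambda_L(x)}(H_{\lambda,\Lambda_L(x)}-E)^{-1}\chi_z\Vert$, together with $\Vert\Xi_{\Lambda_L(x)}\psi\Vert<\infty$ (compact support of $\Xi_{\Lambda_L(x)}$) and $\Vert[H_\lambda,\varphi]\Vert\le C'$, yields (EDI).

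The main obstacle here is not analytic but the domain bookkeeping for the finite-volume operators with periodic boundary conditions: one must check that multiplication by a smooth cut-off supported strictly inside a box preserves the periodic-boundary-condition domain and that, on such functions, the restricted operator coincides with $H_\lambda$, so that the geometric resolvent identity is meaningful and the commutator genuinely lands on the belt. This is routine because the cut-offs and their gradients are supported away from the boundary seam. I would stress that here the first-order nature of $H_\lambda$ is a genuine simplification compared with the Schrödinger setting of \cite{GK01}: the commutator $[H_\lambda,\varphi]$ is a bounded multiplication operator, so no interior elliptic regularity estimate is needed to control it — which is exactly why the standard multi-scale ingredients carry over with only notational changes.
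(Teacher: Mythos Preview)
Your proof is correct and follows essentially the same route as the paper: both derive the geometric resolvent identity using the smooth cutoff $\tilde\chi_{\Lambda_{l'}(y')}$, observe that $[H_\lambda,\tilde\chi_{\Lambda_{l'}(y')}]$ is a bounded multiplication operator supported in the belt (precisely because $H$ is first order), and then sandwich with $\Xi_{\Lambda_L(x)}$ and $\chi_{\Lambda_{l''}(y)}$ to kill the leading term and obtain (SLI); the (EDI) argument via $\tilde\chi_{\Lambda_L(x)}\psi=(H_{\lambda,\Lambda_L(x)}-E)^{-1}[H_\lambda,\tilde\chi_{\Lambda_L(x)}]\psi$ is identical. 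The only cosmetic differences are that you state the geometric resolvent identity directly (the paper passes through the full-space operator $H_\lambda$ in two steps to reach the same formula), and your identity is written on the adjoint side with a harmless sign discrepancy in the commutator term that does not affect the norm estimate.
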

\begin{proof}
\begin{enumerate}
\item The proof of the SLI can be streamlined for linear differential operators with disorder of Anderson-type. We start from the following resolvent identity
\[ \begin{split}
(H_{\lambda}-E)\tilde \chi_{\Lambda_{l'}(y')}(H_{\lambda,\Lambda_L(x)}-E)^{-1} &= [ (H_{\lambda}-E),\tilde \chi_{\Lambda_{l'}(y')}](H_{\lambda,\Lambda_L(x)}-E)^{-1} \\
&\ + \tilde\chi_{\Lambda_{l'}(y')} (H_{\lambda}-E) (H_{\lambda,\Lambda_L(x)}-E)^{-1}.
\end{split}\]
Using that by assumption $\Lambda_{l'}(y') \subset \Lambda_{L}(x)$ we have $\chi_{\Lambda_{l'}(y')}H_{\lambda} = \chi_{\Lambda_{l'}(y')}H_{\lambda,\Lambda_L(x)}$ and find by substituting $\chi_{\Lambda_{l'}(y')}H_{\lambda} $ in the last line above 
\[ (H_{\lambda}-E)\tilde \chi_{\Lambda_{l'}(y')}(H_{\lambda,\Lambda_L(x)}-E)^{-1} = [ H_{\lambda},\tilde \chi_{\Lambda_{l'}(y')}](H_{\lambda,\Lambda_L(x)}-E)^{-1} + \tilde \chi_{\Lambda_{l'}(y')}.\]
Since $H_{\lambda}  \tilde\chi_{\Lambda_{l'}(y')} = H_{\lambda,\Lambda_{l'}(y')} \tilde\chi_{\Lambda_{l'}(y')}$ we find by multiplying the previous line by $(H_{\lambda,\Lambda_{l'}(y')}-E)^{-1}$ that 
\[ \begin{split} \tilde \chi_{\Lambda_{l'}(y')}(H_{\lambda,\Lambda_L(x)}-E)^{-1} &= (H_{\lambda,\Lambda_{l'}(y')}-E)^{-1}[ H_{\lambda,\Lambda_{l'}(y'),},\tilde \chi_{\Lambda_{l'}(y')}](H_{\lambda,\Lambda_L(x)}-E)^{-1} \\
&\quad + (H_{\lambda,\Lambda_{l'}(y')}-E)^{-1}\tilde\chi_{\Lambda_{l'}(y')}.
\end{split}\] 
Multiplying this equation from the left by $\chi_{\Lambda_{l''}(y)}$ and from the right by $\Xi_{\Lambda_L(x)}$, the SLI ready follows from the boundedness of $[ H_{\lambda,\Lambda_{l'}(y'),},\tilde \chi_{\Lambda_{l'}(y')}]$ and submultiplicativity of the operator norm, as $\tilde \chi_{\Lambda_{l'}(y')} \Xi_{\Lambda_L(x)}=0$ implies that the second term on the right vanishes and 
\begin{equation}
\label{eq:commutator}
[ H_{\lambda,\Lambda_{l'}(y'),},\tilde \chi_{\Lambda_{l'}(y')}] = \Xi_{\Lambda_{l'}(y')}[ H_{\lambda,\Lambda_{l'}(y'),},\tilde \chi_{\Lambda_{l'}(y')}] \Xi_{\Lambda_{l'}(y')}.
\end{equation}

\item For the proof of the EDI, it suffices to choose $\psi$ as in the Lemma and observe the resolvent identity $(H_{\lambda, x, L}-E)^{-1}[H_{\lambda}, \tilde \chi_{\Lambda_L(x)} ]\psi=\tilde \chi_{\Lambda_L(x)} \psi$ which is easily verified by using $(V_X-V_{X,\Lambda_L(x)}) \tilde \chi_{\Lambda_L(x)}=0$ and $H_{\lambda} \psi = E \psi.$
Using then an analog of \eqref{eq:commutator}, $[H_{\lambda}, \tilde \chi_{\Lambda_L(x)} ]= \Xi_{\Lambda_L(x)}[H_{\lambda}, \tilde \chi_{\Lambda_L(x)} ]\Xi_{\Lambda_L(x)},$ together with the boundedness of the commutator shows the claim.
\end{enumerate}
\end{proof}
We complete our preparations by discussing the estimate on the number of eigenvalues (NE) and the Wegner estimate (W). The estimate on the number of eigenvalues (NE) is stated in Proposition \ref{prop:Wegner}. 
The Wegner estimate is then obtained by applying the estimate in Proposition \ref{prop:Wegner} to the last expression in this set of inequalities
\begin{equation}
\label{eq:Wegner}
\begin{split} \mathbf P(d(\Spec(H_{\lambda,\Lambda_L}),E)< \eta) &= \mathbf P(\operatorname{rank}\indic_{(E-\eta,E+\eta)}(H_{\lambda,\Lambda_L}) \ge 1)\\
& \le \mathbf E(\tr(\indic_{(E-\eta,E+\eta)}(H_{\lambda,\Lambda_L}) )). \end{split}
\end{equation}
\subsection{Dynamical delocalization}
\label{sec:Dyn_loc}
In this subsection we prove Theorem \ref{theo:transport}.
To replicate the proof of delocalization in \cite{GKS}, we shall study the third power of the random Hamiltonian \eqref{eq:Anderson_model}, since $H^3(M) \hookrightarrow L^2(M)$, for $M$ a two-dimensional compact manifold, is a trace-class embedding\footnote{Recall that $\lambda_k \sim_{M} k$ is the Weyl asymptotics of the negative Laplacian in dimension 2; thus $\sum_k k^{-3/2} <\infty$} and $x \mapsto x^3$ is bijective, by defining
\[ S_{\lambda}:=H_{\lambda}^3,\]
{ where we raise $H_{\lambda}$ to the third power, as $(S_{\lambda}+i)^{-1} \indic_{\Lambda_L(x)}$ is trace-class.}
Let $\mathcal C_{\pm}:=\partial B_{\vert \lambda \vert \sup_{X \in \Omega}  \Vert V_X \Vert_{\infty}}(\pm m)$ such that $\mathcal C_{\pm}$ encircles the spectrum of the random perturbation of a single flat band, but nothing else (if $m=0$, then $\mathcal C_{\pm}$ both coincide, we shall explain the modifications of this case at the end of this section). This is possible for sufficiently small noise $\lambda>0$ as the flat bands at energies $\pm m$ are strictly gapped \eqref{eq:gap} from all other bands, in the absence of disorder.  We then define the $L^2(\CC;\mathbb C^4)$-valued spectral projection
\begin{equation}
\label{eq:spectral_proj}
 P_{\lambda,\pm} = -\frac{1}{2\pi i } \int_{\mathcal C_{\pm}^3} (S_{\lambda}-z)^{-1} \ dz,
 \end{equation}
 where by $\mathcal C_{\pm}^3$ we just mean the set of elements in $\mathcal C_{\pm}$ raised to the third power.
The delocalization argument rests on the following two pillars:
\begin{itemize}
\item If the random Hamiltonian exhibits only dynamical localization close to $\pm m$, then this implies that the partial Chern numbers of $ P_{\lambda,\pm}$, defined in section \ref{sec:Partial_Chern}, have to vanish, see Prop. \ref{prop:SUDEC}.
\item The partial Chern numbers of $ P_{\lambda,\pm}$ are invariant under disorder as well as small perturbations in $\alpha$ away from perfect magic angles.
\end{itemize}
As a consequence, the Hamiltonian exhibits dynamical delocalization at energies close to $\pm m$. To simplify the notation, we drop the $\pm$ and focus solely on $+m$, since $-m$ can be treated analogously. 

The central object in this discussion is the Hall conductance. Assuming 
\[ \Vert P[[P,\Theta_1],[P,\Theta_2]] \Vert_1<\infty \]
 for a spectral projection $P$ and multiplication operators $\Theta_1(z):=\indic_{[1/2,\infty)}(\Re z)$ and $\Theta_2(z):=\indic_{[1/2,\infty)}(\Im z),$ Hall conductance is defined by
\begin{equation}
\label{eq:Chern_number}
 \Omega(P) := \tr(P[[P,\Theta_1],[P,\Theta_2]]) = \tr([P\Theta_1 P , P \Theta_2P]).
 \end{equation}
Here, $\kappa = -i [P \Theta_1 P , P \Theta_2 P]$ is also called the \emph{adiabatic curvature} with Hall charge transport $Q = -2\pi \tr(\kappa).$ {That for projections \eqref{eq:spectral_proj} $Q$ is almost surely constant is discussed in \eqref{eq:constant}. That $Q$ is an integer is shown for example in \cite[Theorem $8.2$]{ASS} or \cite[(49),(58)]{BES94} where it is related to Chern characters and Fredholm indices, respectively. See also \cite{B88} for an interpretation of the expression \eqref{eq:Chern_number} just in terms of projections in a suitable operator algebra that does not rely on the specific underlying operator.} In \cite[Theorem $1$]{BES94}, this quantity is discussed for periodic and quasi-periodic operators. 

\begin{proof}[Proof of Theo. \ref{theo:transport}]
Since $H^3(M) \hookrightarrow L^2(M) \xrightarrow[]{\text{extension}}  L^2(\CC) $ is a trace-class embedding, for bounded open sets $M$, it follows that there is a universal constant $K_1>0$ such that for sufficiently small disorder $\lambda$  and $\mu \in \mathcal C^3$ with $\mathcal C^3$ as above in trace norm 
\begin{equation}
\label{eq:trace_norm}
  \Vert (S_{\lambda}-\mu)^{-1} \chi_z \Vert_1 \le K_1 \text{ for all } z \in \Gamma.
  \end{equation}
Next, we are going to construct an analog of the Combes-Thomas estimate (CTE) for the operator $S_{\lambda}$: 

By conjugating the operator $S_{\lambda}$ with $e^f$ where $f$ is some smooth function, we find
\[ e^f S_{\lambda} e^{-f} = S_{\lambda} + R_f,\]
where $$\Vert R_f \Vert_{L(H^3, L^2)} \lesssim \varepsilon \text{ if } \Vert \partial^{\beta} f \Vert_{\infty} \le \varepsilon \ll 1 \text{ for all } 1 \le \vert \beta \vert \le 3.$$
This implies that for $z \notin \Spec(S_{\lambda})$
\[ e^f (S_{\lambda}-z)e^{-f} = (\operatorname{id}+R_f (S_{\lambda}-z)^{-1}) (S_{\lambda}-z).\]
Thus, for $z \notin \Spec(S_{\lambda})$ and $\varepsilon>0$ sufficiently small such that $\Vert R_f (S_{\lambda}-z)^{-1}\Vert <1$, $$\Vert e^{-f} (S_{\lambda}-z)^{-1}e^{f}  \Vert_{L(L^2,H^3)}  = \mathcal O(\langle d(\Spec(S_{\lambda}),z)^{-1}\rangle).$$
We conclude that for $f(z):=\varepsilon \langle z-w_0 \rangle$ with $w_0\in \CC$ fixed, we have for all $w \in \CC$
\begin{equation}
\label{eq:CTE}
\tag{CTE}
\begin{split}
 \Vert \chi_{w_0} (S_{\lambda}-z)^{-1} \chi_w \Vert &= \Vert \chi_{w_0} e^{f} (e^{-f} (S_{\lambda}-z)^{-1}e^{f} ) e^{-f} \chi_w \Vert = \mathcal O\Big(\tfrac{e^{-\varepsilon \langle w-w_0 \rangle }}{d(\Spec(S_{\lambda}),z)}\Big),
\end{split}
\end{equation}
as well as
\begin{equation}
\begin{split}
\Vert \chi_{w_0}(S_{\lambda}-S_0)(S_{\lambda}-z)^{-1} \chi_w \Vert &=\Vert \chi_{w_0} e^{f} \Vert \Vert  e^{-f}(S_{\lambda}-S_0) e^{f} \Vert_{L(H^3,L^2)} \\
& \quad \times \Vert e^{-f} (S_{\lambda}-z)^{-1}e^{f}  \Vert_{L(L^2,H^3)} \Vert  e^{-f} \chi_w \Vert \\
&=\mathcal O( \tfrac{\Vert  e^{-f} \chi_w \Vert}{d(\Spec(S_{\lambda}),z)}) =   \mathcal O\Big(\tfrac{e^{-\varepsilon \langle w-w_0 \rangle }}{d(\Spec(S_{\lambda}),z)}\Big).
\end{split}
\end{equation}
From the Combes-Thomas estimate \eqref{eq:CTE} and 
\eqref{eq:spectral_proj} we find the exponential estimate 
\begin{equation} 
\label{eq:exp_estm}
\Vert \chi_{w_0} P_{\lambda} \chi_{w} \Vert \lesssim   e^{-\varepsilon \vert w-w_0 \vert}.
\end{equation}
By \cite[Lemma $3.1$]{GKS}, this implies that $$\Vert P_{\lambda} [[P_{\lambda},\Theta_1],[P_{\lambda},\Theta_2]]\Vert_1<\infty,$$
which implies that the Hall conductance is well-defined.
In fact, using \eqref{eq:trace_norm} we have
\begin{equation}
\label{eq:CT2} \begin{split}
\Vert \chi_w P_{\lambda} \chi_{w_0} \Vert_1 =  \mathcal O(1) \text{ and } 
\Vert \chi_w P_{\lambda} \chi_{w_0} \Vert^2_2 \le \Vert \chi_w P_{\lambda} \chi_{w_0} \Vert_1\Vert \chi_w P_{\lambda} \chi_{w_0} \Vert = \mathcal O(e^{-\varepsilon \vert w-w_0 \vert}).
\end{split}
\end{equation}

To obtain the invariance of the Chern number under small disorder, we now define 
\begin{equation}
 Q_{\lambda,\zeta} := P_{\zeta}-P_{\lambda} = \frac{\zeta-\lambda}{2\pi i } \int_{\mathcal C^3} (S_{\lambda}-z)^{-1} \frac{(S_{\zeta}-S_{\lambda})}{(\zeta-\lambda)} (S_{\zeta}-z)^{-1} \ dz.
 \end{equation}
then by \eqref{eq:CT2} we find 
\begin{equation}
\label{eq:HS_norm}
\Vert \chi_w Q_{\lambda, \zeta} \chi_{w_0}\Vert_2^2= \mathcal O(e^{-\varepsilon \vert w-w_0 \vert}).
\end{equation}

If the random potential has compact support, i.e. $H_{\lambda}$ in \eqref{eq:Anderson_model} is replaced by
\begin{equation}
\label{eq:compact_support}
H_{\lambda}(L) = H + \lambda V_X \text{ where } V_X = \sum_{\gamma \in \tilde \Lambda_L}X_{\gamma} u(\bullet-\gamma-\xi_{\gamma}),
\end{equation}
for some $L>0$, then by using a partition of unity and \eqref{eq:trace_norm}, we find $\Vert Q_{\lambda,\zeta} \Vert_1<\infty$ and consequently the traces of all commutators vanish
\begin{equation}
\label{eq:small_l}
\begin{split} \Omega(P_{\zeta})-\Omega(P_{\lambda}) =& \tr([Q_{\lambda,\zeta}\Theta_1 P_{\zeta}, P_{\zeta} \Theta_2 P_{\zeta}]+[ P_{\lambda}\Theta_1Q_{\lambda,\zeta}, P_{\zeta} \Theta_2 P_{\zeta}] \\
&+[ P_{\lambda}\Theta_1P_{\lambda},Q_{\lambda,\zeta} \Theta_2 P_{\zeta}]+[ P_{\lambda}\Theta_1P_{\lambda},P_{\lambda} \Theta_2Q_{\lambda,\zeta} ])=0.
\end{split}
\end{equation}
So the integer-valued map $\lambda \mapsto \Omega(P_{\lambda}) $ is constant for $\lambda$ small around zero, under the assumption of a compactly supported random potential in \eqref{eq:compact_support}.

It remains now to drop the compact support constraint on the random potential in \eqref{eq:compact_support}.
Let $S_{\lambda}(L)=H_{\lambda}(L)^3$, then we define
\begin{equation}
\label{eq:resolvent_identity}
Q_{\lambda,>L}:=P_{\lambda}-P_{\lambda}(L) = \frac{\lambda}{2\pi i} \int_{\mathcal C^3} (S_{\lambda}-z)^{-1}(S_{\lambda}-S_{\lambda}(L))(S_{\lambda}(L)-z)^{-1} \ dz,
\end{equation}
where $P_{\lambda}(L)$ is the corresponding spectral projection associated with $S_{\lambda}(L).$
By the Combes-Thomas estimates \eqref{eq:CTE} and the resolvent identity \eqref{eq:resolvent_identity}, we find 
\[\Vert \chi_w Q_{\lambda,>L} \chi_{w_0} \Vert \lesssim e^{-c\varepsilon ((L-R-\vert w \vert)_+ + (L-R-\vert w_0 \vert)_+ + \vert w-w_0 \vert)}\]
for some $c>0$, where we used that $S_{\lambda}-S_{\lambda}(L)$ is zero on $\Lambda_{L-R}(0).$
Thus, writing the difference of Hall conductivities yields the desired limit
\begin{equation}
\label{eq:difference}
\begin{split} \Omega(P_{\lambda})-\Omega(P_{\lambda}(L)) =& \tr(Q_{\lambda,>L}[[P_{\lambda}, \Theta_1], [P_{\zeta}, \Theta_2]]+P_{\lambda,L}[ [Q_{\lambda,>L}, \Theta_1], [P_{\lambda}, \Theta_2] ] \\
&+P_{\lambda, L}[[ P_{\lambda},\Theta_1],[Q_{\lambda,>L} ,\Theta_2] ])\to 0 \text{ as } L \to \infty.\end{split}
\end{equation}
Here, one uses the strong limit $s-\lim_{L \to \infty}Q_{\lambda,>L}=0$ to show the non-vanishing of the first term on the right-hand side in \eqref{eq:difference} and that 
\[\vert \tr(P_{\lambda,L}[ [Q_{\lambda,>L}, \Theta_1], [P_{\lambda}, \Theta_2] ])\vert \le 2\sum_{\gamma,\gamma' \in \Gamma} \Vert \chi_{\gamma}  [Q_{\lambda,>L}, \Theta_1] \chi_{\gamma'} \Vert_2 \Vert \chi_{\gamma'} [P_{\lambda}, \Theta_2] \chi_{\gamma}\Vert_2 \]
with a similar estimate for the last term in \eqref{eq:difference}. The last bound converges to zero for $L \to \infty$ by using \eqref{eq:CT2} and \eqref{eq:HS_norm}, see \cite[Lemma 3.1 (i)]{GKS} for details. 
Thus, the conductivity derived from $P_{\lambda}$ is locally constant in $\lambda$ and $\alpha$, see \eqref{eq:small_l}, which shows using \eqref{eq:Chern} that Chern numbers stay $\pm 1$, for $m >0$, respectively.

For $m=0$, we repeat the previous computation with our modified $\Omega_i$  \eqref{eq:definition} to arrive at the same conclusion. Thus, if, in the notation of \eqref{eq:random_spec}, $\Sigma \cap (-K_-,K_-) \subset \Sigma^{\operatorname{DL}}$ then this would contradict the non-vanishing of the (partial) Chern number, see \eqref{eq:P0}, in regions of full localization as shown in Prop. \ref{prop:SUDEC}.

The bound in the statement of Theorem \ref{theo:transport} follows then from \cite[Theo $2.10$]{GK04}.

\end{proof}

\subsection{Dynamical localization}
Working under assumptions \eqref{ass:Anderson}, we shall now study the localized phase of the Anderson model of the form
\begin{equation}
\label{eq:Anderson2}
 H_{\lambda} = H + V_X \text{ where } V_X = \sum_{\gamma \in \Gamma}X_{\gamma} u(\bullet-\gamma-\xi_{\gamma})
\end{equation}
{with $u$ as in Case 1. We eliminated the parameter $\lambda$ in the Hamiltonian above, because a small positive $\lambda$ could easily position the metal-insulator transition near the flat bands, which we want to avoid. Instead, we select a probability distribution with a fixed support while progressively concentrating more mass near zero. }Thus, we consider random variables $X_{\gamma}$ that are distributed according to a bounded density $g_{\lambda}$ with compact support in $[-\delta,\delta]$ with $\delta<\min(m,E_{\operatorname{gap}})$ for $m> 0$ and $\delta<E_{\operatorname{gap}}$ for $m=0$.
Here, $g_{\lambda}$ is a rescaled distribution $g_{\lambda}(u) = c_{\lambda} g(u/\lambda)/\lambda \indic_{[-\delta,\delta]}$, with $g>0$,  such that as $\lambda \downarrow 0$ the mass becomes concentrated near zero and $c_{\lambda}\le C$, uniformly in $\lambda$, is the normalization constant. 

By \eqref{eq:Kirsch_inclusion}, the spectrum $\Sigma $ is almost surely independent of $\lambda.$ Our next theorem shows that the mobility edges can be shown to be located arbitrarily close to the original flat bands, by choosing $\lambda$ small, while keeping the support of the disorder fixed, within the interval $[-\delta,\delta]$. This is the motivation for our modification of the Hamiltonian \eqref{eq:Anderson2}. 
{\begin{theo}[Mobility edge]
\label{theo:localization}
Let $\langle \bullet \rangle^{n} g$ be bounded for some $n>3$ and let $\tau \in (0,\tfrac{n-3}{n+1})$. Let $H_{\lambda}$ be as in Assumption \ref{ass:Anderson} with the modification that $\lambda$ is incorporated in the rescaled density, as described in \eqref{eq:Anderson2}  and $D \subset \mathbb C$ small enough. Then for any $m>0$ there exist at least two distinct dynamical mobility edges, denoted by $\mathscr E_{+}(\lambda) >\mathscr E_{-}(\lambda)$ such that
\[ \left\lvert \mathscr E_{+}(\lambda) - m \right\rvert+\left\lvert \mathscr E_{-}(\lambda) + m \right\rvert \lesssim \lambda^{1-\frac{4}{n+1} -\tau} \xrightarrow[ \lambda \downarrow 0]{} 0.\] 
In particular, 
\[ \left\{ E \in \Big(-\sqrt{E_{\operatorname{gap}}^2/2+m^2},\sqrt{E_{\operatorname{gap}}^2/2+m^2}\Big); \vert E \pm m \vert \gtrsim \lambda^{1-\frac{4}{n+1} -\tau} \right\} \subset \Sigma^{\operatorname{DL}},\]
where the region of dynamical localization $\Sigma^{\operatorname{DL}}$ has been defined in \eqref{eq:DL}.
In the case of $m=0$, the same result is observed, but with only at least one guaranteed mobility edge. 
\end{theo}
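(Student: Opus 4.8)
The plan is to invoke the bootstrap multi-scale analysis (MSA) of Germinet--Klein \cite{GK01,GK04} in the energy region away from the flat bands, and to contradict it near the flat bands via Theorem \ref{theo:transport}. Five of the six structural ingredients of the MSA are already at hand: the strong generalized eigenfunction expansion \textbf{SGEE} (Lemma \ref{lemm:SGEE}), the Simon--Lieb and eigenfunction-decay inequalities \textbf{SLI}, \textbf{EDI} (Lemma \ref{lemm:SLIEDI}), the number-of-eigenvalues and Wegner estimates \textbf{NE}, \textbf{W} (Proposition \ref{prop:Wegner} and \eqref{eq:Wegner}), and independence at a distance \textbf{IAD} (immediate from the i.i.d.\ structure of \eqref{eq:Anderson2}). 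What remains is an initial length scale estimate (ILSE), and it is here that the rescaling $g_{\lambda}(u)=c_{\lambda}g(u/\lambda)/\lambda$ enters: keeping the support of the coupling fixed in $[-\delta,\delta]$ but concentrating its mass near $0$ at scale $\lambda$ confines the smeared flat band to an $O(\lambda)$ neighborhood of $\pm m$ up to polynomially small probability, so that an energy $E$ at distance $t:=|E\mp m|\gtrsim\lambda^{1-\frac{4}{n+1}-\tau}$ behaves, at a suitably chosen initial scale, as though it were in a spectral gap of the finite-volume operator.

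To verify the ILSE, fix such an $E$ (the cases near $-m$, and near $0$ when $m=0$, are identical). Since $\alpha$ is a generic magic angle, $\Spec(H_{\Lambda_{L_0}})$ in a fixed neighborhood of $m$ reduces to the flat band $\{m\}$, gapped from the rest of the spectrum by $\gtrsim\min(m,E_{\operatorname{gap}}(\alpha))$ when $m>0$ (by $E_{\operatorname{gap}}(\alpha)$ when $m=0$), by Bloch--Floquet theory and \eqref{eq:gap}. Writing $H_{\lambda,\Lambda_{L_0}}=H_{\Lambda_{L_0}}+V_{X,\Lambda_{L_0}}$ with $\Vert V_{X,\Lambda_{L_0}}\Vert\lesssim\max_{\gamma}|X_{\gamma}|$, and using that the hypothesis that $\langle\bullet\rangle^{n}g$ is bounded yields the tail bound $\mathbf P(|X_{\gamma}|>s)\lesssim s^{-(n-1)}$ (and $g_{\lambda}(t)\lesssim\lambda^{n-1}t^{-n}$ for $t\gg\lambda$), a first-order perturbation argument combined with the \textbf{NE} bound of Proposition \ref{prop:Wegner} shows that any eigenvalue of $H_{\lambda,\Lambda_{L_0}}$ within $t/2$ of $E$ forces $|X_{\gamma}|\gtrsim t$ for some $\gamma$ in the bounded overlap region of the corresponding eigenfunction, whence $\mathbf E\,\tr(\indic_{(E-t/2,E+t/2)}(H_{\lambda,\Lambda_{L_0}}))\lesssim|\Lambda_{L_0}|\,(\lambda/t)^{n-1}$. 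By Markov's inequality, with probability at least $1-C|\Lambda_{L_0}|(\lambda/t)^{n-1}$ one has $d(E,\Spec(H_{\lambda,\Lambda_{L_0}}))\ge t/2$, and on that event the Combes--Thomas estimate of \eqref{eq:CTE} (valid for this first-order operator, with rate proportional to the distance to the spectrum) gives $\Vert\Xi_{\Lambda_{L_0}}(H_{\lambda,\Lambda_{L_0}}-E)^{-1}\chi_{x}\Vert\lesssim t^{-1}e^{-c\,t\,L_0}$. Choosing $L_0=L_0(\lambda)$ as a suitable power of $1/\lambda$ — large enough that $t\,L_0\to\infty$, so the box is $(\gamma_0,E)$-suitable in the sense of \cite{GK01} with a shrinking rate $\gamma_0=\gamma_0(\lambda)\sim t$, yet small enough that $|\Lambda_{L_0}|(\lambda/t)^{n-1}$ stays below the Germinet--Klein starting threshold — and bookkeeping the $\lambda$-powers (including the $\lambda^{-1}$ growth of the \textbf{W}/\textbf{NE} constant under the rescaling and the fourth power of the resolvent in \textbf{SGEE}) yields that the ILSE holds precisely once $t\gtrsim\lambda^{1-\frac{4}{n+1}-\tau}$, the constraint $\tau\in(0,\tfrac{n-3}{n+1})$ reflecting both the slack in this optimization and positivity of the exponent; smallness of $D\subset\CC$ is used so that the finite-volume flat-band eigenvalues are genuinely governed by the local couplings.

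With the ILSE in place, the bootstrap MSA of \cite{GK01,GK04} applies on every compact subinterval of $\{E:\ |E\mp m|\ge\lambda^{1-\frac{4}{n+1}-\tau}\}$ lying inside $(-\sqrt{E_{\operatorname{gap}}(\alpha)^2/2+m^2},\sqrt{E_{\operatorname{gap}}(\alpha)^2/2+m^2})$ and delivers complete, SUDEC-type dynamical localization there (Definition \ref{defi:Sudec}); for energies at an $O(1)$ distance from $\pm m$ inside this window nothing needs to be proved, since they lie in $\RR\setminus\Sigma$ by \eqref{eq:random_spec}. Hence $\{E\in(-\sqrt{E_{\operatorname{gap}}(\alpha)^2/2+m^2},\sqrt{E_{\operatorname{gap}}(\alpha)^2/2+m^2}):\ |E\pm m|\gtrsim\lambda^{1-\frac{4}{n+1}-\tau}\}\subset\Sigma^{\operatorname{DL}}$ as defined in \eqref{eq:DL}, while Theorem \ref{theo:transport} produces energies $E_{\pm}(\lambda)\in\Sigma^{\operatorname{DD}}=\RR\setminus\Sigma^{\operatorname{DL}}$ arbitrarily near $\pm m$; these must therefore lie within $\lambda^{1-\frac{4}{n+1}-\tau}$ of $\pm m$. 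Since $\Sigma^{\operatorname{DL}}\cap\Sigma$ accumulates at $\pm m$ along the smeared flat band, the boundary between the delocalized region around $\pm m$ and the surrounding localized region provides mobility edges $\mathscr E_{+}(\lambda)>\mathscr E_{-}(\lambda)$ in $\Sigma^{\operatorname{DD}}\cap\overline{\Sigma^{\operatorname{DL}}\cap\Sigma}$ with $|\mathscr E_{+}(\lambda)-m|+|\mathscr E_{-}(\lambda)+m|\lesssim\lambda^{1-\frac{4}{n+1}-\tau}\to 0$ as $\lambda\downarrow 0$; when $m=0$ the two energies $\pm m$ coincide, so the argument guarantees only a single mobility edge.

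The hard part is the quantitative ILSE of the second paragraph: one must balance three competing scales — the polynomial tail $(\lambda/t)^{n-1}$ of the rescaled coupling, the Combes--Thomas rate $\sim t$ (which degrades as $E$ approaches the flat band), and the Germinet--Klein starting threshold — against the single free parameter $L_0$, while tracking the $\lambda^{-1}$ blow-up of the Wegner/NE constant caused by $g_{\lambda}(u)=c_{\lambda}g(u/\lambda)/\lambda$. A secondary but genuine point is that the MSA is launched with a $\lambda$-dependent decay rate $\gamma_0(\lambda)\to 0$; this is immaterial for the existence of mobility edges, since one still obtains $\beta_{\lambda}(E)=0$ throughout the localized region, but the bootstrap must be cited in the form that tolerates it.
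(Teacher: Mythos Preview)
Your overall strategy is the paper's: verify the MSA ingredients, establish an initial length-scale estimate at energies a distance $t$ from $\pm m$, then combine with Theorem~\ref{theo:transport}. Two points deserve cleaning up.

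First, the ILSE argument is simpler than you make it. The paper does not go through Proposition~\ref{prop:Wegner} or through ``the bounded overlap region of the corresponding eigenfunction'' (which is not a well-defined object here), nor does it prove the refined estimate $\mathbf E\,\tr(\indic_{(E-t/2,E+t/2)}(H_{\lambda,\Lambda_{L_0}}))\lesssim|\Lambda_{L_0}|(\lambda/t)^{n-1}$ that you assert. Instead it uses the elementary observation that if \emph{every} coupling in the box satisfies $|X_\gamma|\le\varepsilon$, then $\|V_{X,\Lambda_{L_0}}\|\lesssim\varepsilon$ and the finite-volume spectrum near $m$ is confined to $[m-\varepsilon,m+\varepsilon]$; the tail bound $\mathbf P(|X|\ge\varepsilon)\lesssim(\lambda/\varepsilon)^{n-1}$ and a union bound over $\sim L_0^2$ sites give this event probability $\ge 1-CL_0^2(\lambda/\varepsilon)^{n-1}$. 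Your Markov/NE detour ends at the same probability bound, so no harm is done, but Proposition~\ref{prop:Wegner} plays no role in this step.

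Second, ``bookkeeping the $\lambda$-powers'' is where the exponent actually appears, and the paper carries it out rather than asserting it. The relevant finite-volume criterion is \cite[Theorem~2.4]{GK02} (not \cite{GK01} or \cite{GK04}), which in this setting reads $\tfrac{C_1 L_0^{25/3}}{\lambda\varepsilon}e^{-C_2\varepsilon L_0}<1$, the extra $1/\lambda$ coming from $\|g_\lambda\|_\infty=\mathcal O(1/\lambda)$ in the Wegner constant. Setting $\varepsilon=C\lambda L_0^{2/(n-1)}$ makes the union bound small, and then $L_0^{2/(n-1)}=\lambda^{-4/(n+1)-\tau}$ (so $L_0^{(n+1)/(n-1)}=\lambda^{-2-\tau(n+1)/2}$) makes the exponential beat the prefactor; this yields $\varepsilon\sim\lambda^{1-4/(n+1)-\tau}$, hence the stated threshold on $|E\mp m|$. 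Your concern about $\gamma_0(\lambda)\to 0$ is legitimate and is exactly why one cites the explicit criterion of \cite{GK02} rather than the abstract bootstrap.
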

\begin{proof}
We start by observing that using the $L^{\infty}$ bound on $\langle \bullet \rangle^{n} g$, we have for any $\varepsilon>0$ and $X \sim g_{\lambda}$ 
\begin{equation}
\begin{split}
\label{eq:flambda} \mathbf P(\vert X \vert \ge \varepsilon) &=\int_{\delta \ge \vert x \vert \ge \varepsilon} g_{\lambda}(x) \ dx \lesssim \int_{\delta/\lambda \ge \vert x \vert \ge \varepsilon/\lambda} g(x) \ dx\lesssim  \langle \lambda/\varepsilon \rangle^{n-1}.
\end{split} 
\end{equation}
Thus, for the probability of the low-lying spectrum to be contained in a small interval $[-\varepsilon,\varepsilon]$, we find for $L_0 \gg 1$ fixed and $R>0$ such that $\supp u \subset \Lambda_R(0)$

\begin{equation}
\label{eq:estimates234}
 \begin{split}
\mathbf P&\Big( \Spec(H_{\lambda,\Lambda_{L_0}} ) \cap \Big(-\sqrt{E_{\operatorname{gap}}^2/2+m^2},\sqrt{E_{\operatorname{gap}}^2/2+m^2}\Big) \subset \pm m+ [-\varepsilon,\varepsilon]\Big) \\
\overset{\text{union bound}}{\ge} &\mathbf P(\vert X_{\gamma}\vert \le \varepsilon/2; \gamma \in \tilde \Lambda_{L_0+R}(x) )  \\
\overset{\eqref{eq:flambda}}{\ge} &(1-C(\lambda/ \varepsilon)^{n-1})^{(L_0+R)^2} \\
\overset{\text{Bernoulli}}{\ge} &1- C (\lambda/ \varepsilon)^{n-1}{(L_0+R)^2},
\end{split}
\end{equation}
for small enough $\lambda/\varepsilon${, where in the following we replace $L_0+R$ just by $L_0,$ so that our estimates are valid for $D \subset \mathbb C$ small enough.}
This probability is large, if we choose 
\begin{equation}
\label{eq:cond1}
\varepsilon =C \lambda L_0^{\frac{2}{n-1}}
\end{equation}
for $C\gg 1.$ In this case, we have that $\lambda/\varepsilon = 1/(C L_0^{2/(n-1)})$ is small by choosing $C L_0^{2/(n-1)}$ large. This is precisely what we assume in \eqref{eq:estimates234} 
The choice of $\varepsilon$ in \eqref{eq:cond1} ensures that the probability in \eqref{eq:estimates234} is close to $1$. This shows that the spectrum is with high probability close to the flat band energies.
 To prove localization, one chooses $L_0 \gg 1$ large enough, as specified in \cite[(2.16)]{GK02} and $0<\lambda \ll 1$. 
We now fix an energy $\sqrt{E_{\operatorname{gap}}^2/2+m^2} \ge \vert E \vert $ such that $\vert E \pm m \vert \ge 2 \varepsilon$ with $E \in \Sigma$ . Then $E$ is, with high probability, a distance $\varepsilon>0$ from the spectrum of the finite-size Hamiltonian $H_{\lambda,\Lambda_{L_0}}$. 

To show localization, we verify the finite-size criterion of \cite[Theorem $2.4$]{GK02}. This provides another condition in addition to \eqref{eq:cond1}. In our setting, the finite-size criterion stated in \cite[Theorem $2.4$, (2.17)]{GK02} takes the following form
\begin{equation}
\label{eq:cond2}
 \frac{C_1 L_0^{25/3}}{\lambda \varepsilon}e^{-C_2 \varepsilon L_0} <1
 \end{equation}
 for two constants $C_1,C_2>0.$
 The term $L_0^{25/3}$ is obtained from \cite[Theorem $2.4$]{GK02}  by choosing (in the notation of \cite{GK02}) $b=1, d=2$, and performing a union bound over a partition of $\Gamma_0$ and $\chi_{0,L_0/3}$ which accounts for another $L_0^3.$ The $\lambda$ in the denominator is due to the scaling of the constant in the Wegner estimate which for us is proportional to the supremum norm of the density, i.e. $\Vert g_{\lambda}\Vert_{\infty} = \mathcal O(1/\lambda)$. 
 
By \cite[Theorem $2.4$]{GK02}, one concludes localization if both \eqref{eq:cond1} and \eqref{eq:cond2} hold and \eqref{eq:estimates234} holds with large probability.
 
 Setting then $\varepsilon := C_3 \lambda L_0^{\frac{2}{n-1}}$ with $C_3>0$ sufficiently large as specified in \eqref{eq:cond1}, we find that \eqref{eq:cond2} becomes
\[ \frac{C_1 L_0^{25/3}}{C_3 \lambda^2 L_0^{\frac{2}{n-1}}}e^{-C_2 C_3 \lambda^2 L_0^{\frac{n+1}{n-1}}}<1.\]
We now also set $L_0^{\frac{2}{n-1}} = \lambda^{-\frac{4}{n+1} -\tau}$ with $\tau(n)>0$ small such that $-\frac{4}{n+1} -\tau>-1.$ This means that $L_0^{\frac{n+1}{n-1}}  = \lambda^{-2 -\tau \frac{n+1}{2}}$, which implies that for $\lambda$ small enough, \eqref{eq:cond2} also holds. 

The characterization of the localized regime then follows from \cite[Theorem $2.4$]{GK02}, the existence of a mobility edge follows together with Theorem \ref{theo:transport}.
\end{proof}}

\section{Decay of point spectrum and Wannier bases}
\label{sec:spec_deloc}
We now give the proof of Theorem \ref{theo:spectral_deloc} and \ref{theo:spectral_deloc2}. We focus primarily on the first case, explaining the modifications required for the second result at the end. 
\begin{proof}[Proof of Theo.\ref{theo:spectral_deloc} \& \ref{theo:spectral_deloc2}]
We first reduce the analysis to $\lambda=0.$ By $\lambda$-continuity of the random perturbation, the spectral projections $P_0 = \indic_{J_{\pm}}(H_0)$ and $P_{\lambda} = \indic_{J_{\pm}}(H_{\lambda})$ with $J_{\pm}$ as in \eqref{eq:ivals} 
\[ \Vert P_0 - P_{\lambda} \Vert = \mathcal O(\vert \lambda \vert)\]
by using e.g. the resolvent identity and holomorphic functional calculus and the spectral gap of the Hamiltonian. 
Thus, for $\lambda$ small enough there is an isometry \cite[Lemma $10$]{BES94} \cite[Theo.6.32]{Kato} $U$, also known as the Kato-Nagy formula \cite{Ka55,SN47}, such that $UU^* = P_0$ and $U^*U=P_{\lambda}.$ In particular $P_{0}U = U P_{\lambda}.$ It then follows that $U$ has a Schwartz kernel $K$ that is exponentially close to the identity, cf. \cite[Lemma $8.5$]{CMM19}.
By this we mean that there is $m>0$ such that
\[ \vert K(z,z')-1\vert  = \mathcal O(e^{-m \vert z-z'\vert}).\]
The Schur test for integral operators implies that $\tilde U:=\langle \bullet-z_0 \rangle U  \langle \bullet-z_0 \rangle^{-1}$ is a family of operators uniformly bounded in $z_0 \in \CC$. 
This implies that for any $\varphi \in L^2(\CC;\CC^4)$ 
$$ \langle \bullet-z_0 \rangle^{1+\delta} P_{0} U\varphi = \langle \bullet-z_0 \rangle^{1+\delta} U \langle \bullet-z_0 \rangle^{-1-\delta} \langle \bullet-z_0 \rangle^{1+\delta} P_{\lambda}\varphi.$$
Taking norms, we find, using that $\Vert \langle \bullet-z_0 \rangle^{1+\delta} P_{\lambda}\varphi\Vert < \infty$ by assumption, that
\[ \Vert \langle \bullet-z_0 \rangle^{1+\delta}  P_{0} U\varphi \Vert  < \infty.\]
This implies, by choosing for $U\varphi$ an orthonormal basis of $\overline{\operatorname{ran}}(P_0)$, i.e. $(\psi_n)$ is an orthonormal basis of $\overline{\operatorname{ran}}(P_0)$, then $\varphi_n:=U^*\psi_n$, that $P_0 $ exhibits a $(1+\delta)$-localized generalized Wannier basis. Since $P_0$ is precisely the projection onto $\ker(D(\alpha))$, we deduce that $P_0$ exhibits a non-zero Chern number, see \eqref{eq:Chern}, and therefore do not possess a $(1+\delta)$-localized Wannier basis, see \cite{LS21} which gives a contradiction.

Conversely, let $P_k(\alpha)= (2\pi i)^{-1} \oint_{\gamma} (z-H_{k}(0,\alpha))^{-1} \ dz$, where $\gamma$ is a sufficiently small circle around zero that encircles only the flat band eigenvalue but nothing else in the spectrum of $H_k(0,\alpha)$. Then $P_k(\alpha)$ is the spectral projection onto the flat band eigenfunction of $H_k$. Since $k \mapsto H_k$ is real-analytic, this implies that $k \mapsto P_k$ is real-analytic. 
 Moreover, since $H_{k-\gamma^*}(\alpha) = \tau(\gamma^*) H_k(\alpha) \tau(\gamma^*)^{-1}$ with $\tau_{\gamma}(z):=e^{i\Re( z\overline{\gamma^*})}$ with $\gamma^* \in \Gamma_3^*,$ the spectral projection satisfies the covariance relation
 \[P_{k-\gamma}(\alpha) = \tau(\gamma^*) P_k(\alpha) \tau(\gamma^*)^{-1}. \]
 It then follows from \cite[Theo. $2.4$]{MPPT} that there exists an associated Wannier basis that satisfies $\Vert \langle \bullet \rangle^{p/2} w_{\gamma} \Vert^2_{L^2(\CC)}\le C <\infty$ for $p<1$ and all $\gamma \in \Gamma$ for the unperturbed periodic problem. Reversing the argument provided in the first part of the proof, it follows that the randomly perturbed problem also exhibits a Wannier basis. 
 
To show Theorem \ref{theo:spectral_deloc2} one proceeds analogously and notices that $P_{\pm,\lambda=0}$ corresponds to the projections onto $\ker(D(\alpha))$ and $\ker(D(\alpha)^*)$, each exhibiting a nonzero Chern number. 
\end{proof}

With this result at hand, we are able to evaluate the quantity \eqref{eq:HS_quantity} for the unperturbed Hamiltonian, providing a link between the dynamical and spectral theoretic notion of (de)-localization.
\begin{prop}
\label{prop:wannier}
Let $\alpha$ be a simple magic angle, as in Def. \ref{def:generic_mag_angle}, then for all $p \ge 1$
\[ \left\lVert \langle \bullet \rangle^{p/2} e^{-itH(\alpha)}P_{\ker(H(\alpha))}\indic_{\CC/\Gamma_3} \right\rVert_2^2 = \infty, \]
while this expression is finite for $p<1.$ 
\end{prop}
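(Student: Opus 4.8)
The plan is to reduce the statement for the time-evolved projection to a static statement about the spectral projection $P_{\ker(H(\alpha))}$, using unitarity of $e^{-itH(\alpha)}$, and then to connect the finiteness of the weighted Hilbert--Schmidt norm to the existence of a suitably localized generalized Wannier basis, which is governed by the Chern number of the flat band. Concretely, write $P:=P_{\ker(H(\alpha))}$, which for a simple magic angle is the rank-one-per-cell projection onto the flat band; by \eqref{eq:Chern}, the Bloch bundle associated with one of the two flat band components has Chern number $\pm 1$, while the full $P$ (for $m=0$) has net Chern number $0$. Since the statement of Proposition \ref{prop:wannier} involves $P_{\ker(H(\alpha))}$, one should be careful: for $m=0$ the net Chern number vanishes, so the obstruction must come from the $C_{2z}T$-type sublattice structure, i.e.\ from looking at $P$ together with the grading $\operatorname{diag}(I_2,-I_2)$. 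I would therefore first split $P = P_{\ker D(\alpha)} \oplus P_{\ker D(\alpha)^*}$ and note that each summand individually carries Chern number $\mp 1$, and it suffices to prove the claimed divergence for one summand, say $P_{\ker D(\alpha)}$, since the two pieces live on orthogonal subspaces and the $\langle\bullet\rangle^{p/2}$ weight is diagonal.

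The first step is the reduction to a static statement. Since $e^{-itH(\alpha)}$ commutes with $P$ (functional calculus) and is unitary, for a single summand $P_\flat \in \{P_{\ker D(\alpha)}, P_{\ker D(\alpha)^*}\}$ we have $\langle\bullet\rangle^{p/2} e^{-itH(\alpha)} P_\flat \indic_{\CC/\Gamma_3} = \langle\bullet\rangle^{p/2}\, e^{-itH(\alpha)} P_\flat\, P_\flat \indic_{\CC/\Gamma_3}$, and one checks that finiteness of the Hilbert--Schmidt norm is equivalent to finiteness of $\|\langle\bullet\rangle^{p/2} P_\flat \indic_{\CC/\Gamma_3}\|_2^2$, up to a $t$-independent constant, because conjugating the weight by the unitary $e^{-itH(\alpha)}$ changes it by a bounded (in $t$, by Combes--Thomas / finite propagation estimates for the flat band projector, which has a real-analytic Bloch family and hence exponentially localized kernel) multiplicative factor — here one uses that $P_\flat$ itself has an exponentially decaying integral kernel, so $\langle\bullet\rangle^{p/2} P_\flat \langle\bullet\rangle^{-p/2}$ is bounded, and similarly $e^{\pm itH}$ maps the relevant weighted spaces boundedly on $\operatorname{ran} P_\flat$. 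The upshot is: the displayed quantity is finite for some/all $t$ iff $\|\langle\bullet\rangle^{p/2}\, \psi\|_{L^2} < \infty$ for the (unique up to phase, per quasimomentum) flat-band Bloch functions assembled over a fundamental cell, i.e.\ iff the flat band admits a $p/2$-localized generalized Wannier function in the sense of Definition \ref{def:Wannier}.

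The second step is to invoke the dichotomy already proved in this paper. By the argument in the proof of Theorems \ref{theo:spectral_deloc}--\ref{theo:spectral_deloc2} (the $\lambda=0$ case, which is exactly the periodic flat-band projector), $P_{\ker D(\alpha)}$ does \emph{not} admit a $1+\delta$-localized generalized Wannier basis for any $\delta>0$, because it has nonzero Chern number $-1$, citing \cite{LS21}; conversely it does admit a $1-\delta$-localized one, via \cite{MPPT} (or \cite{MPPT}/\cite{MMP} as in the proof above), using real-analyticity and the covariance relation $P_{k-\gamma^*}=\tau(\gamma^*)P_k\tau(\gamma^*)^{-1}$. Translating back through Step~1: for $p\ge 1$ we would need a $p/2 \ge 1/2$... — here I must be careful about the exponent bookkeeping. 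Definition \ref{def:Wannier} with parameter $s$ controls $\langle\bullet\rangle^{2s}$, whereas \eqref{eq:HS_quantity} has $\langle\bullet\rangle^{p}$ inside the squared norm, so $2s = p$, i.e.\ $s=p/2$; the obstruction from \cite{LS21} is at $s=1$, i.e.\ at $p=2$. To reach the threshold $p=1$ claimed in the proposition one needs the sharper statement that the Chern number obstructs $s=1/2+\epsilon$ localization measured in the single-band (rank-one per cell) sense — this is the known sharp threshold for one-dimensional-in-$k$ obstructions; more precisely, for a Chern-nontrivial line bundle over the $2$-torus the optimal Wannier localization is $\langle\bullet\rangle^{s}$-bounded for all $s<1/2$ and fails at $s=1/2$ (the $|x|^{-1}$-type decay), which gives exactly $p<1$ finite and $p\ge 1$ infinite. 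So the clean way is: cite the sharp non-localization for Chern-nontrivial bundles (e.g.\ \cite{LS21} combined with the Fourier-analytic lower bound, or \cite{MPPT}) to get failure at $s=1/2$, and cite the matching construction for $s<1/2$.

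The main obstacle, and the step I would spend the most care on, is precisely this sharp exponent $p=1$ rather than the softer $p<2$. Getting $p\ge 1 \Rightarrow \infty$ requires the optimal regularity/decay statement for Wannier functions of a Chern-nonzero line bundle on $\TT^2$ — that a smooth section of a degree-$d\ne 0$ line bundle over $\TT^2$ cannot have a Fourier transform decaying faster than $|x|^{-1}$, equivalently the Berry connection has a nonremovable singularity forcing $\widehat{w}(x)\sim |x|^{-1}$ along some direction. This is standard (it underlies the Thouless/Brouder--Panati--Calandra--Mourougane--Nenciu obstruction theory), but one must phrase it so that the borderline $s=1/2$ is genuinely excluded and $s<1/2$ genuinely achieved; I would extract it from \cite{MPPT} or \cite{LS21} and state it as a lemma. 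Once that borderline statement is in hand, Steps 1 and 2 are routine: unitary invariance plus exponential localization of $P_\flat$ to move the weight around, then plug in the optimal Wannier decay. The $p<1$ finiteness then follows immediately from the $s<1/2$ construction of a localized Wannier basis, since $\indic_{\CC/\Gamma_3}$ restricts to a single fundamental cell and the sum over the lattice of $\langle\gamma\rangle^{p}\langle\gamma\rangle^{-2s}$-type tails converges when $p<2s+\text{(two-dimensional volume factor correction)}$ — again I would double-check the dimensional bookkeeping, but with $s$ just below $1/2$ one gets convergence for $p<1$.
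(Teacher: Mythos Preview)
Your overall strategy --- split $P_{\ker H(\alpha)}$ into the two sublattice pieces $P_{\ker D(\alpha)}\oplus P_{\ker D(\alpha)^*}$ and reduce the question to Wannier (non)localization for a Chern-nontrivial line bundle --- is exactly the paper's. But two of your intermediate steps are either unnecessarily heavy or genuinely incomplete compared with what the paper does.

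\textbf{Removing the time evolution.} You invoke unitarity, Combes--Thomas, exponential localization of the projector, and boundedness of $\langle\bullet\rangle^{p/2}P_\flat\langle\bullet\rangle^{-p/2}$. None of this is needed: since $H(\alpha)\psi=0$ for $\psi\in\ker H(\alpha)$, one has $e^{-itH(\alpha)}P_{\ker H(\alpha)}=P_{\ker H(\alpha)}$ identically. The paper's first displayed line is precisely this trivial reduction. Your argument is not wrong, but it obscures a one-line observation.

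\textbf{The link between the Hilbert--Schmidt norm and a single Wannier function.} This is the step where your proposal has a real gap. You assert that finiteness of $\|\langle\bullet\rangle^{p/2}P_\flat\indic_{\CC/\Gamma_3}\|_2^2$ is equivalent to the flat band admitting a $(p/2)$-localized Wannier function, but you do not prove it; the ``i.e.'' hides the content. The paper instead does an explicit Bloch--Floquet computation: writing $P_{\ker(D(\alpha)+k)}=\varphi(\cdot,k)\otimes\varphi(\cdot,k)$ (rank one by the simple-magic-angle hypothesis) and setting $w_0:=\mathcal C\varphi$, one obtains the \emph{identity}
\[
\|\langle\bullet\rangle^{p/2}P_{\ker D(\alpha)}\indic_{\CC/\Gamma_3}\|_2^2
=\|\langle\bullet\rangle^{p/2}w_0\|_{L^2(\CC)}^2\cdot \|w_0\|_{L^2(\CC/\Gamma_3)},
\]
so that the whole question reduces to whether $\langle\bullet\rangle^{p/2}w_0\in L^2(\CC)$ for the single Wannier function $w_0$. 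Your sketch does not produce this formula, and the abstract arguments you list (exponential kernel decay, boundedness of weighted conjugates) do not by themselves give a two-sided equivalence with a \emph{specific} Wannier function --- at best they would give one direction.

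\textbf{The sharp threshold.} You correctly flag the exponent bookkeeping as the delicate point, and you are right that one needs the borderline statement rather than just the $1\pm\delta$ result. The paper closes the argument by invoking the non-existence of a $1$-localized Wannier basis together with the existence of a $(1-\delta)$-localized one (citing \cite{MPPT}), applied to the explicit $w_0$ above. Your proposed route via ``$|x|^{-1}$-type decay'' and the claim that the threshold is $s=1/2$ is not the standard formulation and would need a precise citation; in particular, the convergence heuristic in your last paragraph (``$\langle\gamma\rangle^p\langle\gamma\rangle^{-2s}$-type tails'') is too loose to pin down the borderline. Once you have the paper's identity, the threshold question is cleanly about a single function and can be read off from \cite{MPPT} directly.
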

\begin{proof}
 We start by observing that for an orthonormal basis $(f_n)$ of $L^2(\CC/\Gamma_3)$ and $(e_i)$ the standard basis of $\CC^4$
 \[\begin{split}
 \left\lVert \langle \bullet \rangle^{p/2} e^{-itH(\alpha)}P_{\ker(H(\alpha))}\indic_{\CC/\Gamma_3} \right\rVert_2^2 
 &= \left\lVert \langle \bullet \rangle^{p/2} P_{\ker(H(\alpha))}\indic_{\CC/\Gamma_3} \right\rVert_2^2\\
 &= \left\lVert \langle \bullet \rangle^{p/2} P_{\ker(D(\alpha)) \oplus \ker(D(\alpha)^*)}\indic_{\CC/\Gamma_3} \right\rVert_2^2\\
  &= \sum_{i=1}^4 \sum_{n \in \mathbb N}\left\lVert \langle \bullet \rangle^{p/2} P_{\ker(D(\alpha)) \oplus \ker(D(\alpha)^*)} f_n \otimes e_i \right\rVert^2\\
    &= \left\lVert \langle \bullet \rangle^{p/2} P_{\ker(D(\alpha))} \indic_{\CC/\Gamma_3} \right\rVert_2^2 + \left\lVert \langle \bullet \rangle^{p/2} P_{\ker(D(\alpha)^*)} \indic_{\CC/\Gamma_3} \right\rVert_2^2.
 \end{split}\]
 Without loss of generality, we shall focus on the first summand. 
Consider the unitary Bloch-Floquet transform $\mathcal Bu(z,k):=\sum_{\gamma \in \Gamma_3} e^{i \langle z+\gamma,k\rangle} \mathscr L_{\gamma}u(z)$, where $\mathscr L_{\gamma}$ has been defined in \eqref{eq:La}, with the convention that $\langle z,z_0\rangle:=\Re(z\bar z_0),$ and its inverse/adjoint $\mathcal C  v(z):=\int_{\CC/\Gamma_3^*} v(z,k) e^{-i\langle z,k\rangle} \ \frac{dm(k)}{\vert \CC/\Gamma_3^* \vert}$. 
We then find that
\begin{equation}
\label{eq:identity2}
 \mathscr L_{\gamma} \mathcal C  v(z):=\int_{\CC/\Gamma_3^*} v(z,k) e^{-i\langle z+\gamma,k\rangle} \ \frac{dm(k)}{\vert \CC/\Gamma_3^* \vert}  = \mathcal C(e^{-i \langle \gamma,k \rangle} v(z,k)).
 \end{equation}
Since by assumption $\ker(D(\alpha)+k) = \operatorname{span}\{\varphi(\bullet,k)\}$, we see that 
\begin{equation}
\label{eq:basis}
(e^{-i \langle \gamma,k \rangle} \varphi(z,k))_{\gamma \in \Gamma}\text{, for }\varphi(\bullet,k) \in L^2(\CC/\Gamma_3)\text{ normalized,}
\end{equation}
forms a basis of the space $\int^{\oplus}_{\CC/\Gamma_3^*} \ker(D(\alpha)+k) dk$. Indeed, orthonormality just follows from 
\begin{equation}
\label{eq:orthonormality}
\begin{split}
 \langle e^{-i \langle \gamma,k \rangle} \varphi(z,k), e^{-i \langle \gamma',k \rangle} \varphi(z,k) \rangle 
 &= \int_{\CC/\Gamma_3^*} \int_{\CC/\Gamma_3} \vert \varphi(z,k)\vert^2 e^{-i\langle \gamma-\gamma',k \rangle} \frac{dz \ dk}{\vert \CC/\Gamma_3^*\vert} \\
&= \int_{\CC/\Gamma_3^*} e^{-i\langle \gamma-\gamma',k \rangle} \frac{dk}{\vert \CC/\Gamma_3^*\vert} = \delta_{\gamma,\gamma'}
 \end{split}
 \end{equation}
and completeness from the completeness of the regular Fourier expansion, i.e. a general element in this subspace is of the form 
\[ \sum_{\gamma \in \Gamma_3^*} f(\gamma) e^{-i \langle \gamma,k \rangle} v(z,k) \text{ for } f \in \ell^2(\Gamma_3^*).\]

We then have $\mathcal B D(\alpha)\mathcal C \varphi(x,k) = (D(\alpha)+k)\varphi(x,k).$.
Recall the trivial decomposition of $L^2$ given by $L^2(\CC) = L^2(\CC/\Gamma_3) \oplus L^2(\CC \setminus (\CC/\Gamma_3)).$

We then find for the Hilbert-Schmidt norm using an orthonormal basis $(e_n)$ of $L^2(\CC/\Gamma_3)$
\[ \begin{split} 
&\left\lVert \langle \bullet \rangle^{p/2} P_{\ker(D(\alpha))}\indic_{\CC/\Gamma_3} \right\rVert_2^2=\left\lVert \langle \bullet \rangle^{p/2} P_{\ker(D(\alpha))}\indic_{\CC/\Gamma_3} \right\rVert_2^2\\
&=\sum_{n \in \ZZ} \Vert  \langle \bullet \rangle^{p/2} P_{\ker(D(\alpha))}e_n \Vert^2_{L^2(\CC)} =\sum_{n \in \ZZ} \Vert  \langle \bullet \rangle^{p/2} \mathcal C P_{\ker(D(\alpha)+k)} \mathcal B \indic_{\CC/\Gamma_3} e_{n} \Vert^2_{L^2(\CC)}. \end{split} \]
Since by assumption $P_{\ker(D(\alpha)+k)} = \varphi(\bullet,k) \otimes \varphi(\bullet,k) $ is a rank $1$ projection, we have 
\[ \begin{split}
\Vert  \langle \bullet \rangle^{p/2} \mathcal C P_{\ker(D(\alpha)+k)} \mathcal B \indic_{\CC/\Gamma_3} e_{n} \Vert^2_{L^2(\CC)} &=\Vert \langle \bullet \rangle^{p/2} \mathcal C \varphi \Vert^2_{L^2(\CC)} \vert \langle \varphi, \mathcal B( \indic_{\CC/\Gamma_3} e_{n} ) \rangle_{L^2(\CC/\Gamma_3 \times \CC/\Gamma_3^*)}\vert^2 \\
&=\Vert \langle \bullet \rangle^{p/2} \mathcal C \varphi \Vert^2_{L^2(\CC)} \vert \langle \mathcal C\varphi, \indic_{\CC/\Gamma_3} e_{n}  \rangle_{L^2(\CC)}\vert^2.\end{split} \]
This implies that 
\[\begin{split} \left\lVert \langle \bullet \rangle^{p/2}P_{\ker(D(\alpha))}\indic_{\CC/\Gamma_3} \right\rVert_2^2 &= \Vert \langle \bullet \rangle^{p/2} \mathcal C \varphi \Vert^2_{L^2(\CC)} \sum_{n\in \mathbb Z} \vert \langle \mathcal C\varphi, e_{n}  \rangle_{L^2(\CC/\Gamma_3)}\vert^2\\
 &=\Vert \langle \bullet \rangle^{p/2} \mathcal C \varphi \Vert^2_{L^2(\CC)} \Vert \mathcal C \varphi \Vert_{L^2(\CC/\Gamma_3)}. \end{split}\]
 However, a Wannier basis is obtained from $\mathcal C\varphi$ by defining $w_{\gamma}:=\mathscr L_{\gamma} \mathcal C \varphi$. Indeed, using \eqref{eq:orthonormality} functions $w_{\gamma}$ are an orthonormal basis of $\ker(D(\alpha))$ as \[ \langle w_{\gamma},w_{\gamma'} \rangle_{L^2(\CC)} = \langle \mathscr L_{\gamma}\mathcal C \varphi ,  \mathscr L_{\gamma'}\mathcal C \varphi \rangle_{L^2(\CC)} = \delta_{\gamma,\gamma'}\]
 and span $\ker(D(\alpha))$ due to \eqref{eq:identity2} and \eqref{eq:basis}.  
 Thus, we obtain since $\mathscr L_{\gamma}$ is an isometry that  $$\Vert \langle \bullet \rangle^{p/2} w_0 \Vert^2_{L^2(\CC)} = \Vert \mathscr L_{\gamma} \langle \bullet \rangle^{p/2} w_0 \Vert^2_{L^2(\CC)} = \Vert \langle \bullet + \gamma \rangle^{p/2} w_{\gamma}\Vert_{L^2(\CC)}.$$
From the non-existence of a $1$-localized Wannier basis and the existence of a $(1-\delta)$ Wannier basis, for any $\delta>0$, see for instance \cite{MPPT}, we find that $\Vert \langle \bullet \rangle^{p/2} w_0 \Vert^2_{L^2(\CC)}=\infty$ for $p\ge 1$ and is finite for $p<1.$

 \end{proof}

\begin{appendix}
\section{Essential self-adjointness}

In this appendix, we recall the essential self-adjointness of our Hamiltonian with even possibly unbounded disorder on $C_c^{\infty}(\CC)$. 
\begin{theo}
\label{theo:essentially_self}
The Hamiltonian $H_{\lambda}(\alpha)$ \eqref{eq:Anderson_model} is, under the more general assumptions, with $L^{\infty}(\RR)$-bounded density $g$ for random variables $(X_{\gamma})$ and arbitrary density $h$ is almost surely essentially self-adjoint on $C_c^{\infty}(\CC).$
\end{theo}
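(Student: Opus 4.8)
The plan is to verify the standard basic criterion for essential self‑adjointness of a symmetric operator. The operator $H_\lambda(\alpha)$ is symmetric on $C_c^\infty(\CC;\CC^4)$: the differential part $H(m,\alpha)$ is symmetric there by integration by parts, and (see below) $V_X$ is almost surely a locally bounded Hermitian matrix‑valued multiplication operator, hence symmetric on $C_c^\infty$. It therefore suffices to show $\ker(H_\lambda(\alpha)^*\mp i)=\{0\}$ almost surely. The structural fact I would exploit is that $H(m,\alpha)$ is a first‑order elliptic (Dirac‑type) operator with \emph{constant‑coefficient principal part}, so that commuting it through a spatial cutoff produces only a zeroth‑order error that is $\mathcal O(1/R)$ and supported in an annulus of radius $R$; moreover $V_X=V_X^*$ contributes nothing to the imaginary part of the relevant quadratic form, so the possibly unbounded (at infinity) disorder is harmless — morally because it does not alter the finite propagation speed of the Dirac operator.

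First I would record the almost sure local boundedness $V_X\in L^\infty_{\loc}(\CC;\CC^{4\times 4})$. Since $u$ has compact support, for any fixed ball $B$ the number of $\gamma\in\Gamma$ with $\operatorname{supp}u(\bullet-\gamma-\xi_\gamma)\cap B\neq\emptyset$ is almost surely finite: this is immediate when the $\xi_\gamma$ lie in a fixed compact set, and for a general density $h$ it follows from Borel–Cantelli because $\sum_{\gamma}\mathbf P(\gamma+\xi_\gamma\in B')<\infty$ for $B'$ bounded. As each $X_\gamma$ is almost surely finite, $V_X|_B$ is a finite sum of smooth bounded matrix functions, hence bounded, and $H_\lambda(\alpha)$ is a well‑defined symmetric operator on $C_c^\infty$.

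Now fix a realization with $V_X\in L^\infty_{\loc}$ and let $u\in\ker(H_\lambda(\alpha)^*\mp i)$, i.e.\ $(H(m,\alpha)+\lambda V_X)u=\pm iu$ in $\mathcal D'(\CC)$. Since $V_Xu\in L^2_{\loc}$ we get $H(m,\alpha)u\in L^2_{\loc}$, and elliptic regularity for the elliptic first‑order operator $H(m,\alpha)$ with smooth coefficients gives $u\in H^1_{\loc}(\CC;\CC^4)$. Choose $\chi\in C_c^\infty(\CC;[0,1])$ with $\chi\equiv 1$ on $B_1$ and $\operatorname{supp}\chi\subset B_2$, and set $\chi_R(z):=\chi(z/R)$, so that $\|\nabla\chi_R\|_\infty=\mathcal O(1/R)$ and $\nabla\chi_R$ is supported in $B_{2R}\setminus B_R$. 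Pairing the equation with the compactly supported $H^1$‑section $\chi_R^2u$ and taking imaginary parts, the term $\lambda\langle V_Xu,\chi_R^2u\rangle=\lambda\langle V_X(\chi_Ru),\chi_Ru\rangle$ is real because $V_X=V_X^*$, while $\langle\pm iu,\chi_R^2u\rangle=\pm i\|\chi_Ru\|_2^2$; hence
\[
\|\chi_Ru\|_2^2=\bigl|\Im\langle H(m,\alpha)u,\chi_R^2u\rangle\bigr|=\tfrac12\bigl|\langle u,[H(m,\alpha),\chi_R^2]u\rangle\bigr|,
\]
the last identity being the usual integration by parts, licit since $u\in H^1_{\loc}$ and $\chi_R^2u$ is compactly supported. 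Because the principal part of $H(m,\alpha)$ has constant coefficients, $[H(m,\alpha),\chi_R^2]$ is multiplication by a matrix‑valued function with entries $\mathcal O(1/R)$ supported in $B_{2R}\setminus B_R$, whence
\[
\|\chi_Ru\|_2^2\le\frac{C}{R}\int_{B_{2R}\setminus B_R}|u|^2\le\frac{C}{R}\|u\|_2^2\xrightarrow[R\to\infty]{}0 .
\]
Since $\|\chi_Ru\|_2^2\to\|u\|_2^2$ by monotone convergence, $u=0$, which proves $\ker(H_\lambda(\alpha)^*\mp i)=\{0\}$ almost surely.

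The only genuinely analytic points are (i) the elliptic regularity $u\in H^1_{\loc}$ for the distributional deficiency equation with merely $L^\infty_{\loc}$ potential, and (ii) the justification of the commutator integration by parts for $u$ only in $H^1_{\loc}$ — the latter handled by carrying out the computation on a ball large enough that $\chi_R^2u$ vanishes near its boundary (so no boundary term appears), or equivalently after mollifying $u$ and passing to the limit. Both are routine for Dirac‑type operators; once granted, essential self‑adjointness is insensitive to the growth of the disorder at infinity, precisely because the Hermitian multiplication operator $V_X$ drops out of the imaginary part and leaves the first‑order (finite‑speed) structure unchanged. I would expect step (i) to be the only point requiring any care to write out, and even that is standard.
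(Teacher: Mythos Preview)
Your argument is correct and is essentially the same as the paper's: both show the deficiency indices vanish by cutting off a solution of $(H_\lambda\pm i)\psi=0$, using that the commutator with the cutoff is the zeroth-order matrix $\mathcal O(1/R)$ term coming only from the constant-coefficient Dirac part (since $[V_X,\chi_R]=0$), and letting $R\to\infty$. The only cosmetic differences are that the paper expands $\|(H_\lambda\pm i)\eta_n\psi\|^2=\|\eta_n\psi\|^2+\|H_\lambda\eta_n\psi\|^2$ instead of taking imaginary parts, and invokes $C^\infty$ rather than $H^1_{\loc}$ elliptic regularity; your explicit Borel--Cantelli verification of almost sure local finiteness for arbitrary $h$ is a point the paper leaves implicit.
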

\begin{proof}
To see that $H_{\lambda}(\alpha)$ is essentially self-adjoint, we first observe that it is symmetric on $C_c^{\infty}(\CC).$ It thus suffices to show that for any $L^2$-normalized $\psi$
\[(H_{\lambda}(\alpha) \pm i) \psi= 0 \text{ implies }\psi\equiv 0, \]
i.e. the deficiency indices are zero.
Elliptic regularity and the assumption that $u \in L^{\infty} $ implies that $\psi \in C^{\infty}(\CC)$.
We then pick a cut-off function $\eta_n(z):=\eta(z/n)$ with $\eta\in C_c^{\infty}(\CC)$ and $\eta\vert_{B_1(0)} \equiv 1$ and find 
\[ (H_{\lambda}(\alpha) \pm i) \eta_n \psi = \begin{pmatrix} 0 & 2D_{z}\eta_n \cdot \operatorname{id}_{\mathbb C^2} \\ 2 D_{\bar z}\eta_n \cdot \operatorname{id}_{\mathbb C^2} & 0 \end{pmatrix}   \psi.\]
We conclude that 
\[ \Vert \eta_n \psi \Vert_2^2 + \Vert H_{\lambda}(\alpha) \eta_n \psi \Vert_2^2 = \Vert (H_{\lambda}(\alpha)  \pm i ) \eta_n \psi \Vert_2^2 \lesssim \Vert \nabla \eta_n \Vert_{\infty}^2 = \mathcal O(1/n^2) \xrightarrow[n \to \infty]{} 0.\]
Since $\eta_n \psi \to \psi$ by dominated convergence, we conclude that $\psi \equiv 0.$
\end{proof}

\section{Partial Chern numbers \& Euler numbers}
\label{sec:Partial_Chern}
Let $P$ be an orthogonal projection on $L^2(\CC;\CC^{2n})$ such that for some $\xi \in (0,1)$, $\kappa>0$, and $K_P<\infty$ we have 

\begin{equation}
\label{eq:decay} \Vert \chi_{z_0} P \chi_{z_1} \Vert_2 \le K_P \langle z_0 \rangle^{\kappa} \langle z_1 \rangle^{\kappa} e^{-\vert z_0-z_1\vert^{\xi}} \text{ for all }z_0,z_1 \in \Gamma.
\end{equation}
Spectral projections of Hamiltonians exhibiting (SUDEC) satisfies this property, as follows directly from Def. \ref{defi:Sudec}.
Let $\pi_1:=\operatorname{diag}(\operatorname{id}_{\CC^n},0)$ and $\pi_2:=\operatorname{diag}(0,\operatorname{id}_{\CC^n})$. By the definition of the Hilbert-Schmidt norm one finds for all $i,j$
\begin{equation}
\label{eq:HS_Bound} \Vert \chi_{z_0} \pi_i P\pi_j \chi_{z_1} \Vert_2 \le \Vert \chi_{z_0} P \chi_{z_1} \Vert_2  \le K_P \langle z_0 \rangle^{\kappa} \langle z_1 \rangle^{\kappa} e^{-\vert z_0-z_1\vert^{\xi}} \text{ for all }z_0,z_1 \in \Gamma.
\end{equation}
We define the new $\hat \Theta_{j}(i) :=\pi_i \Theta_j = \Theta_j \pi_i$ and replace \eqref{eq:Chern_number} by
\begin{equation}
\label{eq:definition}
 \Omega_i(P):=\tr(P [[P,\hat \Theta_1(i)],[P,\hat \Theta_2(i)]]) 
 \end{equation}
under the assumption of
\begin{equation}
\label{eq:well-posed}
 \vert \Omega_i(P)\vert :=\Vert P [[P,\hat \Theta_1(i)],[P,\hat \Theta_2(i)]]\Vert_1 <\infty.
 \end{equation}
\begin{rem} 
It is convenient to modify $\hat \Theta_i$ rather than $P$ in the definition of $\Omega$, since $\pi_i P \pi_j$ is in general no longer a projection, even for $i=j.$
\end{rem}

Since we still have that $[\hat \Theta_i,\hat \Theta_j ]=0$ we find the equivalent formulation of \eqref{eq:definition}
\begin{equation}
\label{eq:commutator2}
 \Omega_i(P) = \tr( [P\hat \Theta_1(i) P,P\hat \Theta_2(i) P]).
 \end{equation}
In particular, if $P$ is a finite-rank projection, we always find $\Omega_i(P)=0$, as \eqref{eq:commutator2} is a commutator of trace-class operators.

To provide further motivation for the above definition \eqref{eq:definition}, we shall consider the unperturbed Hamiltonian $H_{0} = \begin{pmatrix} m & D^* \\ D & -m \end{pmatrix}$ then $H_0^2 = \operatorname{diag}(D^*D+m^2 , DD^* + m^2 ) $ and consequently any spectral projection of $H_0^2$ is also diagonal and thus of the form $P_0 = \operatorname{diag}(P_0(1) , P_0(2)).$ Thus, we have 
\[ \Omega_i(P_0) = \tr( [P_0 \hat \Theta_1(i) P_0 , P_0 \hat \Theta_2(i) P_0]) = \Omega(P_0(i)),\]
where we recall from \eqref{eq:Chern} that for a generic magic angle and $P_0 = \indic_{[0,\mu]}(H_0^2)$ with $\mu \in (0,E_{\operatorname{gap}}^2)$
\begin{equation}
\label{eq:P0}
 \Omega_1(P_0) =\frac{i}{2\pi } \text{ and } \Omega_2(P_0)=-\frac{i}{2\pi }.
 \end{equation}

Thus, while $\Omega(P_0)=0$ for $m=0$, we have $\Omega_1(P_0),\Omega_2(P_0) \neq 0.$ The definition of $\Omega_i$ captures the non-trivial sublattice Chern numbers of twisted bilayer graphene while the total Chern number vanishes. The existence of these non-zero Chern numbers is due to the PT or $C_{2z}T$ symmetry of the system, which we explain in the following remark:
\begin{rem}[Euler number]
To illustrate ideas, we assume we are close to a simple magic $\alpha$ and define the complex vector bundle of rank $2$:
\[ \begin{split} \mathcal E_0:= \{ [k,\phi]_{\tau} \in (\CC \times L^2_0(\CC/\Lambda;\CC^4)) / \sim_{\tau}: \phi \in \indic_{E_{\pm 1}(\alpha,k)}(H_k(\alpha)) \} \\
(k,\phi) \sim_{\tau} (k',\phi') \Leftrightarrow \exists p \in \Gamma^* , k'=k+p , \phi' = \tau(p) \phi,
\end{split}\]
where $\tau(p)\phi = e^{i \Re(p \bar z ) } \phi.$
Then, we consider the real subbundle 
\[\mathcal E = \{ \varphi \in \mathcal E_0 ; \mathcal{PT} \varphi= \varphi\}. \]
The PT symmetry is defined as 
\[ \mathcal{PT}:=\begin{pmatrix} 0 & \mathscr Q \\ \mathscr Q & 0 \end{pmatrix} \text{ with } \mathscr Qv(z)=\overline{v(-z)}.\]
It is a real vector bundle of rank $2$ since for all $\varphi_1,\varphi_2 \in \mathcal E$
\[ \langle \varphi_1,\varphi_2 \rangle =\overline{ \langle  \mathcal{PT} \varphi_1, \mathcal{PT} \varphi_2 \rangle} = \overline{\langle \varphi_1,\varphi_2 \rangle}.\]
Similar to how the Chern number measures the triviality of complex vector bundles, it is the Euler number that measures the trivial of the real vector bundle $\mathcal E$. In our case, we can interpret $\mathcal E$ as a complex line bundle with Chern number $-1$. This is explained in more detail in the last section of \cite{FBM}. In this sense, the non-zero Chern numbers above are an effect of the symmetry of the system. 
\end{rem}

One also readily verifies the usual properties of Chern characters for our $\Omega_i$, see, for instance, \cite[Lemma 3.1]{GKS}, \cite{BES94}:
\begin{prop}
\label{prop:auxiliary}
Let $P$ be an orthogonal projection satisfying \eqref{eq:decay}, then 
\begin{enumerate}
\item  $\vert \Omega_i(P)\vert \lesssim_{\kappa,\xi} K_P^2.$
\item Let $s \in \mathbb R$ and define $\hat \Theta^{(s)}_{j,i}(t) :=\pi_i  \Theta_j(t-s) $, then 
\[ \Omega_i^{r,s}(P):=\tr(P[[P,\hat \Theta_{1,i}^{(s)}],[P,\hat \Theta_{2,i}^{(r)}]]) \text{ for }r,s \in \RR.\]
In particular, 
\begin{equation}
\label{eq:covariance_rel}
\Omega_i^{r,s} = \Omega_i.
\end{equation}
\item Let $P,Q$ be two orthogonal projections, each satisfying \eqref{eq:decay}, such that $PQ = QP = 0$, then \[ \Omega_i(P+Q) = \Omega_i(P)+\Omega_i(Q).\]
\end{enumerate}
\end{prop}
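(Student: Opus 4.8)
These are the familiar cocycle properties of the Hall conductance, and the plan is to transport the arguments of \cite[Lemma 3.1]{GKS} and \cite{BES94} to the modified functional \eqref{eq:definition}. Only a few structural inputs are needed: the multiplication operators $\hat\Theta_1(i),\hat\Theta_2(i)$ commute and satisfy $\Vert\hat\Theta_j(i)\Vert\le 1$ (both because $\pi_i$ is a constant matrix, hence commutes with every multiplication operator, while $\Theta_1,\Theta_2$ commute with each other); the compressions $\pi_iP\pi_j$ inherit the sub-exponential off-diagonal decay of $P$ by \eqref{eq:HS_Bound}; and $P[P,\hat\Theta_j(i)]=P\hat\Theta_j(i)(1-P)$, $[P,\hat\Theta_j(i)]P=(1-P)\hat\Theta_j(i)P$, so in particular $P[P,\hat\Theta_j(i)]P=0$.

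For \textbf{(1)} I would first expand the double commutator in \eqref{eq:definition} with the identities above to get
\[
P[[P,\hat\Theta_1(i)],[P,\hat\Theta_2(i)]]=-P\hat\Theta_1(i)(1-P)\hat\Theta_2(i)P+P\hat\Theta_2(i)(1-P)\hat\Theta_1(i)P,
\]
and estimate its trace norm by inserting the lattice partition of unity $\sum_{\gamma\in\Gamma}\chi_\gamma$ between all factors. The decisive geometric point is that $\hat\Theta_1(i)$ is locally constant off an $O(1)$-neighbourhood of the line $\{\Re z=\tfrac12\}$ and $\hat\Theta_2(i)$ off $\{\Im z=\tfrac12\}$, so $\chi_\gamma[P,\hat\Theta_1(i)]\chi_{\gamma'}$ is negligible unless $\gamma$ or $\gamma'$ sits within $O(1)$ of that line or $\vert\gamma-\gamma'\vert$ is large, and combining the two commutators forces the non-decaying contributions to come from cells near the corner $(\tfrac12,\tfrac12)$, a bounded region; summing the resulting products of Hilbert--Schmidt norms against the polynomial weights $\langle\gamma\rangle^{\kappa}$ and the factors $e^{-\vert\gamma-\gamma'\vert^{\xi}}$ of \eqref{eq:HS_Bound} yields a convergent series bounded by $C(\kappa,\xi)K_P^2$ and, en passant, the trace-class property. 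This is precisely the bookkeeping of \cite[Lemma 3.1]{GKS}, with the $\pi_i$ harmlessly along for the ride, and it is the one step carrying genuine analytic content — the main obstacle being to make the corner confinement quantitative in the presence of the weights $\langle z\rangle^{\kappa}$, which is where the sub-exponential (not merely fast polynomial) decay in \eqref{eq:decay} must be used.

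For \textbf{(2)} I would telescope: $\Omega_i^{r,s}-\Omega_i$ is a sum of two terms, each of the form $\tr\big(P[[P,\pi_i\Lambda],[P,\hat\Theta_j(i)]]\big)$ with $\Lambda$ the indicator of a strip of bounded width whose direction is transverse to the switch line of $\hat\Theta_j(i)$; by the confinement in (1) this operator is trace class, and the standard covariance computation of \cite{BES94} (decomposition of the strip into boxes and a telescoping cancellation, using only commutativity of the switch functions, their boundedness, and the off-diagonal decay of $P$) shows the trace vanishes, giving \eqref{eq:covariance_rel}. For \textbf{(3)}, $P+Q$ is again an orthogonal projection satisfying \eqref{eq:decay} with $K_{P+Q}\lesssim K_P+K_Q$, so $\Omega_i(P+Q)$ is defined by (1); inserting $[P+Q,\hat\Theta_j(i)]=[P,\hat\Theta_j(i)]+[Q,\hat\Theta_j(i)]$ into \eqref{eq:commutator2} and expanding, every mixed term contains a factor $PQ$ or $QP$ and thus drops out under the trace (which is cyclic here since all operators involved are trace class), leaving $\Omega_i(P)+\Omega_i(Q)$.
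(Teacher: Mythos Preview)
Your proposal is correct and follows essentially the same route as the paper: both defer to \cite[Lemma 3.1]{GKS} for (1) and (2), using \eqref{eq:HS_Bound} together with the geometric confinement of the commutators $[P,\hat\Theta_j(i)]$ near the switch lines, and both handle (3) by expanding and killing cross-terms via $PQ=QP=0$. The only cosmetic difference is that for (3) the paper expands the form \eqref{eq:definition} using the identity $P[Q,\hat\Theta_j(i)]=-P\hat\Theta_j(i)Q$, whereas you expand the equivalent form \eqref{eq:commutator2}; in both cases one should be slightly careful that cyclicity of the trace is justified on the individual cross-terms (this is where the trace-class bound from (1) is implicitly reused), but the level of detail you give matches the paper's.
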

\begin{proof}
The first property follows readily from the combination of \eqref{eq:HS_Bound} with the argument for the full Chern number in \cite[Lemma 3.1 (i)]{GKS}. 
If $\Re(z_0)\Re(z_1)>0$ then $\chi_{z_0} [P,\Lambda_1]\chi_{z_1}=0$. Thus, we may restrict ourselves to $\Re(z_0)\Re(z_1)\le 0$, we have $2\vert \Re(z_0-z_1)\vert^{\xi} \ge \vert \Re(z_0)\vert^{\xi} + \vert \Re(z_1)\vert^{\xi}. $
The second property follows from a direct computation; see \cite[Lemma 3.1 (ii)]{GKS}. 

The last property, shown in \cite[Lemma $8$]{BES94} for Chern numbers, follows from $P[Q, \hat \Theta_i] = -P\hat \Theta_i Q$ and evaluating \eqref{eq:definition} since one finds for the cross-terms 
\[ \begin{split} 
 \tr\Big(-P \hat \Theta_1 Q\hat \Theta_2 +P \hat \Theta_1 Q \hat \Theta_2 P -Q \hat \Theta_1 P \hat \Theta_2 + Q \hat \Theta_1 P \hat \Theta_2 Q  \Big)=0.
 \end{split}\]
This implies that
\[\begin{split} \Omega_i(P+Q):&=\tr( [(P+Q)\hat \Theta_1(i) (P+Q) , (P+Q) \hat \Theta_2(i) (P+Q)])  \\
&=\Omega_i(P)+ \Omega_i(Q).
\end{split} \]
\end{proof}
We also want to mention reference \cite[Sec.6]{ASS} showing full details on how to obtain the second point.

The independence of switch functions $\hat \Theta^{(s)}_{j,i}$ in Prop. \ref{prop:auxiliary} implies that $\Omega_i$ is an almost surely constant quantity
\begin{equation}
\label{eq:constant}
 \Omega_i(P)=\mathbf E \Omega_i(P) \text{ for }\mathbf P\text{-almost surely}.
 \end{equation}

The purpose of the first and last point in Prop. \ref{prop:auxiliary} is to conclude that in regions of $\operatorname{SUDEC}$, cf. Definition \ref{defi:Sudec}, all $\Omega_i$ vanish.
\begin{prop}
\label{prop:SUDEC}
Let $H_{\lambda}$ exhibit $\operatorname{SUDEC}$ in an interval $J$, then for all closed $I \subset J$ we have 
\[ \Omega_i(\indic_{I}(H_{\lambda}))=0\text{ for }\mathbf P\text{-almost surely}.\] 
\end{prop}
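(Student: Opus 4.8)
The plan is to follow the scheme of Germinet--Klein--Schenker \cite[Lemma~3.1]{GKS}, but now for the sublattice quantities $\Omega_i$, by combining the additivity and the a~priori bound of Proposition~\ref{prop:auxiliary} with the fact that SUDEC forces the spectrum in $J$ to be pure point with a summable family of localization weights. Throughout I work on the probability-one event on which $H_\lambda$ exhibits SUDEC in $J$, so that the conclusion is the asserted almost-sure statement, and I fix a closed $I\subset J$ and set $P:=\indic_I(H_\lambda)$.

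First I would write $P$ as a countable sum of rank-one spectral projections. Since $\Spec(H_\lambda)\cap I$ is pure point, pick an orthonormal basis $(\varphi_n)_n$ of $\operatorname{ran}(P)$ consisting of eigenfunctions with eigenvalues in $I$, so that $P=\sum_n P_n$ with $P_n:=\varphi_n\otimes\varphi_n$, the series converging strongly. For $N\in\NN$ put $P_{\le N}:=\sum_{n=1}^N P_n$ and $P_{>N}:=P-P_{\le N}$; these are orthogonal projections with $P_{\le N}P_{>N}=P_{>N}P_{\le N}=0$. Next I would verify that both satisfy the decay estimate \eqref{eq:decay} with explicit constants: using $\|\chi_{z_0}P_n\chi_{z_1}\|_2=\|\chi_{z_0}(\varphi_n\otimes\varphi_n)\chi_{z_1}\|$ and the SUDEC kernel bound from Definition~\ref{defi:Sudec}, the triangle inequality in Hilbert--Schmidt norm gives
\[
\|\chi_{z_0}P_{>N}\chi_{z_1}\|_2\le\sum_{n>N}\|\chi_{z_0}P_n\chi_{z_1}\|_2\le C_{I,\zeta}\Big(\sum_{n>N}\beta_n\Big)\langle z_0\rangle^2\langle z_1\rangle^2 e^{-|z_0-z_1|^\zeta},
\]
so that \eqref{eq:decay} holds for $P_{>N}$ with $\kappa=2$, $\xi=\zeta$ and $K_{P_{>N}}=C_{I,\zeta}\sum_{n>N}\beta_n$; the case $N=0$ shows that $P$ itself satisfies \eqref{eq:decay} with $K_P=C_{I,\zeta}\sum_n\beta_n<\infty$ by \eqref{eq:SUDEC}, and $P_{\le N}$ trivially satisfies \eqref{eq:decay} being finite rank. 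In particular $\Omega_i(P)$, $\Omega_i(P_{\le N})$ and $\Omega_i(P_{>N})$ are all well defined via Proposition~\ref{prop:auxiliary}(1).

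Then I would conclude as follows. Because $P_{\le N}$ is finite rank, $P_{\le N}\hat\Theta_j(i)P_{\le N}$ is trace class, so by the identity \eqref{eq:commutator2} and the vanishing of the trace of a commutator of trace-class operators, $\Omega_i(P_{\le N})=0$ for every $N$. Additivity, Proposition~\ref{prop:auxiliary}(3), then gives $\Omega_i(P)=\Omega_i(P_{\le N})+\Omega_i(P_{>N})=\Omega_i(P_{>N})$ for every $N$, whence by Proposition~\ref{prop:auxiliary}(1)
\[
|\Omega_i(\indic_I(H_\lambda))|=|\Omega_i(P_{>N})|\lesssim_{\kappa,\xi}K_{P_{>N}}^2=C_{I,\zeta}^2\Big(\sum_{n>N}\beta_n\Big)^2\xrightarrow[N\to\infty]{}0,
\]
using $\sum_n\beta_n<\infty$. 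Hence $\Omega_i(\indic_I(H_\lambda))=0$, almost surely.

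The only slightly delicate point is the second step: one must make sure the summation of the SUDEC kernel bounds over $n>N$ reproduces an estimate of exactly the form \eqref{eq:decay} (with the correct power $\kappa=2$ and exponent $\xi=\zeta$) and constant proportional to $\sum_{n>N}\beta_n$, and that $\Omega_i$ of the infinite-rank pieces $P$ and $P_{>N}$ is genuinely well defined rather than merely formal. Both are immediate from Definition~\ref{defi:Sudec} and Proposition~\ref{prop:auxiliary}(1), so no new idea beyond the GKS scheme is needed; the structural novelty is purely that everything is carried out for the sublattice conductances $\Omega_i$ in place of the full Hall conductance $\Omega$.
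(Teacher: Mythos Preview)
Your proof is correct and follows essentially the same approach as the paper's: decompose $P$ into rank-one eigenprojections, use additivity (Proposition~\ref{prop:auxiliary}(3)) to split $\Omega_i(P)=\Omega_i(P_{\le N})+\Omega_i(P_{>N})$, observe $\Omega_i(P_{\le N})=0$ by finite rank, and let the tail vanish via Proposition~\ref{prop:auxiliary}(1) with the SUDEC-controlled constant $K_{P_{>N}}\to 0$. Your write-up is in fact slightly more explicit than the paper's in verifying that $P_{>N}$ satisfies \eqref{eq:decay} with constant proportional to $\sum_{n>N}\beta_n$.
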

\begin{proof}
Let $M \subset \mathbb N$ be a (finite or infinite) enumeration (counting multiplicities) of all point spectrum of $H_{\lambda}$. We can then write the spectral projection as
\[ \indic_{I}(H_{\lambda}) = \sum_{m \in M} P_{m}\]
where $P_m$ are rank one projections. In addition, we have $K_P:=\sum_{m\in M} \alpha_{m}$ where $\alpha_{m}$ are defined in \eqref{eq:SUDEC}.
Using the third item in Prop. \ref{prop:auxiliary} we then have for any $\{1,...,N\} \subset M$
\[\Omega_i(\indic_{I}(H_{\lambda})) =\sum_{m =1}^N \underbrace{\Omega_i(P_{m})}_{=0} +\Omega_i\Bigg(\sum_{m \in M\setminus \{1,..,N\}} P_{m} \Bigg)= \Omega_i\Bigg(\sum_{m \in M\setminus \{1,..,N\}} P_{m} \Bigg). \]
By the first item in Prop. \ref{prop:auxiliary}, we find that as we let $N$ go to infinity or when it is equal to $\vert M \vert$, if $M$ is finite, that we obtain $\Omega_i(\sum_{m \in M\setminus \{1,..,N\}} P_{m} ) \to 0.$
\end{proof}

\smallsection{Acknowledgements:} We thank Jie Wang for suggesting the relevance of different disorder types for TBG and Maciej Zworski for initial discussions on the project. M. Vogel was partially funded by the Agence Nationale de la Recherche, through the project ADYCT (ANR-20-CE40-0017). I. Oltman was jointly funded by the National Science Foundation Graduate Research Fellowship under grant DGE-1650114 and by grant DMS-1901462.

\smallsection{Data availability:}  No new data were created or analysed in this study.
\end{appendix}

\end{document}